\def\ecc{{\cal E}}
\def\Bbb{\mathbb}
\def\CH{{\rm{CH}}}
\def\PH{{\rm{PH}}}
\def\defit{\bf}
\def\interior{{\rm int}}
\def\diam{{\rm{diam}}}
\def\Bbb{\mathbb}
\def\reals{\Bbb R}
\def\disk{\Bbb D}
\def\circle{\Bbb T}
\def\complex{\Bbb C}
\def\integers{\Bbb Z}
\def\cal{\mathcal}
\theoremstyle{plain}                    
\newtheorem{thm}{Theorem}[section]
\newtheorem{cor}[thm]{Corollary}
\newtheorem{lemma}[thm]{Lemma}
\newcounter{ques}
\numberwithin{equation}{section}
\begin{document}
\baselineskip=18pt


%

\title [Quadrilateral meshes for  PSLGs]
          {Quadrilateral meshes for PSLGs}

\subjclass{Primary: 68U05  Secondary: 52B55, 68Q25 }
\keywords{
      quadrilateral meshes, sinks, polynomial time, nonobtuse
      triangulation, dissections, conforming meshes, optimal 
      angle bounds}
\author {Christopher J. Bishop}
\address{C.J. Bishop\\
         Mathematics Department\\
         SUNY at Stony Brook \\
         Stony Brook, NY 11794-3651}
\email {bishop@math.sunysb.edu}
\thanks{The  author is partially supported by NSF Grant DMS 13-05233.
       }

\date{January 2011; revised December 2014; revised November 2015;}

\maketitle


\begin{abstract}
We prove that every planar straight line graph with $n$ 
vertices  has a conforming 
quadrilateral mesh with $O(n^2)$ elements,   all angles $\leq 120^\circ$ 
and all new angles $\geq 60^\circ$.
Both the complexity and the angle bounds are sharp.
\end{abstract}

\clearpage


\setcounter{page}{1}
\renewcommand{\thepage}{\arabic{page}}
\section{Introduction} \label{Intro} 

The purpose of this paper is to prove:

\begin{thm} \label{Quad Mesh} 
Suppose  $\Gamma$ is a   planar straight line 
graph  with $n$ vertices.  Then $\Gamma$ 
 has a  conforming  quadrilateral mesh   with 
$O(n^2) $ elements, all angles $\leq 120^\circ$ and 
all new  angles $ \geq 60^\circ$.
\end{thm}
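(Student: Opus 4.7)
My plan is to reduce the quadrilateral meshing problem to a nonobtuse triangulation, which the keyword list suggests is available. I would invoke the result (from the author's earlier work) that any PSLG with $n$ vertices admits a conforming triangulation $\mathcal{T}$ with $O(n^2)$ elements, all of whose angles are at most $90^\circ$. Since $90^\circ < 120^\circ$, the upper angle bound is then automatically satisfied at every vertex of $\mathcal{T}$, and in particular at the original vertices of $\Gamma$, where no further control is available (tiny angles of $\Gamma$ may already be present).

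To pass from $\mathcal{T}$ to a conforming quadrilateral mesh, I would use the familiar \emph{midpoint split}: within each triangle $T$, insert the midpoints of the three sides together with a single interior point $p_T$, then connect $p_T$ to each midpoint to cut $T$ into three quadrilaterals. Conformity is automatic because midpoints are shared between adjacent triangles, and the total element count remains $O(n^2)$. The angles of $\mathcal{T}$ at its original vertices are preserved verbatim, so only the \emph{new} angles at the midpoints and at the points $p_T$ need to be controlled. At a midpoint $M$, the two quadrilaterals of $T$ incident to $M$ have angles summing to $180^\circ$, so both lie in $[60^\circ, 120^\circ]$ if and only if the segment $M p_T$ makes an angle in $[60^\circ, 120^\circ]$ with the edge of $T$ through $M$. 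At $p_T$, the three quadrilateral angles sum to $360^\circ$, so each lies in $[60^\circ, 120^\circ]$ if and only if $p_T$ lies in a certain ``central'' region of $T$ determined by the three midpoints.

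The central geometric lemma I would prove is that inside every nonobtuse triangle one can place $p_T$ meeting all six of these constraints simultaneously. For a strictly acute triangle the incenter (or circumcenter) is a natural candidate, since the segment $M p_T$ is close to perpendicular to each side. The main obstacle, and the place I expect real work, is the right (or nearly right) triangle: there the circumcenter degenerates to the midpoint of the hypotenuse and the obvious construction fails. I would handle this by preprocessing $\mathcal{T}$, subdividing any right or near-right triangle into a bounded number of smaller pieces with enough angular slack before applying the midpoint split; this keeps the element count $O(n^2)$. The sharpness of the count $O(n^2)$ and of the angle bounds $120^\circ$ and $60^\circ$ would then be argued separately by exhibiting PSLGs with exponentially narrow angular features, where fewer elements or tighter bounds would be impossible.
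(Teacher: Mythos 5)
Your proposal collapses on the lower angle bound. The midpoint split preserves the triangle angle at each original corner of $T$ verbatim (you observe this yourself), so every angle of the triangulation $\mathcal{T}$ reappears as an angle of the quadrilateral mesh. A nonobtuse triangulation controls angles from above but gives no control from below: Steiner vertices in a nonobtuse triangulation routinely have very small angles (e.g.\ a vertex of degree five or more in the interior of a face forces some angle well below $60^\circ$, and thin right triangles with angles near $0^\circ$ are unavoidable in general). Those small angles are \emph{new} — they do not come from two edges of $\Gamma$ meeting at a small angle — and they are passed through unchanged by your construction, so the requirement that all new angles be $\geq 60^\circ$ fails. Worse, there is no way to repair this by demanding a better triangulation: a triangle with all angles in $[60^\circ,90^\circ]$ must have all three angles exactly $60^\circ$, so any triangulation-plus-midpoint-split strategy that preserves the triangle angles is categorically incapable of producing a mesh with new angles $\geq 60^\circ$ on a general PSLG.

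What your plan does essentially recover is the Bern--Eppstein result \cite{BE-2000}, which obtains the $\leq 120^\circ$ upper bound (for simple polygons) from a nonobtuse triangulation, precisely because only the upper bound survives the midpoint split. The lower bound is the hard part of the theorem and is why the paper takes an entirely different route: it first reduces to faces that are simple polygons, surrounds each vertex of $\Gamma$ with a ``protecting sink'' built from annular octagon/cyclic-polygon layers (Lemma \ref{sector conform sink}) so that small PSLG angles are sealed off and never subdivided, then applies the thick/thin quad-meshing result for simple polygons from \cite{Bishop-optimal} on the unprotected region (which already gives angles in $[60^\circ,120^\circ]$ there), handles the high-eccentricity trapezoid chains via the isosceles-trapezoid dissection theorem from \cite{Bishop-nonobtuse} (Theorem \ref{quad mesh lemma}), and finally uses quadrilateral ``sinks'' to make the per-face meshes conform across edges of $\Gamma$. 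None of that machinery is a refinement of the triangulate-then-split idea; the lower angle bound forces the quadrilaterals to be built directly. Your proposal would need a fundamentally new mechanism for the $\geq 60^\circ$ bound before the rest of the plan could be salvaged. (As a secondary point, I would also double-check the $O(n^2)$ complexity claim for nonobtuse triangulations of PSLGs; the known bound is worse in general.)
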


The precise definitions of all the terminology will be given 
in the next few sections, but briefly, this means that
each face of $\Gamma$ (each bounded complementary component)
can be meshed with quadrilaterals 
 so that the meshes are consistent across the
edges of $\Gamma$ and all angles are in the interval $[60^\circ, 120^\circ]$
except when forced to be smaller by two edges of $\Gamma$
that meet at an angle $< 60^\circ$.

Bern and Eppstein showed in \cite{BE-2000} that 
any simple polygon $P$ has a linear 
sized  quadrilateral mesh 
with all angles $\leq 120^\circ$.  They also used 
Euler's formula to  prove 
that any quadrilateral mesh of a regular hexagon must 
contain an angle of measure   $\geq 120^\circ$. Thus 
the upper angle bound is sharp.
Moreover, if a polygon contains 
an angle of measure $  120^\circ + \epsilon$, $\epsilon >0$,
 then this angle  must be 
subdivided in the mesh (in order to achieve the upper bound),
 giving at least one new angle $  \leq  60^\circ
+ \epsilon/2$. Thus the $60^\circ$  lower bound is also sharp.
 In \cite{Bishop-optimal} I showed that every simple polygon has a
linear sized  quadrilateral 
mesh with all  angles $\leq 120^\circ$
 and all new angles  $ \geq 60^\circ$.
Theorem \ref{Quad Mesh}  extends this result to planar straight
line graphs (PSLGs).
The complexity bound increases from $O(n)$ to $O(n^2)$, but this 
is necessary: 
 see Figure \ref{Gamma3}.

\begin{figure}[htb]
\centerline{
\includegraphics[height=2.5in]{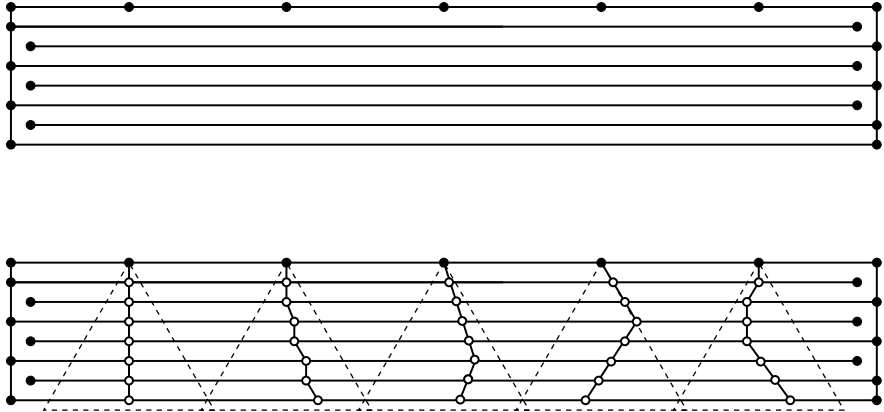}
 }
\caption{\label{Gamma3}
Consider a vertex $v$ along the top edge of 
the illustrated PSLG.
Any mesh of this PSLG with all angles
$ \leq 120^\circ$ must insert a new edge at $v$, creating
a new vertex (white) on the edge below.  This repeats
until we have a  path that  reaches the bottom edge.
If there are $n+1$ horizontal lines and we
place $n$  widely spaced vertices
on the top edge, then at least $n^2$  mesh vertices must be created.
A similar argument works if $120^\circ$ is replaced
by any bound $ < 180^\circ$.
}
\end{figure}

To save space, we  will say that a quadrilateral 
is  {\defit $\theta$-nice}  if all four interior angles are between 
$90^\circ - \theta $ and $90^\circ + \theta$ (inclusive). 
When $\theta=30^\circ$ we shorten this to saying
 the quadrilateral is {\defit nice}. 
In this paper, we mostly deal with convex quadrilaterals, 
so we will always take $0 < \theta < 90^\circ$ in this definition.
A quadrilateral mesh of a simple 
polygon is nice if all the quadrilaterals are nice.
A conforming  quadrilateral mesh 
of a PSLG will be called  nice if all the angles are 
between $60^\circ$ and  $120^\circ$, except for 
smaller angles forced by angles in the PSLG. Thus Theorem 
\ref{Quad Mesh} says that every PSLG with $n$ vertices 
 has a nice conforming quadrilateral mesh with at most $O(n^2)$ elements.

The quad-meshing result for simple polygons given 
in \cite{Bishop-optimal} is one of the main ingredients
in the proof of Theorem \ref{Quad Mesh}. We will start 
by adding vertices and  edges to the PSLG so that all of the faces 
become simple polygons. We then quad-mesh a small neighborhood
of each vertex ``by hand'' using a construction we call 
a protecting sink (Lemma \ref{sector conform sink}); 
the mesh elements near the vertex
will never be changed at later steps of the construction. 
The ``unprotected'' region is divided into  simple polygons 
with all interior angles $\geq 90^\circ$.
 We then apply the result of 
\cite{Bishop-optimal} to give a nice quad-mesh of each 
of these simple polygons.
However, these meshes might not be consistent 
across the edges of the PSLG. If the mesh elements have
bounded eccentricity (the eccentricity $\ecc(Q)$ 
 of a quadrilateral  $Q$ 
is the length of the longest side divided by the length 
of the shortest side), then we can use a device called 
``sinks'' (described below) to merge the meshes of different
faces into a mesh of the whole PSLG. 

 We say that a simple polygon $P$ is a
{\defit sink} if whenever we add an even number of vertices to the 
edges of $P$ to form a new polygon $P'$, then the 
interior of  $P'$ has a nice
quadrilateral  mesh   so that 
 the only mesh vertices on $P$ 
are the vertices of $P'$ (we say such a mesh extends $P'$).  
It is not obvious that sinks exist, but we shall 
show (Lemma \ref{quad sinks lemma 1})  that any nice quadrilateral  $Q$ 
can be made into a sink $P$ by adding 
$N=O(\ecc(Q))$ vertices to the edges of $Q$.
If we add $M$ extra points to the boundary of a 
sink, then nicely re-meshing  the sink to account 
for the new vertices will use $O(N M^2)$
quadrilaterals in general, but only $O(N M)$ 
quadrilaterals in the  important special case when 
only add the extra points to a single side of 
the quadrilateral (or to a single  pair of opposite sides). 

The name sink comes from a sink in a directed graph, i.e., 
a vertex with zero out-degree. Paths that enter 
a sink can't leave. In our construction, 
points will be propagated through a  quad-mesh (this will 
be precisely defined in Section \ref{propagation})
and propagation paths continue
until they  hit the boundary of the mesh or
until they 
run ``head on'' into another propagation path.
Sinks allow us to force the latter to happen. When 
an even number of propagation paths hit the boundary 
of a sink, we can re-mesh the interior of the sink 
so that  these paths hits vertices of the mesh
 and terminate. Thus sinks ``absorb'' propagation 
paths. Since our complexity bounds depend on 
terminating propagation paths quickly, sinks are 
a big help.

Sinks  can also be used to merge two or more 
quadrilateral meshes  that are defined 
on disjoint regions that have 
overlapping boundaries.
As a simple example of how this works, consider 
a PSLG $\Gamma$ which has several faces, $\{\Omega_k\}$,
so that $\Omega_k$ can be nicely meshed using $N_k$  quadrilaterals
 with maximum eccentricity $M < \infty$.  
Use Lemma \ref{quad sinks lemma 1} to add $O(M)$ 
vertices to the sides of each quadrilateral 
in every face of $\Gamma$, in order to make every 
quadrilateral into a sink. This 
requires $O(M\sum_k N_k)$ new vertices. Every quadrilateral 
is now a sink with at most $O(M)$
 extra vertices on its boundary
(due to the sink vertices added to its neighbors). 
We make sure that the number of extra vertices for 
each quadrilateral is even by cutting every edge in half
and adding the midpoints. By the definition of sink
we can now nicely re-mesh every quadrilateral consistently 
with all its neighbors, obtaining a mesh of $\Gamma$, i.e., 
assuming Lemma \ref{quad sinks lemma 1}, we have 
proven:

\begin{lemma} \label{simple merge}
Suppose $\Gamma$ is a PSLG,  and that  every face  of $\Gamma$ 
is a simple polygon with  a nice 
  quadrilateral mesh. Suppose   a total of $N$ elements
are used in these meshes, and every quadrilateral
has  eccentricity bounded by $M$. Then $\Gamma$ 
has a nice mesh using $O(NM^2)$ quadrilaterals.
\end{lemma}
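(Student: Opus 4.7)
The plan is to execute the merging strategy sketched informally in the paragraph preceding the lemma, making precise the parity and consistency details that make it work. The input is a PSLG $\Gamma$ whose faces are already equipped with nice quadrilateral meshes totalling $N$ quadrilaterals of eccentricity at most $M$; the goal is a single conforming nice mesh of $\Gamma$.

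First I would iterate through every quadrilateral $Q$ appearing in any face mesh and invoke Lemma \ref{quad sinks lemma 1} to add $O(\ecc(Q))=O(M)$ vertices on the sides of $Q$, converting $Q$ into (the boundary of) a sink $P_Q$. Summing over the $N$ quadrilaterals, this introduces $O(NM)$ new boundary vertices in total. At this point each $P_Q$ sees, beyond its own sink vertices, a collection of \emph{extra} vertices on its boundary coming from the sink-conversions performed on its neighbors across shared edges (or from subdivisions of edges that lie on $\Gamma$). Since $Q$ has only four sides and each neighboring quadrilateral contributes $O(M)$ sink vertices to the edge it shares with $Q$, each sink $P_Q$ carries only $O(M)$ extras.

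Next I would enforce the parity that the sink definition requires by bisecting every boundary segment and inserting the midpoint, as the author suggests. This uniform bisection is consistent across adjacent sinks (they see the same midpoint on a shared edge), does not alter the $O(NM)$ bound on boundary vertices, and guarantees that each $P_Q$ has an even number of extras. Applying the sink property (Lemma \ref{quad sinks lemma 1}) independently to each $P_Q$ then produces a nice quadrilateral mesh of $\interior(P_Q)$ whose only mesh vertices on $\partial P_Q$ are the sink vertices of $P_Q$ together with the extras. Using the general per-sink remeshing estimate quoted in the introduction and summing over the $N$ sinks yields the claimed $O(NM^2)$ total.

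The main step that needs care is the bookkeeping around shared edges: I must check that every sink vertex one sink places on a shared edge coincides geometrically with an extra seen by the sink on the other side, so that when Lemma \ref{quad sinks lemma 1} is applied independently on each $P_Q$ the resulting meshes actually agree along $\partial P_Q$ and glue into a conforming mesh of $\Gamma$. This is automatic from the construction, because the sink vertices on a given side of $Q$ are determined by the data of that side, which both $Q$ and its neighbor share; hence each sink vertex placed by one side is received verbatim as an extra by the other. Once this consistency is verified, the sink property and the previously established angle bounds do the rest, producing a nice conforming quadrilateral mesh of $\Gamma$ of the required size.
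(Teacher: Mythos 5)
Your proposal reproduces the paper's own argument for this lemma essentially verbatim: the paper's "proof" is precisely the paragraph preceding the statement, which adds $O(M)$ sink vertices to each of the $N$ quadrilaterals via Lemma~\ref{quad sinks lemma 1}, bisects boundary edges to force evenness, and then re-meshes each sink independently, using the fact that the sink vertices placed on a shared side are seen identically from both neighbors. Your treatment of the parity fix and the cross-edge consistency check is the same as the paper's, just spelled out a bit more carefully, so the approaches coincide.

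One remark worth keeping in mind (it applies equally to the paper's sketch): the step "each $P_Q$ carries only $O(M)$ extras" is automatic for neighbors within a single face, where the weak meshes are conforming and each side of $Q$ meets exactly one other quadrilateral; but for a side of $Q$ lying on $\Gamma$ itself, the mesh on the opposite face need not align with that side, and in principle many quadrilaterals from the other face could abut it, contributing more than $O(M)$ extras to that one side. The paper does not address this either and treats the lemma as an illustrative warm-up rather than an ingredient of Theorem~\ref{Quad Mesh}, so your proposal is consistent with the intended level of rigor, but if you wanted an airtight version you would need either an added hypothesis on the cross-$\Gamma$ adjacency counts or a sharper accounting using the $O(N+M_1M_2)$ form of the re-meshing bound.
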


Unfortunately,  the  proof of Theorem \ref{Quad Mesh} 
is not quite as simple as this. When we use the 
result  for quad-meshing a simple polygon 
from \cite{Bishop-optimal}, the method  will sometimes  produce 
quadrilaterals with  very large eccentricity, so the
merging argument above does not give a uniform bound.
However, the proof in \cite{Bishop-optimal} 
shows that these high eccentricity quadrilaterals
 have very special shapes 
and structure that allow us to  use a result from 
\cite{Bishop-nonobtuse} to   nicely quad-mesh the union 
of these pieces.  We then use sinks to merge this 
mesh with a nice mesh on the union of the low eccentricity pieces.
Thus the proof of Theorem \ref{Quad Mesh} rests mainly on 
four ideas: 
\newline
(1) adding edges to $\Gamma$ to reduce to the case when 
 every face of $\Gamma$ is a simple polygon, 
\newline
(2)  the linear quad-meshing algorithm
 for simple polygons from \cite{Bishop-optimal}, 
\newline
(3)  a quad-meshing result from \cite{Bishop-nonobtuse}
 for   regions with special dissections, and 
\newline
(4) the construction of  sinks  (this
takes up the bulk of the current paper).

Section \ref{PSLG defn} will review the definitions
of meshes and dissections and record some 
basic facts.
In Section \ref{connecting} we 
show how to reduce Theorem \ref{Quad Mesh}
to the case when the PSLG is 
connected and  every face  is a simple polygon. 
In Section \ref{propagation} we discuss some properties 
of quadrilateral meshes and, in particular, the 
idea of propagating a point through a quadrilateral 
mesh. 
Section \ref{sink defn} gives the definition of 
a sink and states various results about sinks that 
are proven in Sections
 \ref{inscribed polygon}-\ref{conforming sinks}. 
Section \ref{dissection} defines a dissection
by nice isosceles trapezoids and quotes a result from 
\cite{Bishop-nonobtuse} that a domain with such 
a dissection has a nice quadrilateral mesh. 
Section \ref{Thick Thin sec} will review the thick/thin 
decomposition of a simple polygon and quote 
the precise result from \cite{Bishop-optimal} 
that we will need. In particular, we will 
see that the union of ``high eccentricity'' 
quadrilaterals produced by the algorithm in 
\cite{Bishop-optimal} has the kind of dissection 
needed to apply the result from \cite{Bishop-nonobtuse}. 
In Section \ref{Proof} we will give the proof 
of Theorem \ref{Quad Mesh} using all the tools 
assembled earlier.
Actually, we will prove 
a  slightly stronger version of Theorem \ref{Quad Mesh}:
for any $\theta>0$, we can construct  a nice 
conforming mesh that  has $O(n^2/\theta^2)$ elements, and 
all but $O(n/\theta^2)$ of them are $\theta$-nice.
Thus when $\theta$ is small, 
``most'' pieces are close to rectangles.

I thank Joe Mitchell and Estie Arkin for numerous helpful 
conversations about computational geometry in general, and
about the results of this paper in particular. Also thanks to two 
anonymous referees whose thoughtful remarks and suggestions 
on two versions of the paper  greatly
improved the precision and  clarity of the  exposition.
A proof suggested by one of the referees is included in 
Section \ref{rectangular sinks}.

\section{Planar straight line graphs}  \label{PSLG defn} 
A  {\defit planar straight line graph} $\Gamma$
 (or {\defit PSLG} from now on) is a compact subset of the plane
$\reals^2$,  together with a finite set $V(\Gamma)  \subset \Gamma$ (called
the vertices of $\Gamma$)  
such that  $ E= \Gamma \setminus V$   is  a 
finite  union of disjoint, bounded,
open line segments (called the edges of $\Gamma$). 
Throughout the paper we will let $n = n(\Gamma) $ denote the 
number of vertices of $\Gamma$ and $m = m(\Gamma)$ the 
number of edges. 
The vertex set $V$  includes  both endpoints of  every 
 edge, and may include other points as well (i.e., 
isolated points of the PSLG). 
Note that  the vertex set of a PSLG is not 
uniquely determined. However, every PSLG has a minimal 
vertex set and every other vertex set is obtained
from this one by adding extra points along the edges.
It would be reasonable to define a PSLG as the pair 
$\Gamma=(V, E)$ and to let $|\Gamma|$ denote the compact planar 
set which is the union of these sets, but I have chosen 
to let $\Gamma$ denote this set, since I will most often 
be treating a PSLG as a planar set, rather than as 
a combinatorial object.


We let $\CH (\Gamma)$ denote the {\defit closed convex hull} of 
$\Gamma$, i.e., the intersection of all closed half-planes 
containing $\Gamma$. Let   $\partial\CH(\Gamma)$
denote the boundary of 
the convex hull and let  $\interior(\CH(\Gamma)) 
= \CH(\Gamma) \setminus \partial \CH(\Gamma)$  be the 
interior of the  convex hull. We say $\Gamma$ is non-degenerate
if  $\interior(\CH(\Gamma))$ is non-empty, i.e., $\Gamma$ is not contained
in a single line. 

A {\defit face} of $\Gamma $ is any of the bounded, open 
 connected components of $ \reals^2 \setminus \Gamma$.
Every PSLG  has a unique unbounded  complementary component
that we sometimes call the {\defit unbounded face}, 
but a PSLG  may or may not have faces.
We will say that a bounded, connected open set $\Omega$ is 
a {\defit polygonal domain} if it is the face of some PSLG 
(informally,   $\partial \Omega$ is a finite union of 
points and line segments). 

The {\defit polynomial hull} of a PSLG $\Gamma$ is the 
compact planar set that is the union 
of $\Gamma$ and all of its bounded faces. This 
set  is denoted 
$\PH(\Gamma)$. The name comes from complex analysis,
where the polynomial hull of a compact set $K$ 
is defined as 
$$ \PH(K) = \{ z \in \complex: |p(z)| \leq 
\sup_{w \in K} |p(w)| \text{ for all polynomials } p \}.$$
This agrees with our definition in the case
$K$ is a  PSLG. See Figure \ref{Hulls}.

\begin{figure}[htbp]
\centerline{ 
	\includegraphics[height=1.4in]{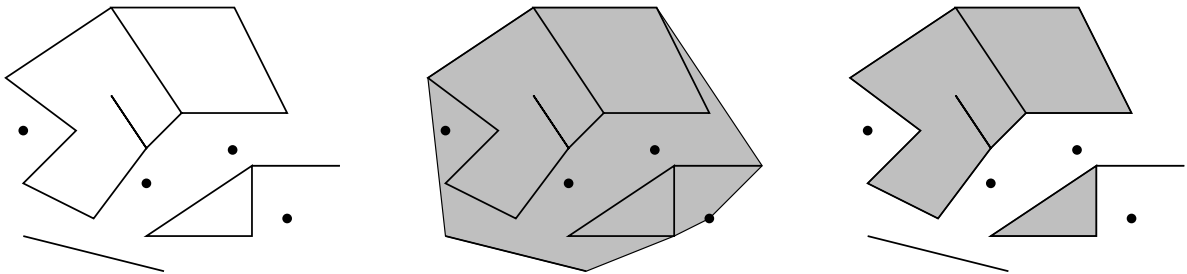}
}
\caption{ \label{Hulls}
A PSLG, its convex hull and its polynomial hull.  }
\end{figure}

A  {\defit polygon} or {\defit polygonal curve}
 is a sequence of vertices  $z_1, \dots z_n$ and 
 open  edges  $(z_1, z_2), \dots, (z_n, z_1)$.
A {\defit polygonal path} or {\defit arc} is a similar 
list of vertices, but with edges $(z_1, z_2), \dots 
(z_{n-1}, z_n)$; the last is not connected back to the first.
 A polygon 
 is {\defit simple} if the vertices are all distinct and the edges 
are pairwise disjoint.  A polygon is called {\defit 
edge-simple} if the (open)  edges are all pairwise disjoint, 
but vertices may be repeated.

 The Jordan curve theorem implies that 
a simple closed polygon  has two distinct complementary  connected 
components, exactly one of which is bounded.   This is 
the  interior (or face) of the simple polygon. 
A domain (i.e., an open, connected set)  that is 
the interior of some simple polygon
will be called a {\defit simple  polygonal domain}.
If $\Omega$ is multiply connected, but every  connected 
component of $\partial \Omega$ is a simple polygon, we 
say $\Omega$ is a {\defit simple polygonal domain with holes}.
See Figure \ref{DomainDefns} for some examples.

\begin{figure}[htb]
\centerline{
\includegraphics[height=3.0in]{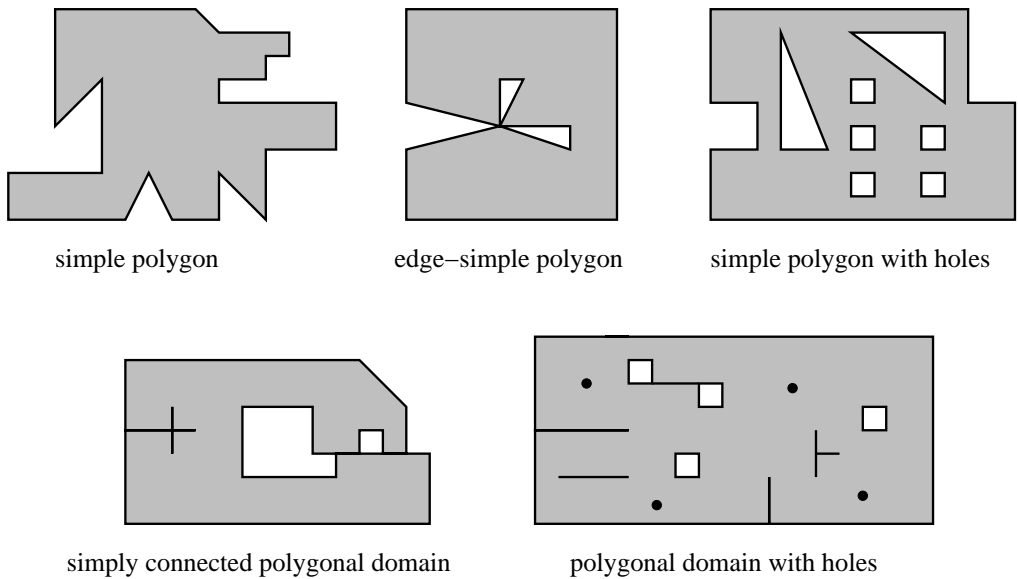}
 }
\caption{\label{DomainDefns}
Examples of the different types of polygonal domains.
}
\end{figure}

A {\defit triangle} is a simple polygon with three vertices 
(hence three edges). We say a simple polygon $P$  has 
a {\defit triangular shape} if  there is a triangle $T$ 
so that $P$ is obtained by adding vertices to the 
edges of $T$. See Figure \ref{Shapes}.
Similarly, a {\defit quadrilateral} $Q$ is a simple polygon with 
four vertices. We say a simple polygon $P$ has a {\defit 
quadrilateral shape} (or is {\defit quad-shaped}), if  $P$ is obtained 
by extra  adding vertices to the edges of a quadrilateral $Q$. 
 The  four 
vertices of $Q$ will be called the {\defit corners} of $P$ 
(they are the only vertices of $P$ where 
the interior angle is not $180^\circ$).  The other 
vertices of $P$ will be called {\defit interior edge vertices}.

\begin{figure}[htbp]
\centerline{
	\includegraphics[height=1.0in]{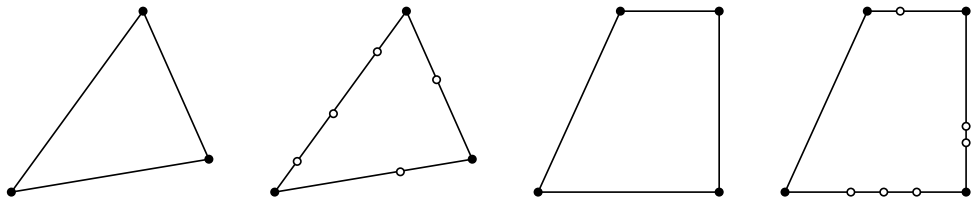}
}
\caption{ \label{Shapes}
A triangle, a triangular shaped octagon, a 
quadrilateral and a quad-shaped decagon. 
The black dots are the corners and the white 
dots are the interior edge vertices.
}
\end{figure}


A {\defit refinement} (also called a {\defit sub-division}) 
 of a PSLG  $\Gamma$ is a PSLG $\Gamma' $
so that $V(\Gamma) \subset V(\Gamma')$ and $\Gamma \subset \Gamma'$.
Informally, $\Gamma'$ is obtained from $\Gamma$ by 
adding new vertices and edges and by subdividing 
existing edges. 
A {\defit mesh} of $\Gamma$ is a
sub-division  $\Gamma'$  of $\Gamma$  such that 
$\Gamma' \subset \PH(\Gamma)$ and 
every face of $\Gamma'$ is a simple polygonal domain.
   Note that we allow the addition of 
new vertices (called {\defit Steiner points}) when we mesh a PSLG.

A mesh $\Gamma'$ is called a {\defit triangulation} if every face
of $\Gamma'$  is a triangle
 and is called a  {\defit quadrilateral mesh}  or 
{\defit quad-mesh} if every face is a quadrilateral.
We will only consider meshes by convex quadrilaterals
in this paper.
It is always possible to triangulate a PSLG without adding 
Steiner points, but this is not the case for quadrilateral meshes.
Sometimes we wish the mesh of a PSLG to cover the convex 
hull of the PSLG. In this case, we should add the boundary 
of the convex hull to the PSLG and mesh this new PSLG.

 A  {\defit  quadrilateral dissection} 
is a mesh in which every face is  a quad-shaped
polygon. (Similarly, a triangular dissection is 
a mesh where every face is triangular shaped, 
but we won't use these in this paper.)
More informally, a quadrilateral 
dissection is like a quadrilateral mesh, except 
that quadrilaterals whose boundaries intersect,  do not have 
to intersect at just points or full edges; 
two edges can overlap without being equal and 
the corner of one piece can be an interior edge 
vertex  of another piece.
A vertex where this happens is called 
a {\defit non-conforming vertex}.
 A    quadrilateral dissection is also 
called a 
{\defit non-conforming quadrilateral mesh}.
See Figure \ref{MeshDefns}. 

\begin{figure}[htb]
\centerline{
\includegraphics[height=3.5in]{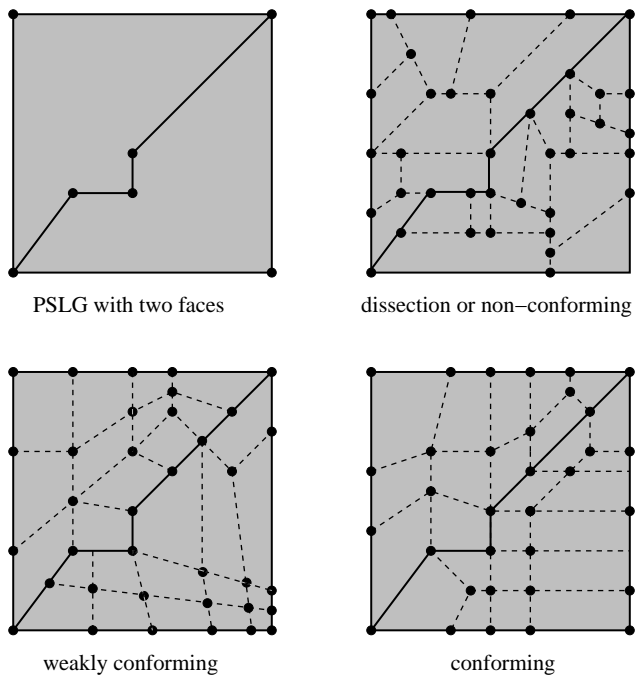}
 }
\caption{\label{MeshDefns}
On the upper left is a  PSLG with two simple polygonal 
faces, 
followed by a quadrilateral dissection, a weakly  conforming
quadrilateral mesh 
and a fully conforming quadrilateral  mesh.
}
\end{figure}

If every face of $\Gamma$ is a simple polygon, 
then a {\defit weak quadrilateral mesh}  of $\Gamma$ is a 
 quadrilateral mesh of each face, 
without the requirement that the meshes match up across 
the edges of $\Gamma$.  One of the main goals of this 
paper is to give a method of converting a weak mesh of 
a PSLG into a true mesh of similar size.

\section{Connecting $\Gamma$ with no small angles}
 \label{connecting}
We reduce Theorem \ref{Quad Mesh}
to the case when every face of $\Gamma$ is a 
simple polygon.

\begin{lemma} \label{simple poly lemma}
If $\Gamma$ is a PSLG with $n$ vertices such 
that $\PH(\Gamma)$ is connected, then 
by adding at most $O(n)$ new edges and 
vertices,  we can find a    connected refinement $\Gamma'$   
of $\Gamma$ so 
that every face of $\Gamma'$ is a simple polygon and
any angles  less than $60^\circ$ 
were already  angles in a face of $\Gamma$.
\end{lemma}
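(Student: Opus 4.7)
My plan is to process each bounded face $F$ of $\Gamma$ individually and surgically insert edges (with Steiner points as needed) into the interior of $F$ so as to eliminate the three obstructions that prevent $F$ from being a simple polygon: isolated vertices of $\Gamma$ lying in the interior of $F$, slit edges (edges bordering $F$ on both sides), and multiple connected components of $\partial F$ (``holes'' in $F$). For each such defect I would add only $O(1)$ new vertices and edges, and since the total number of defects, summed over all faces, is bounded by the edge--face incidence count $O(m)=O(n)$, this yields the desired $O(n)$ bound on the overall augmentation. The hypothesis that $\PH(\Gamma)$ is connected ensures that all of the repair work can be carried out inside bounded faces, rather than having to route through the unbounded face.

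The crucial invariant I would preserve throughout is that every new edge meets $\Gamma$ in one of the following two ways: either at an interior point of an existing edge at an angle $\beta \in [60^\circ,120^\circ]$, so that both freshly created angles $\beta$ and $180^\circ-\beta$ are at least $60^\circ$; or inside a sector of opening at least $120^\circ$ at an existing vertex $v$, placed so that both sub-sectors have opening at least $60^\circ$. Preexisting angles of $\Gamma$ that are already smaller than $60^\circ$ are simply left untouched, so they persist in $\Gamma'$ without any new small angles being introduced. At an isolated vertex $v$ there are no prior angles, so any configuration of between two and six edges with pairwise angles at $v$ of at least $60^\circ$ is admissible; I would emit from each isolated $v$ a small number of edges reaching $\partial F$. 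For slit edges I would attach a short segment perpendicular to the slit, giving the optimal angle $90^\circ$ on the slit side. For holes I would add polygonal bridges connecting each inner boundary component to another component of $\partial F$.

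The main technical obstacle is showing that the \emph{far} endpoint of each bridging segment can also be arranged to lie in the angle range $[60^\circ,120^\circ]$, even when the target edge is oriented unfavorably relative to the source edge or vertex. I would handle this by a continuity and flexibility argument: as one slides the far endpoint along the target edge (or rotates the emission direction at an isolated vertex), the intersection angle varies continuously, so one can often realize any desired target angle directly. In the borderline case in which no straight bridge simultaneously meets both endpoints at admissible angles, I would bend the bridge once by introducing a single intermediate Steiner point in the interior of $F$ and choosing the bend angle $\ge 60^\circ$ on both sides; this uses a constant number of extra edges and vertices per defect and thus preserves the $O(n)$ bound. The case analysis needed to rule out obstructions to this flexibility argument (e.g., extremely thin or sharp faces) is where most of the careful work would lie.
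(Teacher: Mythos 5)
The key unproven step in your plan is the bridge lemma: the claim that from an isolated vertex (or from a point on an inner boundary component, or on a separating edge), one can always construct a polygonal path with $O(1)$ Steiner bends reaching another component of $\partial F$, meeting both endpoints and all bend points at angles in $[60^\circ,120^\circ]$. Your ``slide the far endpoint / bend once'' argument is a sketch, and you concede that ``the case analysis \dots is where most of the careful work would lie.'' In fact the one-bend claim is not obviously true in thin or sharply angled faces: as you slide the far endpoint of a straight segment along a target edge, the achievable crossing angle may never enter $[60^\circ,120^\circ]$ before the segment leaves $F$ or before you run off the target edge, and a single bend changes the incoming direction by at most $\pm 120^\circ$, which combined with an unfavorable target orientation need not suffice. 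The paper sidesteps all of this with the Bern--Mitchell--Ruppert/Eppstein tangent-disk construction: take the leftmost vertex $v$ of the component to be connected, inflate an open disk in the left half-plane tangent to the vertical line at $v$ until it first touches another component; the point of first contact is a tangency, so the radius to it is perpendicular to the contacted edge (a free $90^\circ$ crossing); the horizontal emission direction at $v$ automatically makes angle $\geq 90^\circ$ with every edge of $\Gamma$ at $v$, since $v$ is leftmost; and inside the disk one has complete freedom to join the tangency points through a center or inscribed hexagon using only angles $\geq 60^\circ$. This single geometric fact about tangent circles packages the entire angle-control problem and makes the $O(n)$ bound immediate (one disk per component, then one per separating edge, then $O(\deg v)$ cross-cuts per separating vertex).

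A secondary gap: your ``perpendicular stub'' for a slit edge does not make the face simple. Attaching a short segment perpendicular to the slit, ending at a new degree-one vertex in the interior of $F$, leaves the face boundary still traversing the slit on both sides --- and now traversing the stub on both sides as well. Removing a separating edge actually requires bridging the inner sub-component of $\partial F$ to the outer one, which is again exactly the bridge problem; the paper handles it with another pass of the same disk construction (disk tangent at the rightmost vertex of the inner part, inside the half-plane pointing toward the outer part), followed by a final pass of small circular cross-cuts around repeated vertices to upgrade from edge-simple to simple.
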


\begin{proof}
We use an idea of Bern, Mitchell and Ruppert  \cite{BMR95} (refined by
 David Eppstein in \cite{Eppstein-faster}) of adding
disks that connect different components of $\Gamma$.
  For each component $\gamma$ of $\Gamma$, except the  
component $\gamma_u$ (``u'' for {\bf u}nbounded) 
 bounding 
the unbounded complementary component of $\Gamma$, choose
a left-most vertex $v$  of $\gamma$ and consider the left
half-plane defined by the vertical line through this point,
e.g., the vertical line on the left side of
 Figure \ref{Connect1}. Consider the family
of  open disks in the  left half-plane  tangent  to this line at  $v$,
and take the maximal open  disk   that does not intersect $\Gamma$.
Its boundary must hit one or more components of $\Gamma$ that
are distinct from $\gamma$. Choose one point on the circle
from each distinct component (see the right side of
Figure \ref{Connect1}); a previously chosen disk touching 
some component counts as part of that component.
  Do this for each component of
$\Gamma$ other than $\gamma_u$,
 taking the maximal open disk that is disjoint
from $\Gamma$ and all the previously constructed disks.

\begin{figure}[htbp]
   \centerline{ 
	\includegraphics[height=2in]{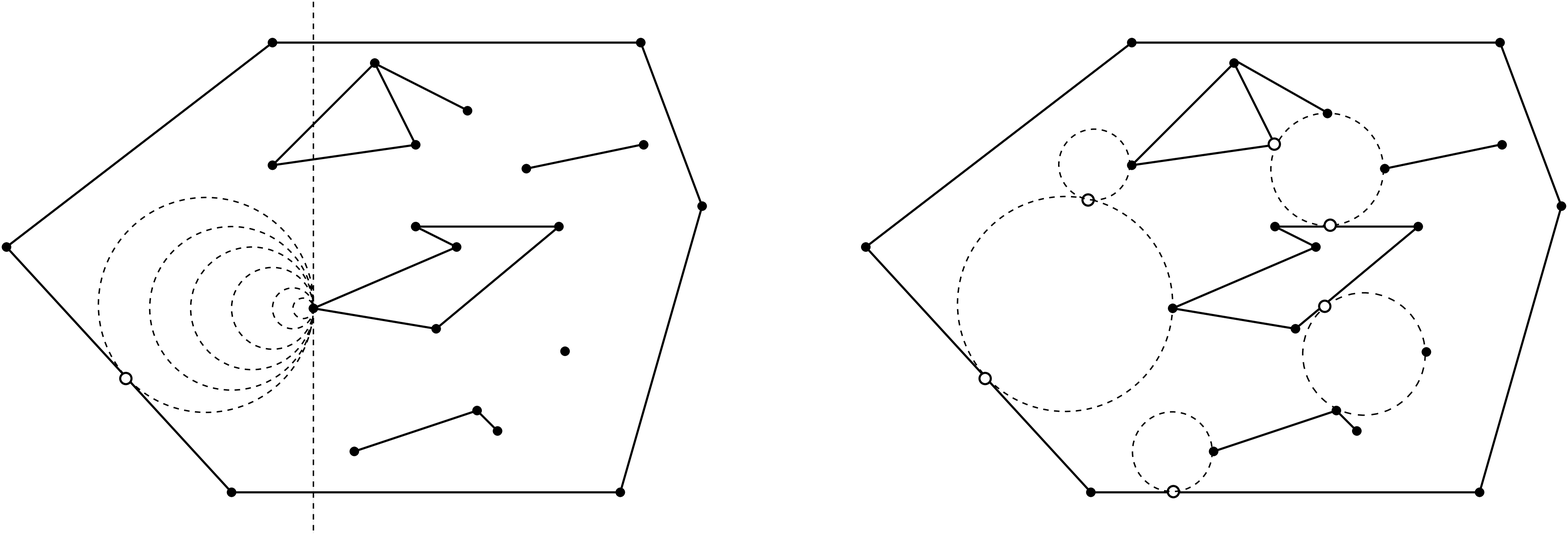}
	}
\caption{ \label{Connect1}
For each connected component of $\Gamma$ (except the component
bounding the unbounded complementary component), choose the leftmost
point and expand a disk until it contacts another component or
a previously constructed disk.
The process for one component is shown on the left; the result 
for all the components is shown on the right.
}
\end{figure}

After all the disks have been placed, we connect the
chosen points on the boundary of each disk by a
PSLG inside the disk that has no angles $< 60^\circ$,
e.g., as illustrated in Figure \ref{JoinHex}.
If the points are widely spaced, we can simply join them
all to the origin (left side of Figure \ref{JoinHex}),
 assuming this does not form an angle
$< 60^\circ$. Otherwise we place a regular hexagon around
the orgin and connect the points on the circle to
the hexagon by radial segments (right side of Figure
\ref{JoinHex}).

\begin{figure}[htbp]
\centerline{ 
	\includegraphics[height=1.5in]{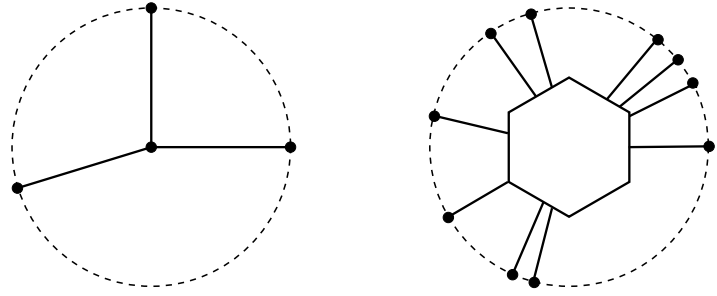}
	}
\caption{ \label{JoinHex}
Joining points on a circle without creating angles $< 60^\circ$.
}
\end{figure}

\begin{figure}[htbp]
\centerline{ 
	\includegraphics[height=2in]{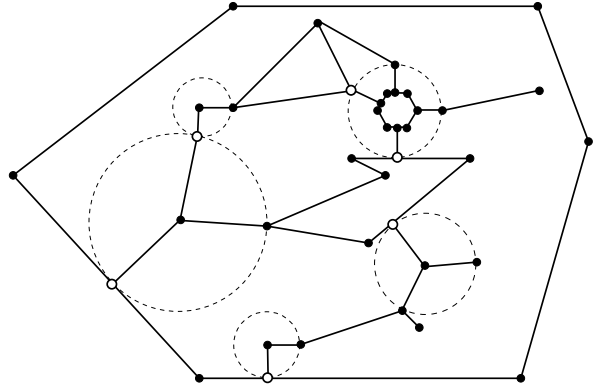}
	}
\caption{ \label{Connect5}
The PSLG in Figure \ref{Connect1} with the disks replaced by the 
connecting PSLGs from Figure \ref{JoinHex}.
One face is not a simple polygon.
}
\end{figure}

We have now replaced $\Gamma$ by another PSLG  
 $\Gamma'$ that is connected and contains no
angles $< 60^\circ$, except for those that were already
in $\Gamma$. However, the faces of $\Gamma'$ need not be
simple polygons. We will add more disks to get this property.

Each face  $\Omega$ of $\Gamma'$ is either
a simple polygon or $\partial \Omega$ has an edge or vertex
whose removal disconnects $\partial \Omega$. If there is an
edge  $e$  so that $\partial \Omega \setminus e$ has two
components, then  one of them, $\gamma_o$ (``o'' for {\bf o}uter)
 separates
the other, $\gamma_i$ (``i'' for {\bf i}nner) from $\infty$. Let
$p$ be the endpoint of $e$ that meets $\gamma_i$; after
a rotation and translation we can assume $p=0$ and $e$ lies
on the negative real axis. Choose a vertex $v = a+ib$ on $\gamma_i$
that is farthest to the right, and consider disks in the half-plane
 $\{ x+iy : x > a\}$   that are tangent to the vertical line 
 $\{x+iy: x=a\}$
at $v$. There is a maximal such open disk $D$ contained in $\Omega$
and its boundary must intersect $\gamma_o$ or a previously
generated disk.
We repeat the process until there are no separating edges left and
then connect points in the disks as above. The procedure is 
repeated at most once for each edge in the the boundary of the face, 
hence at most $O(n)$  edges and vertices are added.

\begin{figure}[htbp]
\centerline{ 
	\includegraphics[height=1.75in]{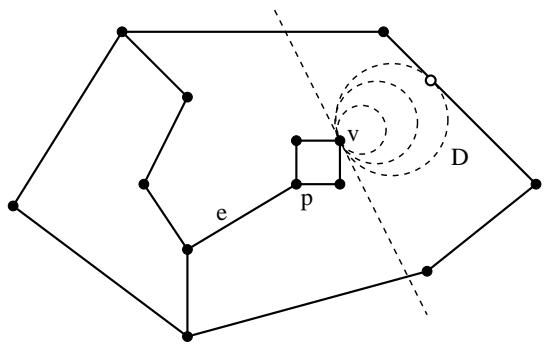}
	}
\caption{ \label{Connect3}
If $\partial \Omega$ is not an edge-simple polygon then there is at
least one edge  $e$ that divides $\partial \Omega$ into
``outer'' and  ``inner'' parts. The inner component can be
connected to the outer component by adding a disk as
described in the text. 
}
\end{figure}

\begin{figure}[htbp]
\centerline{ 
	\includegraphics[height=1.75in]{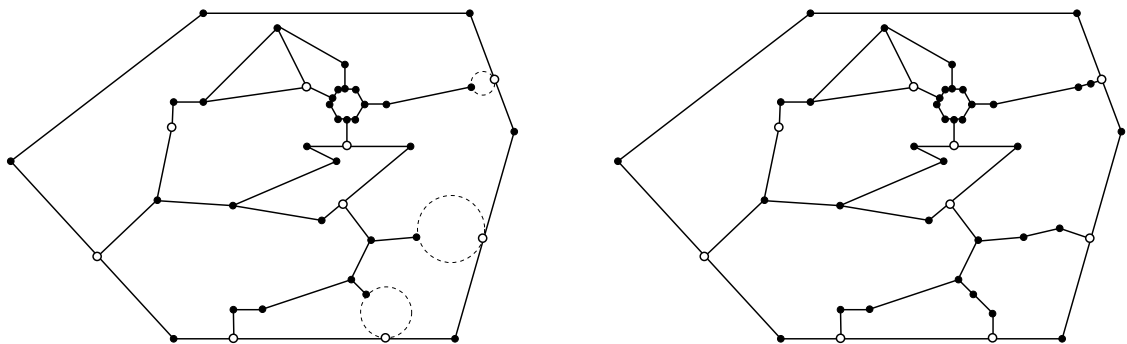}
	}
\caption{ \label{Connect6}
The PSLG in Figures \ref{Connect1} and \ref{Connect5} 
with the disks that prevent ``two-sided'' edges on the 
left and these disks replaced by PSLGs on the right.
The resulting polygon has faces that are simple polygons.
}
\end{figure}

We now have a PSLG so that all the faces are edge-simple. 
To make the faces simple, we  choose a small circle around 
each repeated vertex and add polygonal arcs inscribed in 
arcs of these circles as shown in Figure \ref{CutVertex}.
This is 
easy to do and the details are left to the reader. 
\end{proof}

\begin{figure}[htbp]
\centerline{ 
	\includegraphics[height=1.5in]{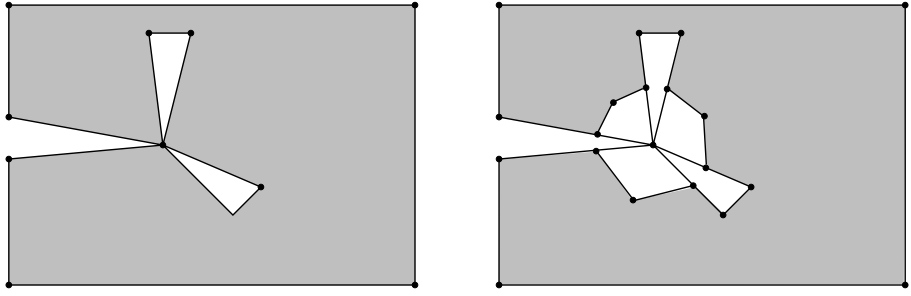}
	}
\caption{ \label{CutVertex}
If there is a vertex that disconnects
$\partial \Omega$, then we place cross-cuts around this
vertex that block access from $\Omega$. These lie
on some sufficiently small circle around the vertex and use
at most $O(\deg(v))$ new edges and vertices. Summing
over all vertices gives $O(n)$.
}
\end{figure}

\section{Propagation in quadrilateral meshes}  \label{propagation}

In this section we review a  few  helpful properties
of quadrilateral meshes.

\begin{lemma} \label{need even}
If $\Omega$ is a simply connected  polygonal domain  that has a
quadrilateral mesh, then the number of  boundary vertices
must be even.
\end{lemma}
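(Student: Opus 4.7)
The plan is to prove this by a straightforward double-counting (parity) argument on edges of the mesh, combined with the fact that on the boundary of a simply connected polygonal domain the number of mesh vertices equals the number of mesh edges.

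First, I would let $F$ denote the number of quadrilateral faces in the mesh, let $E_{\rm int}$ be the number of mesh edges lying in the interior of $\Omega$, and let $E_{\rm bd}$ be the number of mesh edges lying on $\partial \Omega$. Summing the edges around each face with multiplicity gives $4F$, since every quadrilateral has four sides. On the other hand, each interior edge is shared by exactly two quadrilateral faces (since $\Omega$ is the only domain in question and both sides of an interior edge lie in $\Omega$), and each boundary edge belongs to exactly one quadrilateral. This yields the identity
\[
4F \;=\; 2 E_{\rm int} + E_{\rm bd},
\]
which forces $E_{\rm bd}$ to be even.

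Next I would observe that because $\Omega$ is simply connected and is a polygonal domain whose boundary has been subdivided by the Steiner and original vertices of the mesh, $\partial \Omega$ is a single simple closed polygonal curve in $\Gamma'$. Along such a cycle, the number of vertices equals the number of edges, so the number of boundary vertices is $E_{\rm bd}$, which we have just shown is even.

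There is essentially no obstacle here; the only point to be slightly careful about is the claim that every interior mesh edge is shared by exactly two quadrilaterals of the mesh, which uses the definition of a mesh (every face is a simple polygonal domain with pairwise matching edges across $\partial \Omega$'s subdivision), and the claim that $\partial \Omega$ is a single cycle, which uses simple connectivity. Both are immediate from the definitions collected in Section \ref{PSLG defn}, so the proof reduces to the one-line calculation above.
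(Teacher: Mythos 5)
Your proof is correct and is essentially the same double-counting argument the paper gives: counting edge incidences around the $Q$ quadrilaterals yields $4Q = 2E_{\rm int} + E_{\rm bd}$, forcing $E_{\rm bd}$ even, and then identifying boundary edges with boundary vertices along the single boundary cycle. The paper writes this slightly more compactly (using $E-B$ in place of $E_{\rm int}$ and noting directly that boundary edges and boundary vertices both number $B$), but the reasoning is identical.
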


\begin{proof}
 Let $Q$ be the number of quadrilaterals in the mesh, $I$ the number of
interior vertices, $B$ the number of boundary vertices and
$E$ the number of edges. Note that the number of edges on the boundary
also equals $B$. Each quadrilateral has four edges, and each interior edge
is counted twice,  so $4Q = 2(E-B) + B = 2 E - B .$
Thus $B = 2(E -2Q)$ is even.
\end{proof}

 Given a convex  quadrilateral $Q$
 with vertices ${a,b,c,d}$ (in the counterclockwise direction)
 and a point $x= ta +(1-t)b$, $0<t<1$,  on the edge
$[a,b]$, use a line segment to connect $x$ to
 $y= t d + (1-t)c$ on the opposite edge
of the  quadrilateral. We will call this a propagation segment.
By  replacing $x$ with $y$ and repeating the construction,
we create a path through the mesh 
 that can be continued until it either hits a boundary edge
or returns to the original starting point $x$. If $x$ is
a boundary point, the latter
 is impossible, so the path must terminate at a distinct
boundary point. Applying this to every midpoint of a boundary edge shows
that each such edge is paired with a distinct edge,  giving an
alternate proof of Lemma \ref{need even}.

We will repeatedly  use propagation lines to subdivide a
quadrilateral mesh, and so a basic fact we need is that
this process  preserves ``niceness''.

\begin{lemma} [Lemma 4.1, \cite{Bishop-nonobtuse}]
\label{split quad}
Suppose $Q$ is a $\theta$-nice  quadrilateral.
If $Q$ is  sub-divided by a  propagation segment, then each
of the resulting sub-quadrilaterals is also $\theta$-nice.
\end{lemma}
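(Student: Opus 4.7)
My plan is to set up coordinates, derive the direction of the propagation segment from an algebraic identity, and then show each new angle at $x$ and $y$ lies in $[90^\circ-\theta,\,90^\circ+\theta]$. The main tool is the observation that the propagation vector is a positive combination of two opposite edge vectors, together with the angle-sum identity for quadrilaterals.

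First I place $Q$ with $a=(0,0)$, $b=(p,0)$ and the interior in the upper half-plane, and let $\alpha,\beta,\gamma,\delta$ denote the interior angles of $Q$ at $a,b,c,d$. A direct calculation yields $y - x = t(d-a) + (1-t)(c-b)$, writing $y-x$ as a positive combination of $\vec{ad}$ and $\vec{bc}$, whose arguments are $\alpha$ and $180^\circ-\beta$. Since both of these lie in $[90^\circ-\theta,\,90^\circ+\theta]$, an arc of length $2\theta<180^\circ$, the argument $\psi$ of $y-x$ is well-defined and lies on the short arc between them, hence $\psi\in[90^\circ-\theta,\,90^\circ+\theta]$. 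The angles at $x$ in the two sub-quadrilaterals are $180^\circ-\psi$ and $\psi$, so both are $\theta$-nice at $x$.

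The angles at $y$ require the sharper estimate that constitutes the main obstacle. A turning-angle walk around $Q$, accumulating the exterior angles $180^\circ-\beta$ at $b$ and $180^\circ-\gamma$ at $c$, shows that the directed edge from $c$ to $d$ has argument $360^\circ-\beta-\gamma=\alpha+\delta$, so the direction from $y$ to $d$ is $\alpha+\delta$ and the interior angle at $y$ in $[a,x,y,d]$ is $\eta = \psi + 180^\circ - (\alpha+\delta)$. The naive bounds on $\psi$ and $\alpha+\delta$ only give $\eta\in[90^\circ-3\theta,\,90^\circ+3\theta]$, which is too weak. The fix is that \emph{both} arc endpoints $\alpha$ and $180^\circ-\beta$ actually lie in the target interval $[\alpha+\delta-90^\circ-\theta,\,\alpha+\delta-90^\circ+\theta]$: one has $\alpha-(\alpha+\delta-90^\circ)=90^\circ-\delta\in[-\theta,\theta]$, and the angle-sum identity $\alpha+\beta+\gamma+\delta=360^\circ$ gives $(180^\circ-\beta)-(\alpha+\delta-90^\circ)=\gamma-90^\circ\in[-\theta,\theta]$. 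Consequently $\psi$ lies in this target interval too, so $\eta\in[90^\circ-\theta,\,90^\circ+\theta]$, and the angle at $y$ in the other sub-quadrilateral is $180^\circ-\eta$, also $\theta$-nice. Since the angles at $a,b,c,d$ are inherited unchanged and every interior angle in each sub-quadrilateral is less than $180^\circ$ (so each sub-quadrilateral is convex), both sub-quadrilaterals are $\theta$-nice, completing the argument.
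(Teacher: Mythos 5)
Your proof is correct. The key identity $y-x = t(d-a) + (1-t)(c-b)$, and its consequence that the direction $\psi$ of the propagation segment is trapped in the (short) arc between $\arg(d-a)=\alpha$ and $\arg(c-b)=180^\circ-\beta$, is exactly the right observation; the bound at $x$ then falls out immediately. You correctly identified that the angles at $y$ are where the naive estimate loses a factor of $3$, and your fix --- re-expressing both arc endpoints as $\alpha+\delta-90^\circ$ plus an error of at most $\theta$, using $90^\circ-\delta$ and $\gamma-90^\circ$ via the angle-sum identity --- closes the gap cleanly. One cosmetic remark: there is a symmetric way to phrase the estimate at $y$ that avoids the ``target interval'' detour. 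Since $x-y = t(a-d)+(1-t)(b-c)$, the direction of $\overrightarrow{yx}$ is trapped between $\arg(a-d)$ and $\arg(b-c)$, and measuring these against $\overrightarrow{yd}$ (which is parallel to $\overrightarrow{cd}$) gives that the angle $\eta$ at $y$ lies between $180^\circ-\delta$ and $\gamma$, both of which are in $[90^\circ-\theta, 90^\circ+\theta]$; this is the mirror image of your argument at $x$ and makes the symmetry manifest. Note also that this paper quotes the statement as Lemma~4.1 of \cite{Bishop-nonobtuse} without reproducing a proof, so there is no in-text argument to compare against; your proof is a complete and self-contained justification of the quoted fact.
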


\begin{cor} \label{get even}
Suppose $\Omega$ is polygonal domain
 and every component of $\partial
\Omega$ is a simple polygon. Then every nice  quadrilateral mesh of
$\Omega$ has a  nice subdivision with
exactly twice as many vertices on each component of $\partial \Omega$.
\end{cor}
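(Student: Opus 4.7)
The plan is to refine each quadrilateral $Q$ of the given mesh into four nice sub-quadrilaterals by applying Lemma \ref{split quad} twice, using only edge midpoints on $\partial Q$. Let the corners of $Q$ be $a,b,c,d$ in counterclockwise order and write $m_{ab},m_{bc},m_{cd},m_{da}$ for the midpoints of the four sides.

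First I would split $Q$ along the propagation segment joining $m_{ab}$ and $m_{cd}$. By Lemma \ref{split quad}, the two halves $Q_1 = a\,m_{ab}\,m_{cd}\,d$ and $Q_2 = m_{ab}\,b\,c\,m_{cd}$ are nice. Next I would cut each half by a further propagation segment. In $Q_1$ the side $[d,a]$ is opposite to the newly created side $[m_{ab},m_{cd}]$, so the propagation issuing from $m_{da}$ terminates at the midpoint $p := \tfrac12(m_{ab}+m_{cd})$ of that opposite side, and a second application of Lemma \ref{split quad} yields two nice pieces of $Q_1$. The analogous propagation from $m_{bc}$ inside $Q_2$ arrives at the same point $p$, because the midpoint of the shared edge $[m_{ab},m_{cd}]$ is computed identically from either side. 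This produces four nice sub-quadrilaterals of $Q$ meeting at $p$.

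To assemble these local refinements into a genuine mesh of $\Omega$, I would note that the only new vertices placed on $\partial Q$ are the four edge midpoints, and on a neighbouring quadrilateral $Q'$ sharing edge $[a,b]$ the same midpoint $m_{ab}$ is used in its own refinement, independent of which pair of opposite sides $Q'$ selected as its primary split direction. Hence the local sub-meshes match along every shared edge, and the refinement is a conforming quadrilateral mesh of $\Omega$. Each boundary edge contributes exactly one new vertex (its midpoint) to its boundary component, so the number of vertices on every component of $\partial\Omega$ is exactly doubled, as required.

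The only point requiring care is the identification of the second cut as a propagation segment of the enclosing half-quadrilateral; this amounts to the easy geometric observation that the midpoint of a side of $Q$ is also the midpoint of the corresponding side of the half containing it, and that $p$ lies on the common interior edge of $Q_1$ and $Q_2$. Beyond this I anticipate no obstacle; the corollary reduces to two applications of Lemma \ref{split quad} followed by a consistency check at shared boundaries.
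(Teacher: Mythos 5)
Your proof is correct and follows the same route as the paper: the paper's proof is the one-line observation that splitting each quadrilateral by the two segments joining midpoints of opposite sides doubles the boundary edge count on every component, and your more detailed write-up simply verifies that this amounts to two applications of Lemma~\ref{split quad} (cutting $Q$ into $Q_1,Q_2$ along $[m_{ab},m_{cd}]$, then cutting each half at the Varignon center $p$) and that the local refinements are consistent across shared edges. This is the intended argument, just spelled out more carefully.
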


\begin{proof}
Split each quadrilateral  by two segments joining
the midpoints of opposite sides. Each boundary edge is split in
two so the number of boundary edges  on each component doubles.
\end{proof}

Alternatively, one can just split each boundary edge into two,
and propagate these vertices until they hit another boundary
midpoint. See Figure \ref{Double} for an example of both
types of subdivision.

\begin{figure}[htbp]
\centerline{
 \includegraphics[height=1.5in]{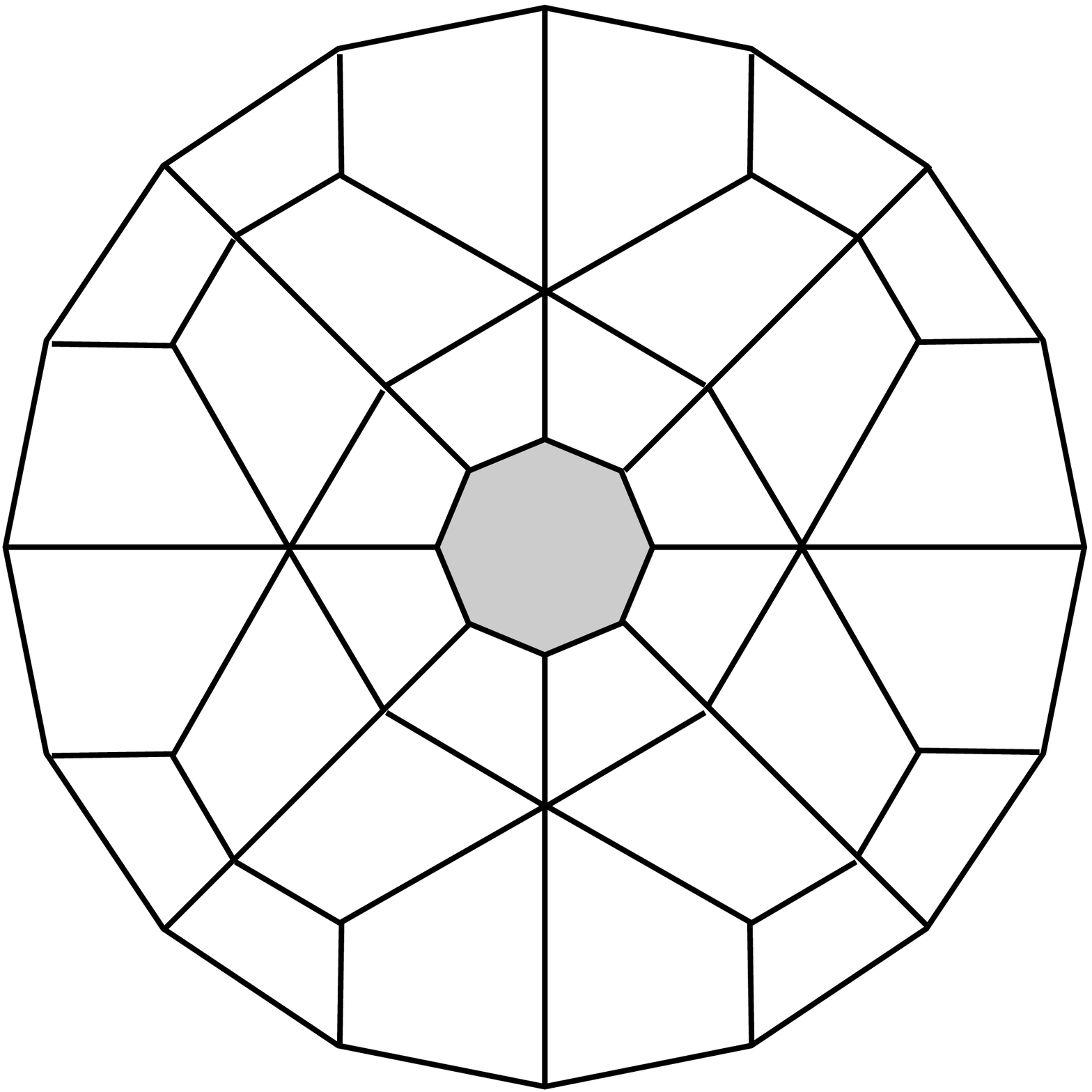}
$\hphantom{xxxx} $
 \includegraphics[height=1.5in]{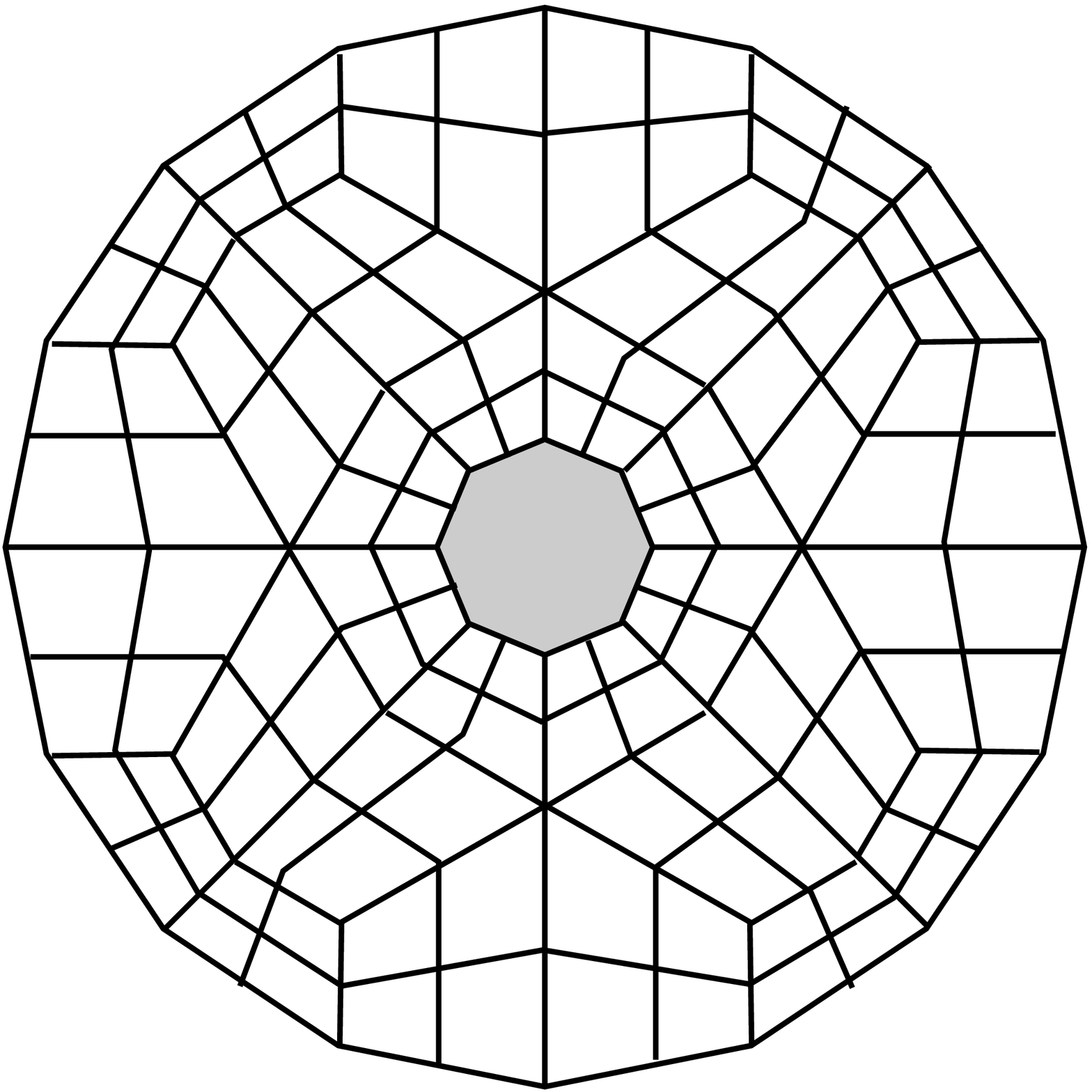}
$\hphantom{xxxx}$
 \includegraphics[height=1.5in]{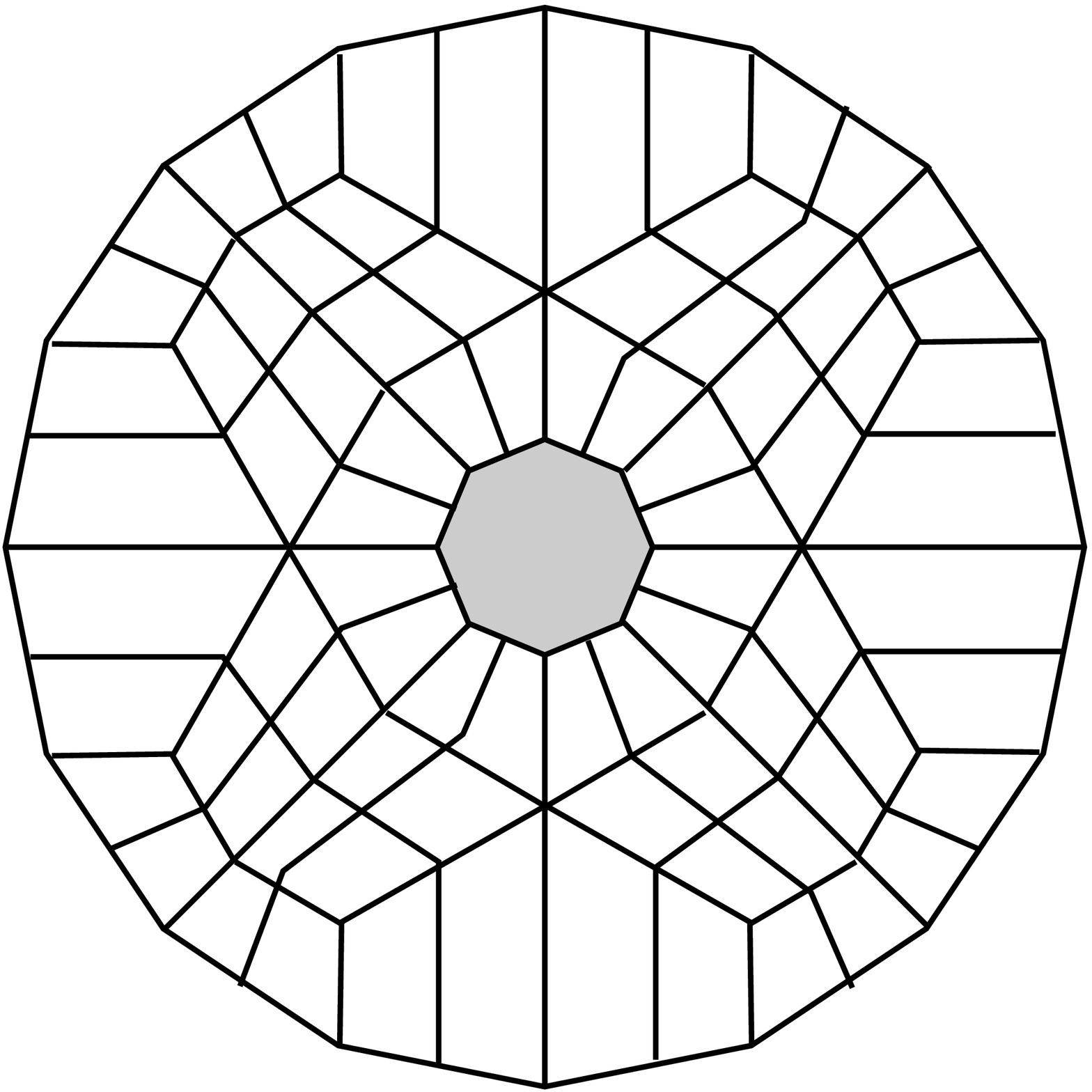}
 }
\caption{ \label{Double}
A mesh of an annular  region (left) and its
double (center).
On the right we only double the boundary edges and propagate these.
}
\end{figure}

\begin{lemma}
A propagation path in a quadrilateral mesh 
can visit each quadrilateral at most twice 
(once connecting each pair of opposite sides)
\end{lemma}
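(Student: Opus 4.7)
The plan is to prove the lemma by a clean combinatorial argument using an auxiliary graph, rather than direct geometric case analysis. Define the \emph{intersection graph} $G$ whose vertices are the edges of the mesh, with two mesh edges joined in $G$ precisely when they are opposite sides of some common quadrilateral. Each interior mesh edge belongs to exactly two quadrilaterals, and in each of these it has exactly one opposite edge, so it has degree $2$ in $G$; each boundary mesh edge belongs to a unique quadrilateral, so it has degree $1$ in $G$. Since $G$ has maximum degree $2$, every connected component of $G$ is a simple path or a simple cycle.

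Next I would verify that a propagation path corresponds to a walk on $G$. Each propagation segment inside a quadrilateral $Q$ corresponds to an edge of $G$, namely the pair of opposite sides of $Q$ joined by the segment; consecutive segments share their endpoint on a common mesh edge, which is a vertex of $G$. At a degree-$2$ vertex of $G$, the walk is forced to continue along the unique incident $G$-edge different from the one it just traversed — this is exactly the parameter-preserving propagation rule stated in the paragraph before Lemma \ref{split quad}. Therefore the walk is deterministic and traverses either a path component (from one boundary mesh edge to another) or a cycle component (once around), and in either case each vertex of $G$, i.e., each mesh edge, is visited at most once.

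Finally, each visit of the propagation path to $Q$ uses exactly one pair of opposite sides, i.e., exactly two of the four edges of $Q$. Since each of the four edges of $Q$ is crossed at most once by the walk, the number of visits of the path to $Q$ is at most $4/2 = 2$. Moreover, if there are exactly two visits, then all four edges of $Q$ are used, forcing the two visits to use the two distinct pairs of opposite sides, as claimed.

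The main obstacle I foresee is simply verifying carefully that the propagation walk does behave as described on $G$ — in particular, that the rule ``continue along the unique other incident $G$-edge'' at an interior mesh edge really matches the geometric propagation construction. This reduces to the parameter-preserving formula $x = ta + (1-t)b \mapsto y = td + (1-t)c$: once the entry point and its adjacent quadrilateral are fixed, the opposite edge and the exit point are uniquely determined, hence so is the next propagation segment.
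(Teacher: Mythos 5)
Your proof takes a genuinely different, more abstract route than the paper's. The paper argues directly: it first rules out a path crossing an edge twice in opposite directions by an infinite-descent argument (the opposite side would have to be crossed twice first, and so on), and then handles same-direction double crossings by asserting that the path must cross at the same point both times and therefore closes up. You replace both steps with the auxiliary graph $G$: since $G$ has maximum degree $2$, its components are paths and cycles, and a non-backtracking deterministic walk on such a component cannot revisit a vertex in the opposite direction, which cleanly eliminates the paper's ``opposite directions'' case without any descent argument. This is a nice simplification of that half of the proof.

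However, your handling of the cycle case has the same soft spot as the paper's, and you hide it more. You write that the walk ``traverses \ldots a cycle component (once around)'' and conclude each $G$-vertex is visited at most once. But a non-backtracking walk on a cycle component of $G$ is, combinatorially, infinite: it goes around forever. The propagation \emph{path} is the finite truncation that stops when it returns to the original starting point $x$, and the assertion that this happens after exactly one loop is not a consequence of the graph structure alone. Going once around the cycle produces an affine bijection $\phi$ of the starting edge $e_0$ onto itself that maps endpoints to endpoints, so $\phi$ is either the identity or the reflection; if it were the reflection, a generic $x$ would return to $x$ only after two loops, every edge on the cycle would be crossed twice, and a quadrilateral contributing both of its opposite pairs to the cycle could be visited more than twice. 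What rules this out is planarity: each $\psi_i:e_i\to e_{i+1}$ carries the $Q_i$-induced orientation of $e_i$ to the $Q_{i+1}$-induced orientation of $e_{i+1}$, so the cyclic composition preserves the $Q_0$-induced orientation of $e_0$ and is therefore the identity. The paper's phrase ``the path must cross $e$ at the same point both times (by the definition of how points propagate)'' at least flags this as a step; your reduction to ``once around a cycle'' presents it as automatic when it is not. Adding the orientation argument (or some equivalent) would close the gap and make your version fully self-contained.
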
 

\begin{proof}
First we show each edge is visited at most once.
Suppose the edge $e$ is visited twice by a path that crosses 
$e$ in the same direction both times. Then the path must cross 
$e$ at the same point both times (by the definition of how 
points propagate). Thus the path is really a closed loop that 
hits $e$ once.  Next suppose $e$ is  visited twice by a propagation path
that crosses in opposite directions each time. Then there is another 
edge $f$, opposite to $e$ on one of the adjacent quadrilaterals, 
so that $f$ is crossed twice by the same path before $e$ is 
crossed twice.   Iterating the argument gives a contradiction 
since the crossings of $e$ are separated by only a finite number of 
steps.  Thus no 
edge is visited twice. Since  each quadrilateral has four sides, 
and each side is visited at most once, each quadrilateral is 
visited at most twice.
\end{proof}

\section{Sinks} \label{sink defn} 

We start by reviewing the definition given in 
the introduction, and  then stating the results that
we will prove in later sections.

We say that a mesh $\Gamma$  of the interior $\Omega$ 
 of a simple polygon  $P$ {\defit   extends
$P$}  if  $V(\Gamma) \cap P = V(P)$, i.e., 
the only vertices of the mesh 
that occur on $P$ are the vertices of $P$ 
(no extra boundary vertices are added). 

As noted in the introduction, a
 {\defit sink} is  a  simple polygon $P$ with the 
property that whenever we add an even number of vertices 
to the  edges of $P$ to obtain a new polygon $P'$, then 
there is nice mesh of the interior that  extends $P'$.
By Lemma \ref{need even}, a sink must 
have an even number of vertices and  it is 
a simple exercise to check 
that neither a square nor a regular hexagon  is a sink.

\begin{lemma} \label{octagon is sink}
The regular octagon is a sink. If we add $M\geq 1$ vertices to the 
edges of a regular octagon $P$, the resulting polygon  $P'$  
can be extended by a 
nice  quadrilateral  mesh with $O(M)$  elements.
\end{lemma}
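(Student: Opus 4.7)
The plan is to build an explicit constant-size nice quadrilateral \emph{base mesh} of the interior of $P$ and then use the propagation mechanism of Section \ref{propagation} to route in the $M$ added boundary vertices.

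I would construct the base mesh by inscribing a smaller concentric regular octagon $P_0$ whose corresponding sides are parallel to those of $P$, and joining each pair of corresponding corners by a radial segment. This partitions $P\setminus P_0$ into eight congruent isoceles trapezoids, each with angles $67.5^\circ,\,67.5^\circ,\,112.5^\circ,\,112.5^\circ$, all of which lie in $[60^\circ,120^\circ]$ and correctly subdivide the $135^\circ$ corners of $P$ demanded by the nice condition. The inner region $P_0$ is then decomposed into $O(1)$ nice quadrilaterals by a direct ad hoc construction (for instance by inscribing a rotated concentric square in $P_0$ and splitting the complementary pieces into nice quads).

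Given $M\ge 1$ new vertices distributed as $(m_1,\ldots,m_8)$ on the eight outer edges, I would propagate each new vertex inward across its adjacent outer trapezoid, from the edge of $P$ toward the corresponding parallel edge of $P_0$. By Lemma \ref{split quad} this keeps all resulting sub-quadrilaterals nice, and the $i$-th trapezoid is refined into $O(m_i)$ nice pieces, giving $O(M)$ quads in total in the outer ring. The $M$ propagation paths then arrive on $\partial P_0$; I continue propagation through the base quads of $P_0$, pairing the arrivals so that each pair meets head-on and terminates, and absorb the paths by a local remesh of a bounded number of base quads along each route.

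The main obstacle is this last step: showing that an arbitrary even number of arrivals on $\partial P_0$, distributed in any way on its eight edges, can be paired off and absorbed by a nice remesh of $P_0$ using only $O(M)$ quads. Parity is automatic because $M$ must be even (Lemma \ref{need even}). My plan is to exploit the four pairs of parallel opposite sides of $P_0$: straight propagation segments across $P_0$ pair arrivals on opposite sides naturally, and any residual imbalance on individual sides is fixed by rerouting through a fixed ``central merger'' region whose full $D_8$-symmetry limits the casework to finitely many configurations, each handled by an explicit trapezoidal dissection contributing $O(1)$ quads per pair. Summing over all pairs yields the claimed $O(M)$ bound.
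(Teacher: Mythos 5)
Your proposal correctly identifies the hard step but does not resolve it, and the resolution you sketch would not work as stated. After propagating the $M$ boundary vertices across the outer ring of trapezoids, they arrive at $M$ points on $\partial P_0$ whose positions and distribution over the eight sides you do not control. You then want to ``continue propagation through $P_0$, pairing the arrivals so that each pair meets head-on and terminates,'' but propagation in a quadrilateral mesh is deterministic: once the base mesh of $P_0$ is fixed, the exit point of a propagation path entering at $x$ is a fixed function $\phi(x)$, and there is no mechanism to steer two arbitrary arrivals so that they meet. If $m_i$ points arrive on side $i$ and $m_{i+4}$ on the opposite side, and the positions do not coincide under $\phi$ (which they generically do not), the straight-through paths cross one another, producing on the order of $m_i\,m_{i+4}$ quadrilaterals; in the worst case this is $O(M^2)$, not $O(M)$. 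The ``central merger'' region with $D_8$-symmetry is not defined, and the claim that it reduces the matching problem to ``finitely many configurations'' is not true: the configuration of arrivals is parametrized by $M$ and has unbounded complexity, while the symmetry of $P_0$ is irrelevant because the arrival pattern is not symmetric. In effect, the step you defer is the entire content of the lemma (``$P_0$ absorbs $M$ arbitrary boundary points with $O(M)$ quads'' is itself the statement that $P_0$ is a sink with the stated complexity), so the argument is circular at exactly the point where it would need to be nontrivial.

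The paper's proof avoids this by working from the inside out rather than from the outside in. It first chooses $t=(n+k-8)/2$ \emph{non-crossing} propagation paths in a tiny regular octagon at radius $2^{-s-1}$ (Lemma \ref{easy mesh} gives a mesh with only $12+4t$ elements, precisely because the paths are chosen not to cross), producing a small cyclic $(8+2t)$-gon with the right number of vertices. It then uses the discrete homotopy of Lemma \ref{homotopy} and the annulus construction of Corollary \ref{annular mesh angles} to sweep a ring of quadrilaterals between this small cyclic polygon and the (radially projected) $(n+k)$-tuple on the circle of radius $1/2$, rotating the vertices continuously so that no two radial ``transport'' paths ever cross. Finally Lemma \ref{rounding lemma} meshes the outermost ring between that projected polygon and $P'$ itself. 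Your outer ring of eight trapezoids plays roughly the role of this rounding step, but the homotopy/annulus mechanism for matching an arbitrary outer vertex configuration to a pre-chosen inner one with linear complexity is the missing idea in your plan.
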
 

From octagons we will construct other sinks, e.g. we can make
a square into a sink by adding 24 vertices to 
its boundary, obtaining a 28-gon.
Using such square sinks we can make any rectangle $R$ into a sink 
 by adding $O(\ecc(R))$  extra
vertices to the sides  of the rectangle
(recall that for a quadrilateral $Q$, 
 $\ecc(Q) \geq 1$ denotes its eccentricity, i.e., 
the longest side length of $Q$
 divided by  the shortest side length of $Q$).  
From the case of rectangles we will deduce:

\begin{lemma} \label{quad sinks lemma 1}
If $Q$ is a nice quadrilateral, then we can make $Q$ into 
a sink $Q'$ by adding $ N=O(\ecc(Q))$ 
vertices to the sides of $Q$. 
If we add $M$ vertices to the sides of $Q'$,  the resulting 
polygon $Q''$  can be extended by a nice mesh 
using $O(N+M^2)$ quadrilaterals. 
More precisely, if   $M=M_1+M_2$, where 
$M_1\geq 1$ is an upper bound for the 
 number of  extra points added to one 
pair of opposite sides of $Q$ and $M_2 \geq 1$ is 
an upper bound for the number of
extra points added to the other pair of opposite 
sides of $Q$ , then the 
number of elements in the nice extension of 
$Q''$  is $O(N + M_1 M_2)$.
\end{lemma}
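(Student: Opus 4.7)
The plan is to reduce from a general nice quadrilateral to a rectangle to a square, and then to leverage the octagonal sink of Lemma \ref{octagon is sink}. First I would handle the square case: inscribe a regular octagon $O$ inside a square $S$ with four of its edges flush against the sides of $S$, leaving four congruent right-isoceles triangular corners. The octagon $O$ is a sink by Lemma \ref{octagon is sink}, and each corner triangle can be capped off by $O(1)$ fixed nice quadrilaterals whose vertices conform to the inscribed octagon; a direct inspection confirms that this turns $S$ into a sink after adding $24$ boundary vertices, yielding a $28$-gon. From this, a rectangle $R$ of eccentricity $e$ can be made into a sink by partitioning $R$ into $O(e)$ congruent squares arranged along the long axis and making each into a square sink, for a total of $O(e)$ new boundary vertices on $\partial R$. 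The resulting $R'$ inherits a clean decomposition into a linear array of unit-square sinks.

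For a general nice quadrilateral $Q$, I would use propagation segments (which preserve niceness by Lemma \ref{split quad}) to dissect $Q$ into an approximately rectangular grid of $O(\ecc(Q))$ nice sub-quadrilaterals of bounded eccentricity; each such piece is close to a square and can be made into a sink using the construction above, contributing $O(1)$ extra boundary vertices per piece. This produces $N=O(\ecc(Q))$ new boundary vertices on $Q$ and endows $Q'$ with two transverse families of rows and columns of small square sinks.

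For the remeshing count of $Q''$, note that the $M_1$ extra points placed on one pair of opposite sides of $Q$ launch propagation paths traversing the rows, while the $M_2$ extra points on the other pair launch paths traversing the columns. Two such paths from different families meet in at most one small square sink, which can then be re-meshed locally (using the fact that each small piece is itself a sink) to absorb the new vertices with $O(1)$ new quadrilaterals. Summing over all pairs of crossings and including the initial $O(N)$ quadrilaterals gives the bound $O(N+M_1 M_2)$; the weaker general bound $O(N+M^2)$ follows by taking $M_1=M_2=M$.

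The main technical obstacle will be maintaining parity: each local remeshing inside a small sink requires an even number of added boundary vertices, but propagation paths can arrive at an individual sink in odd numbers, and propagations launched from a single side of $Q$ can create odd counts along the shared edges between adjacent small sinks. I would address this by a preliminary doubling of the boundary (Corollary \ref{get even}) combined with a careful pairing and routing argument that sends propagations in coordinated pairs, so that every small sink receives an even number of new boundary vertices. Throughout, the angle bounds in $[60^\circ,120^\circ]$ are preserved because each propagation segment preserves niceness (Lemma \ref{split quad}) and each small constituent sink is itself nice.
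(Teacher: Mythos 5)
Your proposal takes a genuinely different route from the paper (the paper does \emph{not} dissect $Q$ into a grid of small sinks; it surrounds a non-sink center with four rectangular sinks, one per side of $Q$, and four non-sink corner pieces, then absorbs boundary points by propagating them at most one step into a rectangle), but there is a fatal gap in your base construction of a square sink.

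If you inscribe a regular octagon in a square with four edges flush against the square's sides, the four corner pieces are right isosceles triangles. At each of the two non-right vertices of such a triangle --- i.e.\ at the octagon vertices lying on the boundary of the square --- the interior angle of the triangular corner region is $45^\circ$. Any quadrilateral of a mesh of that region which has one of these points as a vertex must have an angle $\leq 45^\circ$ there (and if two quadrilaterals meet there, one has an angle $\leq 22.5^\circ$), so there is no way to ``cap off'' these corners by $O(1)$ \emph{nice} quadrilaterals. This is precisely why the paper's square sink looks so different: it places a \emph{small} octagon well inside the square (its outermost vertex at distance $\approx 1.3$ inside a square of half-side $\approx 2.43$), fills the in-between region with a hand-verified mesh built from $60^\circ$/$120^\circ$ constructions (Figure~\ref{SquareSink7}), and then --- because a single interior octagon fails to absorb all propagation paths (Figure~\ref{SquareSink8}) --- inserts two auxiliary smaller square/octagon sinks to catch the paths that would otherwise pass by. You have no analogue of this path-absorption step. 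A secondary imprecision: a $1\times r$ rectangle cannot be ``partitioned into $O(r)$ \emph{congruent} squares'' for non-integer $r$; Lemma~\ref{rect from square} instead cuts it into $\lceil r\rceil$ sub-rectangles of eccentricity in $[1,\sqrt{2}]$ and shows the square-sink construction survives that much stretching.

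The rest of your outline (propagation segments preserve niceness by Lemma~\ref{split quad}; the parity problem for sinks; the final count of $O(N + M_1 M_2)$ by charging crossings to sinks) is plausible in spirit, but it rests entirely on having a valid square sink, and your parity-routing argument is much less elementary than the paper's: with a grid of many small sinks, fixing parity can require propagation segments that traverse $\Theta(\ecc(Q))$ pieces, whereas the paper's four-rectangle layout lets any two rectangles be connected through a single intermediate quadrilateral, keeping that step $O(1)$. You would need to supply those details before the proposal could be considered correct.
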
 

In particular, if all the extra vertices are added to a 
single side of the quadrilateral $Q$  (or are only  added to 
a single pair of opposite sides) then the number of
mesh elements is $O(N+M)$.

In addition to the ``regular'' sinks described above, 
we will also need some ``special'' sinks that use some angles 
less than $60^\circ$.
These sinks will be polygons inscribed in a circle or a sector.
A {\defit cyclic polygon} $P$ will refer to a  
simple polygon whose vertices $V$ are all on a circle 
that circumscribes the polygon.
The polygon is {\defit $\theta$-cyclic}  if 
every complementary arc of  $V$ on the circle has   angle measure
$\leq \theta$ (so the smaller $\theta$ is, the more 
$P$ looks like a circle).

In this paper,  an $n$-tuple will always refer
to   an ordered  list    of  $n$ distinct  points
on  a circle,  ordered counter-clockwise.
 Most commonly, we 
will take these on the unit circle  $\circle 
= \partial \disk =\{ z:|z|=1\}$ (a circle is a  1-dimensional 
torus, which is why the unit circle is traditionally denoted
with a $\circle$). 
An $(n,\theta)$-tuple is an $n$-tuple  
$X \subset \circle$ so that 
 each component of 
$\circle \setminus X$ has angle measure $\leq \theta$.
 Given an $n$-tuple $X$, 
we let $X^1$ be the $2n$-tuple obtained by adding the 
midpoint of each circular arc $\circle \setminus X$ and 
define $X^k$ to be the $2^kn$-tuple obtained by repeating 
this  procedure $k$ times. 
If $X$ is an $n$-tuple on $\circle$ and $\lambda >0$, then 
$\lambda X$ is the image of $X$ under the dilation
$ z \to \lambda z$, so is an $n$-tuple on the circle 
$\lambda \circle =\{ z:|z|=\lambda\}$. 

If   $X \subset \circle$ is an
$n$-tuple, then a {\defit sink conforming to $X$} 
is a polygon $P$  inscribed in the unit circle, so that
\begin{enumerate}
\item the vertices of $P$ contain $X$, 
\item if we add vertices to the edges of $P$ to get a 
      new polygon $P'$, then there is quadrilateral    
      mesh that extends $P'$, 
\item
the edges of this mesh cover all the radial segments connecting
the origin to points of $X$,
\item every angle in the mesh
 is between $60^\circ$ and $120^\circ$, except for any angles less 
than $60^\circ$   at the orgin formed by the
radial segments corresponding to $X$;
such angles remain undivided in the mesh.
\end{enumerate}
This mesh  actually 
conforms to the PSLG consisting of the closed radial 
segments connecting the origin to the points of $X$.
We abuse notation slightly by saying it conforms
to $X$.
We will prove in Section \ref{conforming sinks} that

\begin{lemma} \label{sector conform sink}
Given any $d$-tuple $X$ on the unit circle, 
there is a cyclic polygon $P$ with $O(d)$ vertices that 
is a sink conforming  to $X$.  
If $M$ points are added to $P$,
then the corresponding nice  conforming mesh has 
at most $O(M d + d^2   )$ elements.
For any $\theta>0$ the number of mesh elements
that are not $\theta$-nice is $O(Md + d/\theta)$.
If at most $O(K)$ extra points 
are added to each boundary edge of the sink, 
then at most $O(Kd/\theta)$ of the quadrilaterals 
are not $2 \theta$-nice.
\end{lemma}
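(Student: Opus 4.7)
The plan is to build $P$ and its initial mesh as a radial lamination of the unit disk, using the quadrilateral sinks of Lemma \ref{quad sinks lemma 1} as elementary building blocks. First refine $X$ to a $d'$-tuple $X'$, with $d'=O(d)$, by recursively bisecting any arc of $X$ whose angle exceeds $60^\circ$, and take $P$ to be the polygon inscribed in the unit circle with vertices $X'$. Every sector cut off by the $d'$ radii then has apex angle $\alpha_i\leq 60^\circ$ at the origin, so any trapezoid stacked radially inside a sector has corner angles $90^\circ\pm\alpha_i/2$, all in the nice range $[60^\circ,120^\circ]$.

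Next pick a common radial subdivision $r_0<r_1<\cdots<r_m=1$ (the same radii across all sectors, so the mesh conforms along the $d'$ shared rays) and mesh each sector by the $m$ trapezoidal rings between successive radii. A geometric spacing with $r_{k+1}/r_k$ a fixed constant greater than $1$ and $r_0$ of order $e^{-d}$ gives $m=O(d)$, so the initial count is $O(d'm)=O(d^2)$ trapezoids, matching the $M=0$ case of the lemma. I expect the main technical obstacle to lie in the innermost region inside radius $r_0$: it has to be meshed so that the $d'$ radial rays reach the origin with the prescribed angles $\alpha_i$, with angles matching across adjacent sectors at each vertex on the ring $|z|=r_0$. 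A naive single wedge per sector leads to the cyclic system $\beta_i^R+\beta_{i+1}^L=180^\circ-(\alpha_i+\alpha_{i+1})/2$ combined with $\alpha_i+\beta_i^L+\gamma_i+\beta_i^R=360^\circ$ inside each wedge, which when summed cyclically forces $\sum_i\gamma_i=180^\circ d'$, incompatible with $\gamma_i\leq 120^\circ$. I would resolve this by inserting one extra transition layer of $O(d')$ central quadrilaterals between the innermost ring and the wedge layer, to furnish enough extra degrees of freedom to solve the cyclic system with every new angle in $[60^\circ,120^\circ]$.

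For the sink property, apply Lemma \ref{quad sinks lemma 1} to each outer trapezoid (the one bordering an edge of $P$), adding $O(1)$ vertices to make it a sink (geometric radial spacing gives bounded eccentricity). When $M$ new vertices are added to $\partial P$, they fall on outer edges of these sinks; by the single-pair-of-opposite-sides clause each affected sink absorbs its $M_j$ extra points in $O(1+M_j)$ new quadrilaterals without propagating vertices into the interior. Summing over the at most $d$ affected sinks yields the bound $O(d^2+M+d)\subseteq O(Md+d^2)$. For the $\theta$-niceness count, trapezoids in sectors with $\alpha_i\leq 2\theta$ are automatically $\theta$-nice, and there are at most $O(1/\theta)$ sectors with $\alpha_i>2\theta$, each contributing $O(d)$ trapezoids, which together with the $O(d)$ transition-layer quadrilaterals gives a total of $O(d/\theta)$ non-$\theta$-nice elements. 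The final boundary claim follows by the analogous $\theta$-niceness counting inside each outer sink after the $K$ added points are absorbed by the sink remesh.
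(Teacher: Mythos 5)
Your overall plan — a radial/annular lamination around the origin, sinks placed along the outer ring to absorb added boundary points, and a specially constructed central mesh to handle the prescribed angles at the vertex — is the same shape as the paper's argument. But two of the claims that carry the weight of the proof do not hold as stated.

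\textbf{The central mesh is not actually constructed.} You correctly flag the region inside $r_0$ as "the main technical obstacle" and correctly derive that a single wedge per sector is overconstrained by a cyclic angle identity. But the resolution — "inserting one extra transition layer of $O(d')$ central quadrilaterals ... to furnish enough extra degrees of freedom" — is an assertion, not a construction, and it is not clear one layer suffices. This is precisely the hard part. The paper devotes Sections \ref{sector sec}--\ref{conforming meshes} to building the \emph{layered sector mesh} (Lemma \ref{protect}) from two explicit components, the standard sector mesh and the standard truncated sector mesh, each verified by explicit trigonometric angle bounds. The key structural trick there is a multi-layer dyadic bisection: each successive truncated-sector layer bisects every sector angle, so arbitrarily many points can be produced on the outer ring while the radial boundary vertices sit at fixed radii (powers of $2$) independent of the sector angle, which is what makes adjacent sectors conform along shared rays. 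Nothing in your one-line fix reproduces the two properties the paper needs: that the outer boundary of the central patch has a prescribed (power-of-two per sector) vertex count, and that the mesh covers all the radii of $X$.

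\textbf{The outer trapezoids do not have bounded eccentricity.} Your refinement $X\to X'$ bisects arcs larger than $60^\circ$ but does nothing to arcs that are already small. If $X$ has a pair of points at angular distance $\alpha_i\ll 1$, that sector survives into $X'$. With a geometric radial spacing $r_{k+1}/r_k = c > 1$, a trapezoid in that sector between radii $r_k$ and $r_{k+1}$ has radial depth $\approx (c-1)r_k$ and circumferential width $\approx r_k\alpha_i$, so its eccentricity is $\Theta(1/\alpha_i)$, unbounded. Consequently Lemma \ref{quad sinks lemma 1} requires $\Theta(1/\alpha_i)$ sink vertices, not $O(1)$, and the total sink-vertex count is not controlled by $d$ or $d^2$. (This also undermines the $\theta$-niceness count, since those sink remeshings can produce many non-nice elements.) The paper sidesteps this by \emph{not} asking a single quadrilateral to span from $\tfrac12 P$ to $P$ in a thin sector: each $Q_j$ has one side on $\tfrac12 P$ and its fourth side is placed in the interior of the annulus so that $Q_j$ has bounded eccentricity regardless of the sector angle; the remainder of the annular region is filled by propagating the corners of the $Q_j$'s, and it is this propagation (at most $O(d)$ corners, each path through $O(d)$ sectors) that accounts for the $O(d^2)$ term. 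Fixing your construction would mean either choosing radial depths that scale with $\alpha_i$ (which breaks conformity across sectors since the $r_k$'s would no longer be common) or adopting something very like the paper's bounded-eccentricity-then-propagate scheme.
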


This lemma  is the device that we will use in the proof of
Theorem \ref{Quad Mesh}  to ``protect'' small
angles in the original  PSLG, i.e., to prevent these
angles from being subdivided.
When the vertex $v$ is on the boundary of a face of 
the PSLG, then instead of using the whole mesh we 
will only use the part that lies inside the face (a
sector at $v$).
This lemma is also one of two places where the worst
case $O(n^2)$ estimate comes from; the other is 
Theorem \ref{quad mesh lemma}. 


\section{The regular octagon is a sink} \label{inscribed polygon}

In this section we prove Lemma \ref{octagon is sink}. We start with:

\begin{lemma} \label{annulus quad lemma} 
Suppose $\theta_1, \theta_2, \theta_3, \theta_4 
\in [0^\circ,360^\circ]$, that $\theta_1 < \theta_3$, 
$\theta_2 < \theta_4$, and that 
$$|\theta_1-\theta_3|, |\theta_2-\theta_4| < 45^\circ, \quad
|\theta_1-\theta_2|, |\theta_3-\theta_4| <  1^\circ.$$
Then the quadrilateral with vertices 
$$ 
z_1 = e^{i \theta_1} = \cos(\theta_1) +i \sin(\theta_1),  \qquad 
z_2 = 2 e^{i \theta_2}=2( \cos(\theta_2) +i \sin(\theta_2)),
$$
$$
z_3 =  e^{i\theta_3} = \cos(\theta_3) +i \sin(\theta_3), \qquad 
z_4 = 2 e^{i \theta_4}= 2(\cos(\theta_4) +i \sin(\theta_4)),
$$
has all its interior angles between $60^\circ$ and 
$120^\circ$.
(Note that $z_1,z_3$ are both on $ \circle$ with $z_1$ 
clockwise from $z_3$, and $z_2, z_4$ are on $2 \circle$ 
with  $z_2$ clockwise from $z_4$; see Figure \ref{AnnulusQuad}). 
\end{lemma}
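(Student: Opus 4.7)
The plan is to compute each of the four interior angles as an explicit function of $\theta_1,\theta_2,\theta_3,\theta_4$ and verify directly that each lies in $[60^\circ,120^\circ]$. Since angles are rotation-invariant I would first assume $\theta_1=0$. The hypotheses place $z_1,z_3$ on $\circle$ and $z_2,z_4$ on $2\circle$, with $z_2$ nearly radially outside $z_1$ and $z_4$ nearly radially outside $z_3$. The convex hull of these four points is therefore a quadrilateral whose four edges are the inner chord $z_1z_3$, the outer chord $z_2z_4$, and the two nearly radial edges $z_1z_2$ and $z_3z_4$; any alternative pairing (e.g.~the one suggested by the literal listing $z_1,z_2,z_3,z_4$) would yield a self-intersecting figure, so the quadrilateral intended is the convex one.

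The second step is to compute the argument of each edge at each of its endpoints. The chord directions are immediate: $z_3-z_1=2\sin(\alpha/2)\,e^{i[(\theta_1+\theta_3)/2+90^\circ]}$ with $\alpha:=\theta_3-\theta_1$, and similarly for $z_2z_4$ with $\beta:=\theta_4-\theta_2$. For the nearly radial edges I would use the factorization
\[
z_2-z_1 = e^{i\theta_1}\bigl(2e^{i\delta_1}-1\bigr), \qquad \delta_1:=\theta_2-\theta_1,
\]
which gives $\arg(z_2-z_1)=\theta_1+\arctan\!\bigl(\tfrac{2\sin\delta_1}{2\cos\delta_1-1}\bigr)$. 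A first-order Taylor expansion shows this equals $\theta_1+2\delta_1+O(\delta_1^3)$, so for $|\delta_1|<1^\circ$ the edge $z_1z_2$ deviates from the outward radial direction at $z_1$ by less than $2^\circ$. The companion factorization $z_1-z_2 = -e^{i\theta_2}(2-e^{-i\delta_1})$ shows that the same edge, viewed from $z_2$, deviates from the inward radial direction by only $\delta_1+O(\delta_1^3)<1^\circ$ (the deviation is halved because the outer circle has twice the arclength per radian). The analogous estimates apply to $z_3z_4$ with $\delta_2:=\theta_4-\theta_3$.

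Assembling these directions and taking differences gives
\[
\angle z_1 = 90^\circ+\tfrac{\alpha}{2}-\eta_1,\qquad \angle z_3 = 90^\circ+\tfrac{\alpha}{2}+\eta_2,
\]
\[
\angle z_2 = 90^\circ-\tfrac{\beta}{2}+\eta_1',\qquad \angle z_4 = 90^\circ-\tfrac{\beta}{2}-\eta_2',
\]
with $|\eta_i|<2^\circ$ and $|\eta_i'|<1^\circ$. Since $\alpha,\beta\in(0,45^\circ)$, the two inner-circle angles lie in $(88^\circ,114.5^\circ)$ and the two outer-circle angles lie in $(66.5^\circ,91^\circ)$, both intervals strictly inside $[60^\circ,120^\circ]$. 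As a sanity check, the four expressions sum to $360^\circ$ in view of the identity $\alpha-\beta=\delta_1-\delta_2$ which follows from the definitions.

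The only real difficulty is bookkeeping: keeping track of which orientation gives the interior (rather than reflex) angle, attaching the signs of the error terms $\eta_i,\eta_i'$ to the correct vertices, and making sure the various $\arctan$'s are taken in the correct quadrant. No geometric subtlety is required, because the hypotheses $\alpha,\beta<45^\circ$ and $|\delta_1|,|\delta_2|<1^\circ$ leave a comfortable margin of several degrees on each side of the $[60^\circ,120^\circ]$ window.
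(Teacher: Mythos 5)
Your proof is correct and takes essentially the same approach as the paper: a direct trigonometric computation that decomposes each interior angle into a chord-direction contribution ($90^\circ \pm \alpha/2$ or $90^\circ \pm \beta/2$, coming from the isosceles triangle with apex at the origin) plus a small near-radial deviation bounded via the $|\theta_1-\theta_2|,|\theta_3-\theta_4|<1^\circ$ hypothesis. The only stylistic difference is that you carry out the computation uniformly for all four corners via the complex factorization $z_2-z_1=e^{i\theta_1}(2e^{i\delta_1}-1)$ and a Taylor expansion, whereas the paper works out the corner at $z_1$ in detail (using the isosceles triangle and a perpendicular-projection estimate for $\alpha\le\theta_2-\theta_1$) and then appeals to symmetry for the remaining three corners.
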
 

\begin{proof}
This is a straightforward trigonometry calculation. We
carry it out in detail for the corner located at $z_1$; 
the other three corners are very similar, and are left 
to the reader.
The situation is illustrated in Figure \ref{AnnulusQuad}. 
Consider 
the segments  $[z_1, z_3]$ and $[z_1, z_2]$.
The angle formed by these two sides at $z_1$   is  the 
sum or difference of the angles that  each of these sides makes 
with the radial line  $L_1$ from the origin through $z_1$.
Set $\tau = \theta_3 - \theta_1$. Then considering 
the isosceles triangle formed by $0, z_1, z_3$, we 
see that  the angle between $[z_1, z_3]$ and $L_1$ 
is   between $90^\circ$ and $90^\circ + \tau/2$
 (see Figure \ref{AnnulusQuad}). 
Since $\tau \leq 45^\circ$, this  is between 
$90^\circ$ and  $112.5^\circ$.

\begin{figure}[htbp]
\centerline{
	\includegraphics[height=3.5in]{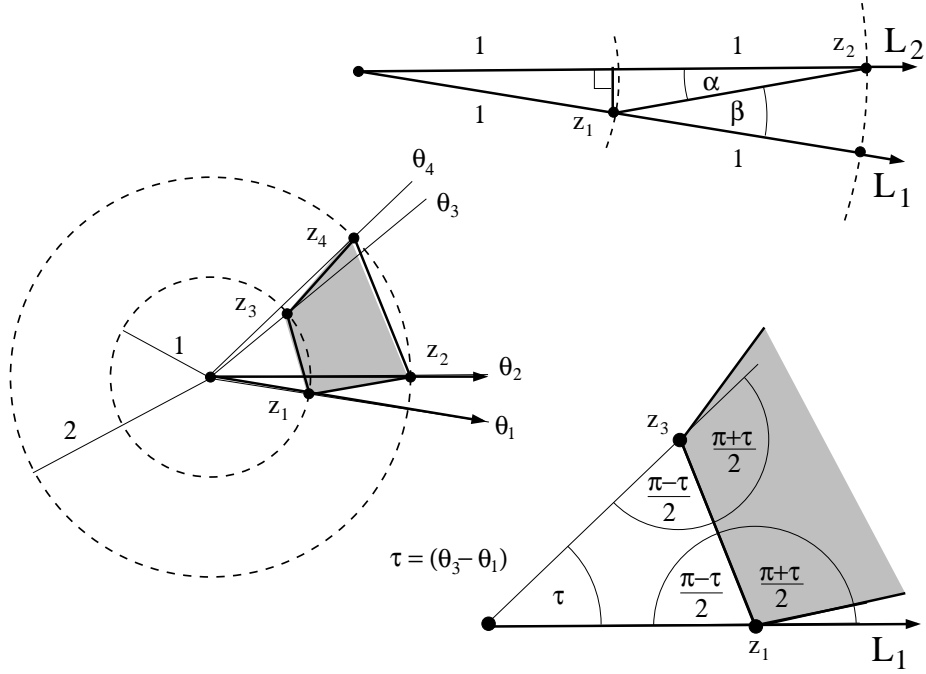}
	}
\caption{ \label{AnnulusQuad}
The diagram in the proof of Lemma \ref{annulus quad lemma}. The quadrilateral 
formed by the points  $z_1,z_2, z_3, z_4$ 
has all four angles between $60^\circ$ and $120^\circ$ if 
$|\theta_1-\theta_3|, |\theta_2-\theta_4|$ are both smaller 
than $45^\circ$ and $|\theta_1-\theta_2|$, $|\theta_3-\theta_4|$
are both small enough, say less than $1^\circ$ (these numbers 
are not sharp). 
}
\end{figure}

On the other hand,  if $\beta$ denotes 
the angle between $[z_1, z_2]$ and
$L_1$ (as  labeled in Figure \ref{AnnulusQuad}) 
and $\alpha$ denotes the angle formed by 
$[z_1, z_2]$ and the ray $L_2$ from  $0$ through $z_2$, 
then $\beta = (\theta_2 - \theta_1) + \alpha$. 
We claim $\alpha \leq \theta_2-\theta_1$. 
To prove this, consider the segment perpendicular 
to $[0, z_2]$  through $z_1$. This intersects 
$[0, z_2]$ closer to $0$ than to $z_2$  (see Figure 
\ref{AnnulusQuad}) and hence $\tan(\alpha) 
\leq \tan(\theta_2- \theta_1)$, giving the claimed
inequality.  Thus 
$$\beta =|\theta_2-\theta_1| + \alpha \leq 
    2|\theta_2-\theta_1| \leq 2^\circ. $$
This proves the angle bounds at $z_1$ with room 
to spare.
The argument for $z_3$ is identical and the arguments
for $z_2$, $z_4$ are  almost the same (we only need to 
estimate $\alpha$, not $\beta$).
\end{proof}

\begin{lemma} \label{homotopy}
Suppose $\theta>0$ 
and that  ${\bf z} =\{z_1, \dots, z_n\}$ and $ {\bf w}=
\{w_1, \dots, w_n\}$ are
both $(n,\theta)$-tuples  on the unit circle (in particular, 
they are both ordered on the circle in the same direction).
Given any $\tau >0$  there is an integer $t$ (depending on $\tau$, 
but not on ${\bf z}$ or ${\bf w}$) and 
  a sequence of $(n,\theta)$-tuples
$\{z_1^k, \dots, z_n^k\}_{k=1}^t$  so that 
$z_j^0 = z_j$ and $z_j^t = w_j$ for $j=1,\dots n$ and so that 
$ |z_j^k-z_j^{k+1}| < \tau$ for all $k=0, \dots , t-1$ and 
$0=1, \dots n$.   
In other words, we can discretely deform ${\bf z}$ into ${\bf w}$
through a sequence of $(n,\theta)$-tuples 
that move individual points by less than $\tau $ at each step.
If $z_j = w_j$ for some $j$, then  the points $z_j^k$, $k=1,\dots
t$  are all the same.
\end{lemma}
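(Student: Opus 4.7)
The plan is to lift both tuples to angular coordinates in $\reals$, interpolate linearly there, and then discretize; the gap bounds stay in $(0,\theta]$ automatically because they evolve as convex combinations.

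First I would choose lifts: write $z_j = e^{i\alpha_j}$ with $\alpha_1 < \alpha_2 < \cdots < \alpha_n < \alpha_1 + 2\pi$, and analogously $w_j = e^{i\beta_j}$. The $(n,\theta)$-tuple hypothesis translates into $\alpha_{j+1}-\alpha_j \in (0,\theta]$ for $j < n$ and $2\pi - (\alpha_n-\alpha_1) \in (0,\theta]$, with the same bounds on the $\beta_j$. To accommodate the fixed-point condition, if some index $j_0$ has $z_{j_0} = w_{j_0}$, I would anchor the lift by setting $\beta_{j_0} = \alpha_{j_0}$; this determines all remaining $\beta_j$ uniquely via the ordering. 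Otherwise, just choose any $\beta_1 \in [\alpha_1, \alpha_1 + 2\pi)$. Either way, one checks $L := \max_j|\beta_j - \alpha_j| \leq 4\pi$, a bound independent of ${\bf z}$ and ${\bf w}$.

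Next, set $\gamma_j(s) = (1-s)\alpha_j + s\beta_j$, $t = \lceil 4\pi/\tau\rceil$, and $z_j^k = e^{i\gamma_j(k/t)}$. The key algebraic fact is that
\[
\gamma_{j+1}(s) - \gamma_j(s) = (1-s)(\alpha_{j+1}-\alpha_j) + s(\beta_{j+1}-\beta_j)
\]
is a convex combination of values in $(0,\theta]$, hence lies in $(0,\theta]$; the wrap-around gap $2\pi - (\gamma_n(s) - \gamma_1(s))$ is similarly a convex combination of values in $(0,\theta]$. So every intermediate tuple $\{e^{i\gamma_j(s)}\}$ is an $(n,\theta)$-tuple. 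Using $|e^{i\alpha} - e^{i\beta}| \leq |\alpha - \beta|$, one obtains $|z_j^k - z_j^{k+1}| \leq |\beta_j - \alpha_j|/t \leq L/t < \tau$, with $t$ depending only on $\tau$.

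The only delicate step will be the fixed-point consistency: when two indices $j_0$ and $j_1$ both satisfy $z_j = w_j$ and I have anchored $\beta_{j_0} = \alpha_{j_0}$, I need $\beta_{j_1} = \alpha_{j_1}$ as well, so that $z_{j_1}^k$ stays constant in $k$. This follows because the orderings force both $\alpha_{j_1}$ and $\beta_{j_1}$ into a common open interval of length $2\pi$ around $\alpha_{j_0}$ (for $j_1 > j_0$ both lie in $(\alpha_{j_0}, \alpha_{j_0}+2\pi)$, and for $j_1 < j_0$ both lie in $(\alpha_{j_0}-2\pi, \alpha_{j_0})$), so their difference is an integer multiple of $2\pi$ with absolute value strictly less than $2\pi$, hence zero. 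All remaining steps are routine convex-combination arithmetic.
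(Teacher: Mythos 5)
Your proof is correct and uses the same strategy as the paper: lift to angular coordinates, interpolate the angles linearly, and discretize. You do add a useful detail beyond the paper's argument: you anchor the lift of ${\bf w}$ at a common fixed index $j_0$ (when one exists) and observe, via the $2\pi$-length window around $\alpha_{j_0}$, that every other common fixed index must also have matching lifts; the paper instead anchors both lifts at index $1$, and as written its construction would not prevent a fixed point $z_{j_0}=w_{j_0}$ with $j_0>1$ from making a full turn during the interpolation when $z_1\neq w_1$, so your version actually repairs a small gap in the paper's treatment of the fixed-point clause.
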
 

\begin{proof}
Choose arguments $\theta_1, \phi_1  \in (0^\circ, 360^\circ]$  for 
 $z_1$ and $w_1$. Then choose arguments $\{\theta_j\}_2^n$ 
 for $z_2, \dots z_n$ 
in $(\theta_1, \theta_1 + 360^\circ)$; note that the arguments 
increase since we assume the $n$-tuple ${\bf z}$ is ordered
in the counter-clockwise direction. Similarly choose arguments
$\{\phi_j\}_2^n \subset (\phi_1,\phi_1+360^\circ)$ for the elements of 
${\bf w}$.
Now use linear interpolation on the angles, i.e., 
$$ \theta_j^k = (1-\frac {k }{t}) \theta_j +  \frac {k }{t} \phi_j,
\qquad k=0,\dots, t, $$   
to define points $ z_j^k = \exp( i \theta_j^k)$.
See Figure \ref{ArgInterpolate}. 
Since both $n$-tuples have the same orderings, none of the 
lines in Figure \ref{ArgInterpolate}
cross each other, hence these intermediate points define $n$-tuples 
with the correct ordering. Moreover, 
the  $0 \leq  \arg(z_k^j)-\arg(z_k^{j+1}) < 360^\circ/t$ 
is as small as we wish if $t$ is large enough. In particular, 
it is smaller than $\tau$ if $t$ is large enough, depending 
only on $\tau$.
\end{proof}

\begin{figure}[htbp]
\centerline{
 \includegraphics[height=2.2in]{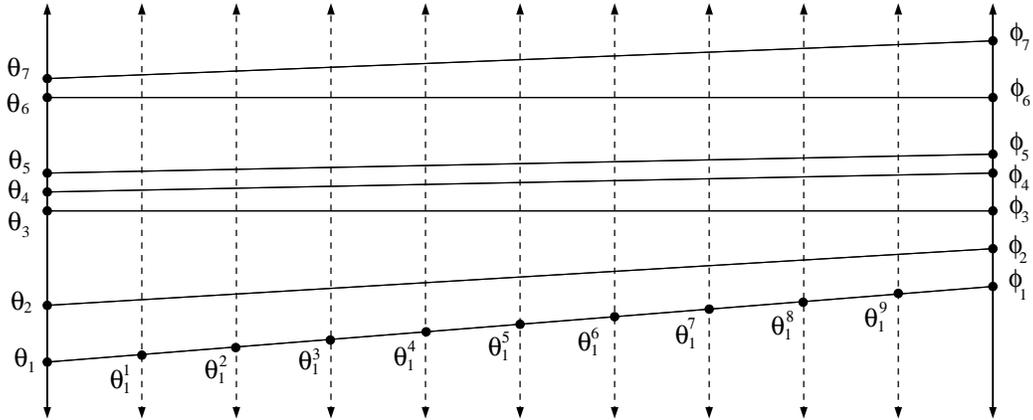}
 }
\caption{ \label{ArgInterpolate}
Linear interpolation of the arguments ($n=7$, $t=10$).
}
\end{figure}

\begin{cor} \label{annular mesh angles}
Suppose  that  ${\bf z} =\{z_1, \dots, z_n\}$ and $ {\bf w}=
\{w_1, \dots, w_n\}$ are $(n, 45^\circ)$-tuples. There is 
an integer $s$ so that the annular  region  bounded between $P_{\bf z}$ 
and $2^{-s} P_{\bf w}$ can be meshed with $ n s$ quadrilaterals
using only angles between $60^\circ$ and $120^\circ$.
(Recall that $P_{\bf z}$ is a cyclic polygon inscribed
on the unit circle $\circle$, and $2^{-s}P_{\bf w}$  
is a cyclic polygon inscribed on $2^{-s}\circle = \{|z| =
2^{-s}\}$.)
If $z_j = w_j$ for some $j$, then the mesh can be chosen 
so that its edges cover the radial segment from $z_j$ to 
$2^{-s} w_j$.
\end{cor}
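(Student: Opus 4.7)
My plan is to stack $s$ thin annular meshes on top of each other, where each mesh is built directly from Lemma \ref{annulus quad lemma} and the sequence of intermediate $n$-tuples comes from Lemma \ref{homotopy}.

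First, I would fix a threshold $\tau > 0$ small enough that whenever two points on the unit circle satisfy $|z - w| < \tau$, their arguments differ by less than $1^\circ$ (so $\tau = 1/100$ is safe). Applying Lemma \ref{homotopy} with this $\tau$ (and with the ambient parameter $\theta = 45^\circ$) yields an integer $s$ and a sequence $\mathbf{z}^0 = \mathbf{z}, \mathbf{z}^1, \ldots, \mathbf{z}^s = \mathbf{w}$ of $(n, 45^\circ)$-tuples on $\circle$ such that $|z_j^k - z_j^{k+1}| < \tau$, and such that if $z_j = w_j$ then $z_j^k = z_j$ for every $k$.

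Next, I would put the $k$-th tuple on the circle of radius $2^{-k}$: define the polygon $P_k = 2^{-k} P_{\mathbf{z}^k}$, so $P_0 = P_{\mathbf{z}}$ and $P_s = 2^{-s} P_{\mathbf{w}}$. The annulus between $P_k$ and $P_{k+1}$ lies in a dyadic annular shell of inner/outer radii $2^{-(k+1)}$ and $2^{-k}$. Partition this shell into $n$ quadrilaterals by joining each outer vertex $2^{-k} z_j^k$ to the corresponding inner vertex $2^{-(k+1)} z_j^{k+1}$, giving a quadrilateral with vertices
\[ 2^{-k} z_j^k,\quad 2^{-(k+1)} z_j^{k+1},\quad 2^{-(k+1)} z_{j+1}^{k+1},\quad 2^{-k} z_{j+1}^k. \]
Rescaling by $2^{k+1}$ puts this quadrilateral into the form considered in Lemma \ref{annulus quad lemma}: two vertices on the unit circle ($z_j^{k+1}, z_{j+1}^{k+1}$) and two on $2\circle$ ($2 z_j^k, 2 z_{j+1}^k$). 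The same-circle angular spacings are bounded by $45^\circ$ because $\mathbf{z}^k$ and $\mathbf{z}^{k+1}$ are $(n, 45^\circ)$-tuples, and the radial offsets $|\arg(z_j^k) - \arg(z_j^{k+1})|$ are below $1^\circ$ by our choice of $\tau$. Lemma \ref{annulus quad lemma} then guarantees all interior angles of this quadrilateral lie in $[60^\circ, 120^\circ]$. Scaling is angle-preserving, so the original quadrilaterals in the annular shell satisfy the same bound. Taking the union over $k = 0, 1, \dots, s - 1$ produces a quadrilateral mesh of the region between $P_{\mathbf{z}}$ and $2^{-s} P_{\mathbf{w}}$ with exactly $ns$ elements and all angles in the required range.

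Finally, if $z_j = w_j$ for some index $j$, then Lemma \ref{homotopy} forces $z_j^k = z_j$ for every $k$, so the vertex $2^{-k} z_j^k = 2^{-k} z_j$ lies on the radial ray from $0$ through $z_j$ for each $k$. The mesh edge joining $2^{-k} z_j$ to $2^{-(k+1)} z_j$ is therefore a piece of that radial ray, and concatenating these edges for $k = 0, \dots, s-1$ yields the full radial segment from $z_j$ to $2^{-s} w_j$. I do not foresee a genuine obstacle in this argument; the only point requiring care is calibrating $\tau$ in Lemma \ref{homotopy} to match the $1^\circ$ radial-offset hypothesis of Lemma \ref{annulus quad lemma}, after which everything is assembly.
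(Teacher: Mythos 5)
Your proposal is correct and takes essentially the same approach as the paper, which simply declares the corollary ``immediate from'' Lemma \ref{annulus quad lemma} and Lemma \ref{homotopy} and points to Figure \ref{Spider1}; your writeup fills in exactly the assembly the paper leaves implicit (stacking dyadic annular shells indexed by the homotopy steps, rescaling each shell by $2^{k+1}$ to invoke the annulus lemma, and using the fixed-point clause of Lemma \ref{homotopy} to get the radial edge).
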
 

\begin{proof}
This is immediate from Lemma \ref{annulus quad lemma}
and Lemma \ref{homotopy}.
See Figure \ref{Spider1}.
\end{proof}

\begin{figure}[htbp]
\centerline{
 \includegraphics[height=3.0in]{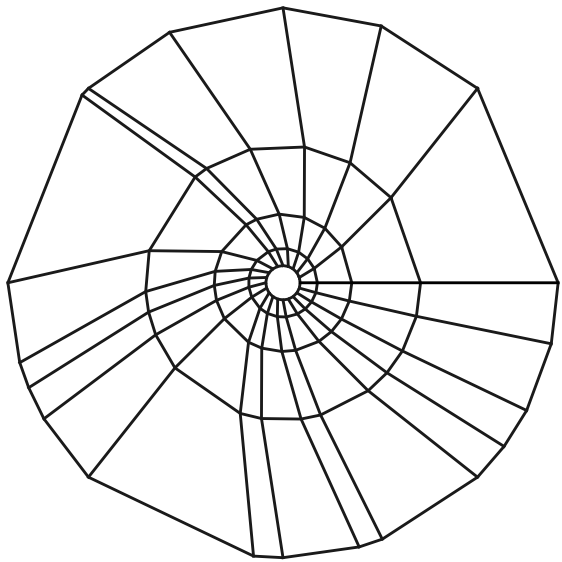}
 }
\caption{ \label{Spider1}
A quadrilateral mesh of the annular region bounded 
by two $20$-gons inscribed on concentric circles (the 
outer  $20$-tuple is a regular octagon with 12 randomly 
chosen points added; the inner one is a regular icosagon).
This illustrates Corollary \ref{annular mesh angles}
}
\end{figure}


\begin{lemma} \label{rounding lemma} 
Suppose ${\bf z} = \{ z_1, \dots , z_n \}$
is an $(n, 45^\circ)$-tuple on the unit circle and 
suppose $V$ is a set of $t$   points on 
the edges of the  inscribed polygon $P_{\bf z}$
 corresponding to ${\bf z}$. Let 
${\bf w}$ be the $(n+t)$-tuple consisting of ${\bf z}$ 
and the radial projection of $V$ onto the unit circle.
Let $A$ be the annular region bounded between 
$P_{\bf z}$ and $\frac 12 P_{\bf w}$ and consider the 
quadrilateral mesh formed by joining each point of 
$\frac 12 {\bf w} \subset \{|z|=1/2\}$
 to its radial projection onto $P_{\bf z}$.
Then all angles used in this mesh are between 
$67.5^\circ$ and $112.5^\circ$. 
\end{lemma}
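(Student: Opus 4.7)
The plan is to analyze one quadrilateral at a time and bound its interior angles at the outer vertices on $P_{\bf z}$ and at the inner vertices on $\frac 12 \circle$ separately. Each quadrilateral has the form $v_i v_{i+1} \frac 12 w_{i+1} \frac 12 w_i$, where $v_i, v_{i+1}$ are consecutive vertices of ${\bf z} \cup V$ on $P_{\bf z}$, the two ``inner edges'' are the radial segments $[v_i, \frac 12 w_i]$ and $[v_{i+1}, \frac 12 w_{i+1}]$ (radial by construction, since $w_j$ is the radial projection of $v_j$), and the last edge is the chord of $\frac 12 \circle$ between the consecutive inner vertices. Since ${\bf w}$ is obtained from the $(n, 45^\circ)$-tuple ${\bf z}$ by inserting additional points on the unit circle inside its existing arcs, ${\bf w}$ is itself a $45^\circ$-tuple, so every gap $\Delta\phi = \arg(w_{i+1}) - \arg(w_i)$ between consecutive inner vertices is at most $45^\circ$.

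The interior angles at the inner vertices are the easier case. At $\frac 12 w_i$, one edge of a given adjacent quadrilateral is the chord to $\frac 12 w_{i+1}$ (or $\frac 12 w_{i-1}$) on $\frac 12 \circle$, and the other is the outward radial to $v_i$. The base angles of the isosceles triangle formed by $0$, $\frac 12 w_i$, $\frac 12 w_{i+1}$ each equal $90^\circ - \Delta\phi/2$, which is the angle the chord makes with the \emph{inward} radial; hence the chord makes angle $90^\circ + \Delta\phi/2$ with the outward radial. This is the interior angle at $\frac 12 w_i$, and it lies in $[90^\circ, 112.5^\circ]$.

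The main work is at the outer vertex $v_i$, which can lie anywhere on its chord of $P_{\bf z}$. After rotating, parametrize the containing chord $[z_j, z_{j+1}]$ as $v(t) = (1-t) z_j + t z_{j+1}$, $t \in [0,1]$, with $z_j = e^{-i\theta/2}$, $z_{j+1} = e^{i\theta/2}$, and $\theta \leq 45^\circ$. The direction of the chord is constant along its length, while $\arg(v(t))$ moves monotonically in $[-\theta/2, \theta/2]$, so the inward radial at $v(t)$ (argument $180^\circ + \arg(v(t))$) deviates from the perpendicular to the chord by at most $\theta/2$. A direct angle chase then shows that the two adjacent quadrilaterals meeting at $v_i$ have interior angles $90^\circ + \arg(v(t))$ and $90^\circ - \arg(v(t))$, both lying in $[90^\circ - \theta/2,\, 90^\circ + \theta/2] \subseteq [67.5^\circ, 112.5^\circ]$.

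The only subtlety is distinguishing each candidate interior angle from its $360^\circ$-complement. Since all four candidate angles of each quadrilateral are at most $112.5^\circ < 180^\circ$ and must sum to $360^\circ$, the quadrilateral is automatically convex and the small computed values are genuinely its interior angles. Combining the inner-vertex and outer-vertex bounds then yields the claimed range $[67.5^\circ, 112.5^\circ]$ for all angles in the mesh.
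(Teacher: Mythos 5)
Your proof is correct and follows essentially the same route as the paper: bound the inner-vertex angles using the base angles of the isosceles triangle with apex at the origin and vertex angle $\le 45^\circ$, and bound the outer-vertex angles by comparing the chord direction with the radial at the point of interest. The paper disposes of the outer vertices with a ``clearly'' and a figure caption, whereas you spell out the parametrization along the chord and the resulting $90^\circ \pm \arg(v(t))$ formula; this is more explicit but not a different argument.
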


\begin{proof}
Each  angle on $ \frac 12  P_{\bf w}$  is the supplement 
of  the  base
angle of an isosceles triangle with vertex at the origin and 
vertex angle $\leq 45^\circ$, hence is between
$90^\circ$ and $112.5^\circ$.  
Each angle on $P_{\bf z}$ is clearly   between $67.5^\circ$ 
and $112.5^\circ$. See Figure \ref{Rounding1}.
(The figure shows ${\bf z}$ as a regular octagon, but this need
not be the case in general.)
\end{proof} 

\begin{figure}[htbp]
\centerline{
 \includegraphics[height=2.0in]{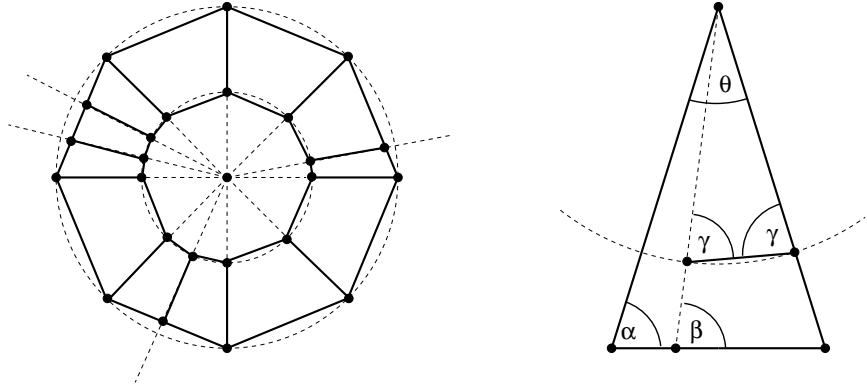}
 }
\caption{ \label{Rounding1}
The annular region between a octagon with four 
extra points added to the sides  and the ``rounded''
version of this polygon. Since $\theta \leq 45^\circ$, 
we  have $\alpha = 90^\circ - \theta/2$ is 
between $67.5^\circ$ and $90^\circ$. Similarly 
 $  67.5^\circ \leq  \alpha \leq \beta \leq 180^\circ - \alpha
 \leq 112.5^\circ$. Finally, $ 90^\circ \geq\gamma 
\geq 90^\circ- \theta/2 \geq 67.5^\circ$ as claimed in 
the text. 
}
\end{figure}

\begin{lemma} \label{easy mesh}
Suppose  $P$  is a regular octagon, centered 
at the origin with  four of its vertices on the
coordinate axes.
Nicely mesh the interior using twelve quadrilaterals, 
as shown in Figure \ref{OctoMesh3}.
 Place $t$ distinct points
on the sides of octagon that touch the  $x$-axis, and 
connect each of these to another boundary point by 
a propagation path in the given mesh of the octagon.
 The resulting refinement is a nice mesh with  at most
 $12+4t$ elements.
\end{lemma}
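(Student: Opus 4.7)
The plan is to verify two separate things about the refinement of the 12-quadrilateral mesh produced by the $t$ propagation paths: that every face of the refinement is a nice quadrilateral, and that the total count of faces is bounded by $12+4t$.

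Niceness is essentially immediate from Lemma \ref{split quad}. The 12 original quadrilaterals of the displayed mesh of $P$ are nice by construction, and if one adds the propagation paths one at a time, then at every step one is cutting a nice quadrilateral by a single propagation segment, which by Lemma \ref{split quad} yields two nice sub-quadrilaterals. Iterating over all segments of all $t$ paths, every face in the final mesh has all interior angles in $[60^\circ,120^\circ]$.

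For the count, the crucial geometric claim is that in the specific mesh displayed in Figure \ref{OctoMesh3}, every propagation path starting on one of the four sides touching the $x$-axis exits the octagon after passing through at most four of the twelve original quadrilaterals. The four starting sides and the whole mesh are symmetric under reflection through both coordinate axes, so it suffices to trace the path from a single such side, whereupon a direct inspection shows the path terminates after crossing at most four original quadrilaterals. A second observation is that all $t$ paths emanating from these four sides share a common ``horizontal'' orientation in the sense that whenever two of them enter a common original quadrilateral, they do so through the same pair of opposite edges. Consequently, if $k$ of the paths pass through a single original quadrilateral, they slice it by $k$ parallel non-crossing propagation segments into $k+1$ nice sub-quadrilaterals, contributing exactly $k$ new faces. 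Summing over all $t$ paths, the total increase in the number of faces equals $\sum_i \ell_i$, where $\ell_i\le 4$ is the number of original quadrilaterals visited by the $i$th path. This yields at most $4t$ new faces, for a grand total of at most $12+4t$.

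The main obstacle is confirming, against the explicit Figure \ref{OctoMesh3}, the structural statement that every propagation path from an $x$-axis-adjacent side visits at most four original quadrilaterals and that any two such paths sharing a quadrilateral traverse it through the same pair of opposite edges. This is a finite, mesh-specific verification rather than a conceptual one; once Figure \ref{OctoMesh3} is fixed, the check is immediate by symmetry and inspection, and the rest of the argument is bookkeeping on top of Lemma \ref{split quad}.
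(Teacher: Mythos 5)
Your proof is correct and follows essentially the same route as the paper's: niceness via Lemma \ref{split quad}, non-intersection of the propagation paths (which you correctly explain by noting that paths sharing an original quadrilateral enter through the same pair of opposite sides, hence are non-crossing), and the bound of four original quadrilaterals per path, giving $\leq 4t$ new faces. The paper's version is just more terse; your write-up fleshes out the same bookkeeping, so the only minor imprecision is the phrase ``summing over all $t$ paths'' (some of the $t$ points could pair up as endpoints of a common path, reducing the number of distinct paths below $t$, though this only improves the bound).
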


\begin{proof}
The niceness  is immediate from Lemma \ref{split quad}.
The complexity bound follows from the fact that the 
propagation paths don't intersect unless they agree 
(i.e., two of the points are endpoints of the same path),
and each passes through four mesh elements.
See Figure \ref{OctoMesh3}. 
\end{proof}

\begin{figure}[htbp]
\centerline{
 \includegraphics[height=1.6in]{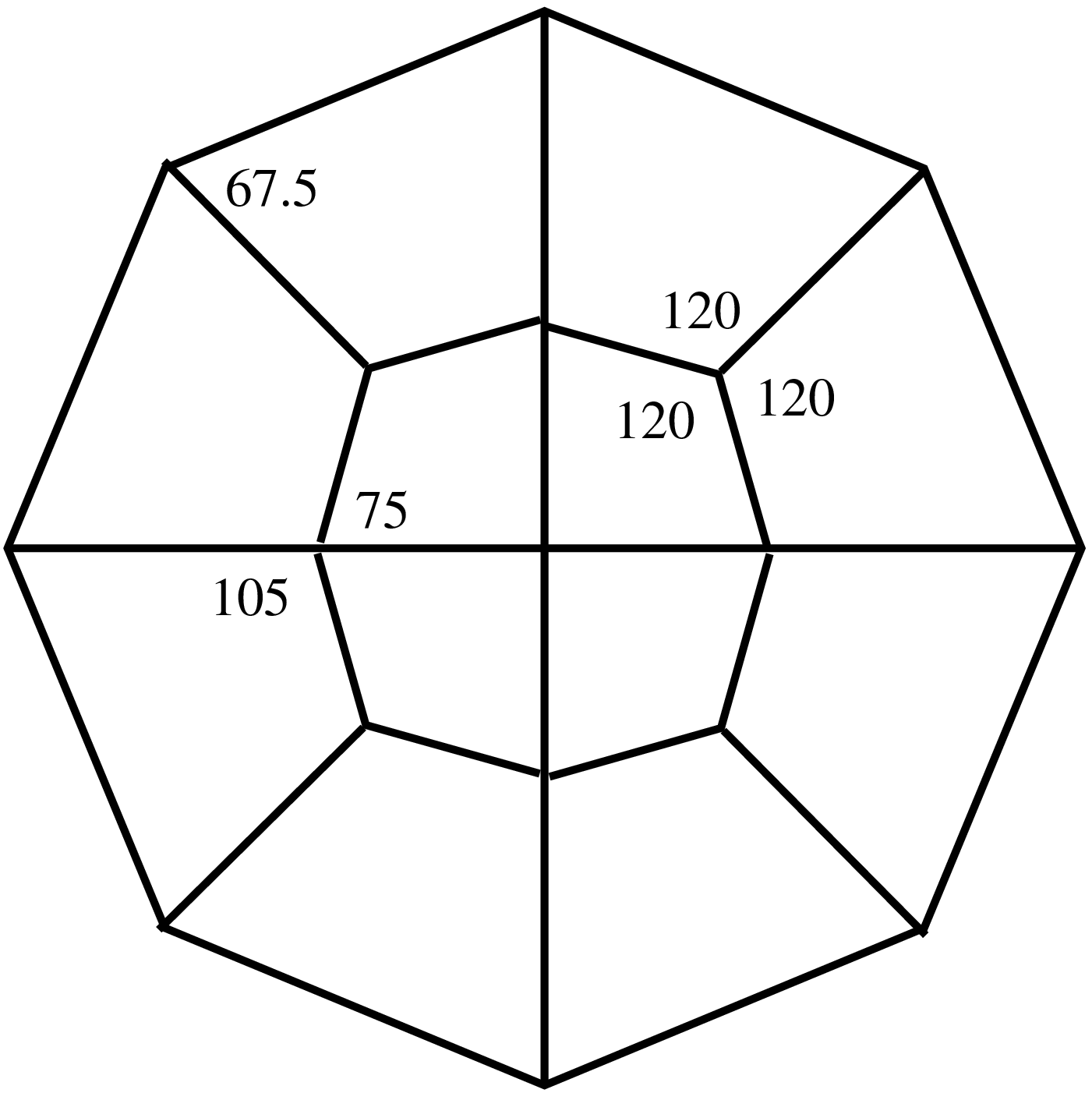}
$\hphantom{x}$
 \includegraphics[height=1.6in]{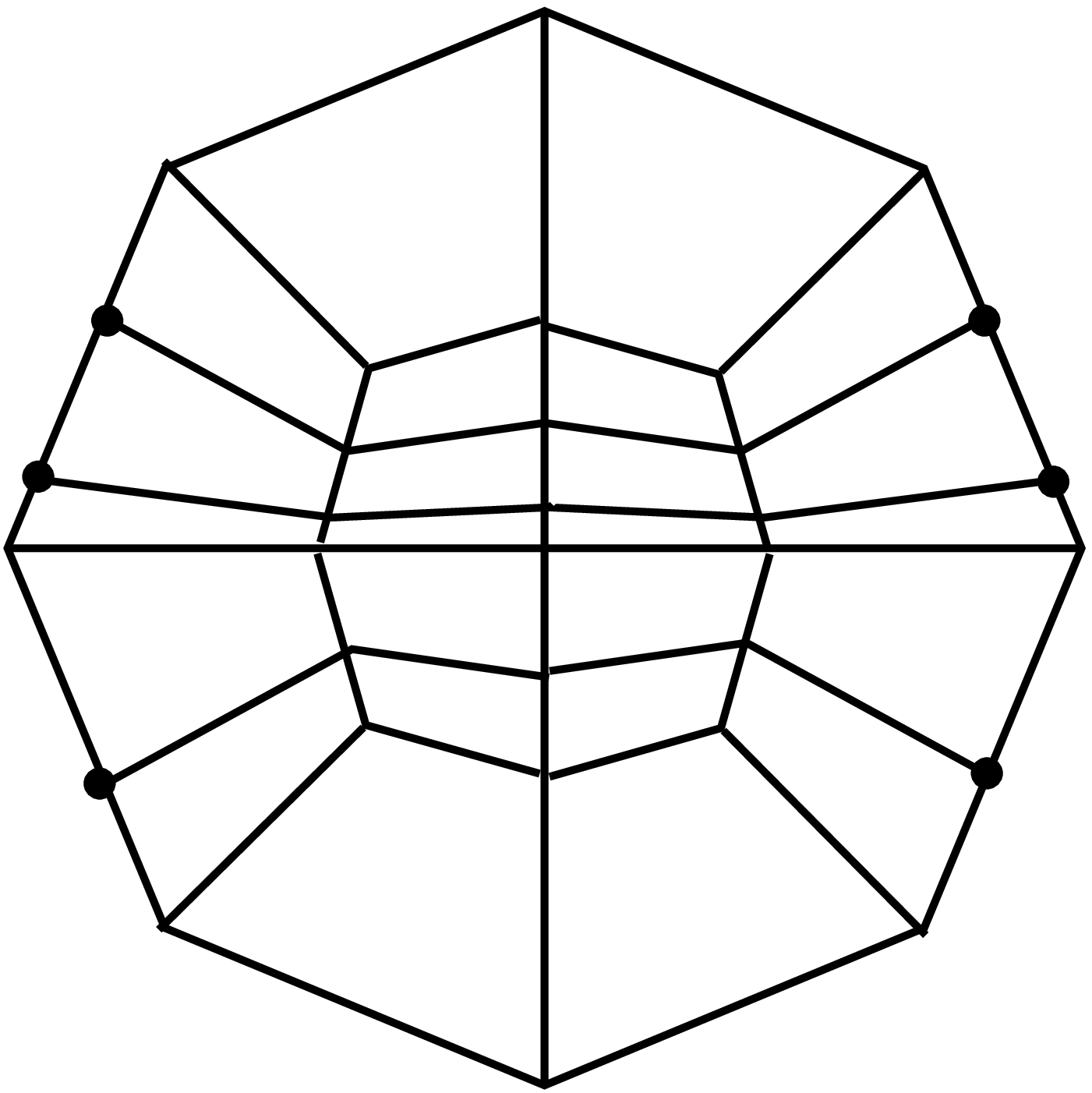}
$\hphantom{x}$
 \includegraphics[height=1.6in]{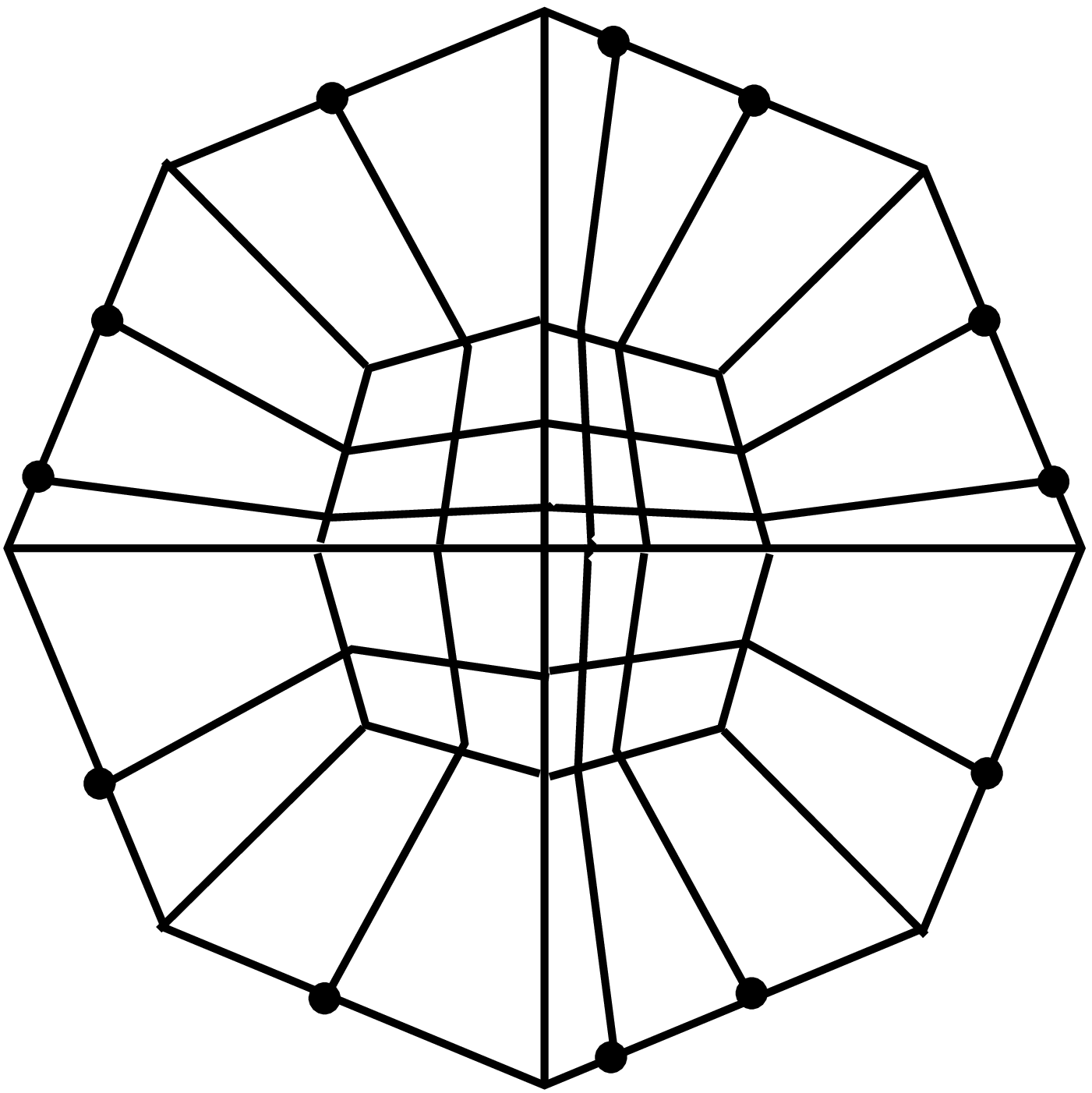}
 }
\caption{ \label{OctoMesh3}
On the left is the  ``standard mesh'' of a
regular octagon. In the center is a mesh 
formed by non-intersecting propagation paths; 
given $t$ extra points, $4t$ new elements are
created. 
In general, $\simeq t^2$ new elements might be formed
(right side) by intersecting propagation paths.
}
\end{figure}

\begin{lemma}  \label{cyclic mesh} 
Suppose ${\bf z}$ is an $(n,45^\circ)$-tuple on the 
unit circle and $P$ is the cyclic polygon with these 
vertices. Suppose we  form a new polygon $P'$ by adding 
$k$ distinct points
to $P$ (all distinct from ${\bf z}$), so that $n+k$ is even.
Then $P'$ has a nice mesh with $O(n+k)$ elements. 
In particular,  this holds if  $P$ is a regular octagon, 
so this result contains Lemma \ref{octagon is sink} as
a special case.
\end{lemma}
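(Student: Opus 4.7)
I would prove Lemma \ref{cyclic mesh} by peeling off nested nice annular meshes until the problem is reduced to an application of Lemma \ref{easy mesh}. Write $N = n + k$. First, apply Lemma \ref{rounding lemma} to ${\bf z}$ and $V$: this meshes the annulus between $P'$ and the cyclic polygon $\tfrac{1}{2}P_{\bf w}$ nicely with $N$ quadrilaterals, where ${\bf w} = {\bf z} \cup \pi_\circle(V)$ is an $(N,45^\circ)$-tuple. So it suffices to nicely mesh the interior of $\tfrac{1}{2}P_{\bf w}$ with $O(N)$ further elements; after rescaling, this becomes the subproblem of nicely meshing the interior of a cyclic polygon $P_{\bf u}$ inscribed in the unit circle whose vertex set is an $(N,45^\circ)$-tuple with $N$ even.

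For this subproblem, take as target the $(N,45^\circ)$-tuple ${\bf v}$ consisting of the eight regular-octagon angles $\{45^\circ k : 0 \le k \le 7\}$ together with $(N-8)/2$ evenly spaced points in each of the two arcs $(0^\circ,45^\circ)$ and $(180^\circ,225^\circ)$; this is legitimate because $N - 8$ is even. Corollary \ref{annular mesh angles} then meshes the annulus between $P_{\bf u}$ and $2^{-s}P_{\bf v}$ nicely with $Ns = O(N)$ quadrilaterals. Rescaling $P_{\bf v}$ back to the unit circle, let $O = \tfrac{1}{2}P_{{\bf v}_o}$ be the regular octagon inscribed in the circle of radius $1/2$; for each extra $v \in {\bf v}_e$ let $v^*$ be the radial projection of $v$ onto the chord edge of $O$ lying in its angular sector. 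The set $V^* = \{v^* : v \in {\bf v}_e\}$ then consists of $N - 8$ points lying on just two of the sides of $O$ that touch the $x$-axis.

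Mesh the annular region between $P_{\bf v}$ and $O \cup V^*$ by joining each vertex of $P_{\bf v}$ to its counterpart on $O \cup V^*$ by a radial segment; this produces $N$ quadrilaterals. The six \emph{purely octagonal} ones (those whose four corners are two consecutive octagonal vertices of ${\bf v}$ on the unit circle together with the corresponding two octagonal vertices of $O$) are isosceles trapezoids with angles $(67.5^\circ, 67.5^\circ, 112.5^\circ, 112.5^\circ)$; the remaining $N - 6$ are controlled by a short case analysis in the style of the proofs of Lemmas \ref{annulus quad lemma} and \ref{rounding lemma}. Finally, Lemma \ref{easy mesh} applied with $t = N - 8$ meshes the interior of $O \cup V^*$ using at most $12 + 4(N-8) = O(N)$ nice quadrilaterals. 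Summing the four stages gives $O(N) = O(n+k)$ quadrilaterals in total, all with angles in $[60^\circ, 120^\circ]$.

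The main technical obstacle is the angle verification in the penultimate (annular) step, which runs in the direction opposite to Lemma \ref{rounding lemma}: here the cyclic polygon is on the outside and the chord polygon (with extras) is on the inside. The estimates are parallel to those in Lemma \ref{rounding lemma} but must be redone afresh; in particular, one must confirm that a quadrilateral in which an outer extra lies very close to an octagonal vertex stays nice (such a quadrilateral is thin but approximately rectangular, so its angles remain in $[60^\circ, 120^\circ]$), and that each mixed octagonal-extra quadrilateral likewise falls in this range via a direct trigonometric computation of the type carried out in Lemma \ref{annulus quad lemma}.
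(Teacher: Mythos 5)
Your overall strategy is the same as the paper's: peel off concentric annular layers until the problem reduces to the octagon mesh of Lemma~\ref{easy mesh}.  The first two stages (Lemma~\ref{rounding lemma} from $P'$ to the cyclic polygon at radius $1/2$, then Corollary~\ref{annular mesh angles} down to $2^{-s}P_{\bf v}$) coincide with the paper, modulo notation.  The paper then jumps directly to ``nicely mesh $2^{-s-1}P_{\bf w}$ by cutting the standard mesh of the octagon by the $t$ propagation paths,'' which conflates the cyclic polygon $P_{\bf w}$ (all vertices on the circle) with the octagon-plus-chord-points that Lemma~\ref{easy mesh} actually meshes.  Your fourth stage --- a ``reverse rounding'' annulus joining $P_{\bf v}$ radially to $O\cup V^*$ --- makes explicit the transition between these two different polygons, and your angle estimate for the purely octagonal trapezoids ($67.5^\circ, 112.5^\circ$) checks out.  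So your version is, if anything, slightly more careful in spelling out where the cyclic polygon hands off to the chord polygon.

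One genuine gap in your write-up, though: you need the $N-8$ points of $V^*$, sitting on a single pair of opposite octagon edges, to be \emph{exactly} the union of the endpoints of non-crossing propagation paths in the standard octagon mesh.  Choosing ``evenly spaced'' points in the arcs $(0^\circ,45^\circ)$ and $(180^\circ,225^\circ)$ does not automatically guarantee this: the propagation map from one edge to the opposite edge is a composition of four affine maps across the twelve standard pieces and need not send evenly spaced points to evenly spaced points, and if the propagation endpoints fail to coincide with the projected points $V^*$, then Lemma~\ref{easy mesh} produces a mesh of a polygon with \emph{more} boundary vertices than $O\cup V^*$, breaking the bijection with the inner boundary of your reverse-rounding annulus.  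The paper avoids this entirely by \emph{defining} the auxiliary tuple ${\bf w}$ to consist of the eight octagon vertices together with the $2t$ endpoints of $t$ prescribed non-crossing propagation paths, rather than fixing the extra points a priori; you should do the same when defining ${\bf v}$.  (Relatedly, with that choice the correct parameter is $t=(N-8)/2$ paths, not $t=N-8$; the $12+4t$ count in Lemma~\ref{easy mesh} refers to the number of paths, since the $2t$ endpoints come in pairs joined by one path each.  This doesn't affect the $O(N)$ conclusion but is worth getting right.)
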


\begin{proof}
Let ${\bf z''}$ be the radial projection of ${\bf z}$ onto 
$\frac 12 \circle = \{ |z| = \frac 12\}$  and let 
$P''$ be the cyclic polygon with these vertices.
Clearly $n \geq 8$, so $n+k \geq 8$. Hence $n+k = n+2t$ for 
some $t \geq 0$. 
Let ${\bf w}$ be some $(8+2t)$-tuple as  described  in 
Lemma \ref{easy mesh}:
eight points are evenly spaced on the unit circle and
the other $2t$ are the endpoints of $t$   non-intersecting
propagation paths for the standard mesh of the
octagon shown in Figure \ref{OctoMesh3}.
Let $s$ be the constant from Corollary 
\ref{annular mesh angles}. 
Let $P_{\bf w}$ be the cyclic polygon on $\circle$ 
with vertices ${\bf w}$ and let $2^{-s-1} P_{\bf w}$ be
the dilation of this by a factor of $2^{-s-1}$ (so it 
is a cyclic polygon on $2^{-s-1} \circle$). 

 Nicely mesh 
$2^{-s-1} P_{\bf w}$ by cutting the standard mesh 
of the octagon by the $t$ propagation paths 
as in Lemma \ref{easy mesh}. 
Let ${\bf z''}$ be the rounded version of ${\bf z'}$.
Apply Lemma \ref{annular mesh angles}   to nicely 
mesh the region between $2^{-s-1}P _{\bf w}$ and 
$\frac 12 P_{\bf z''}$. Then  use Lemma   \ref{rounding lemma}
to nicely mesh the region between $ \frac 12 P_{\bf z''}$
and $P'$. Thus the $P'$ has been nicely meshed 
without adding any extra boundary points, as claimed.
See Figure \ref{OctoMesh7} (but note that scales are distorted 
in this figure to make the smaller layers more visible).
\end{proof}

\begin{figure}[htbp]
\centerline{
 \includegraphics[height=3.0in]{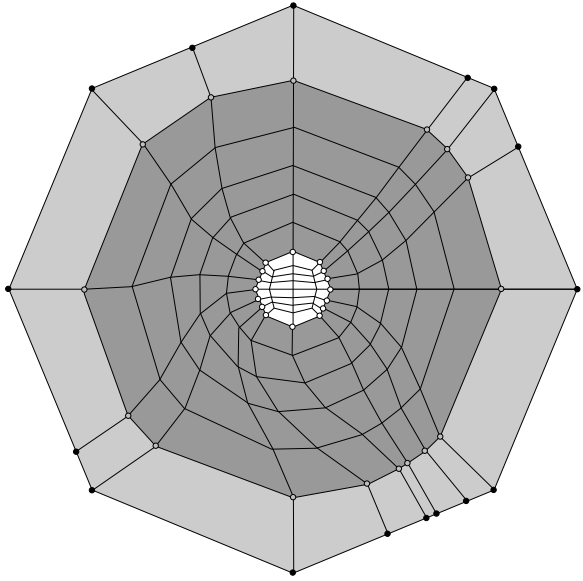}
 }
\caption{ \label{OctoMesh7}
This illustrates the proof of Lemma \ref{cyclic mesh}  
with $n=8, k=8, t=4, s=5$. The back dots are ${\bf z'}$, 
the gray dots are $\frac 12 {\bf z''}$ and the white dots 
are $2^{-s-1}{\bf w}$.  The light gray region is meshed 
by Lemma \ref{rounding lemma}, 
the dark gray by Corollary \ref{annular mesh angles} and the 
white by Lemma \ref{easy mesh}. These results ensure the 
angles in each region satisfy the desired bounds. 
Note that the cyclic polygons should be 
inscribed on  circles   with 
radii $ 1, \frac 12 , \dots, \frac 1{32}$,  but to make
the combinatorics  more visible we have made the circles 
larger than this (which distorts the angles).
}
\end{figure}

\section{Square shaped sinks} \label{square sinks}

Given octagonal sinks, 
we  can  build sinks with other shapes.
In this section we show that there are ``square shaped sinks'':

\begin{lemma}
We can  add 24 vertices to the sides of 
a square $S$ to make it into a sink.
If we then add $2M>0$ extra vertices to the 
sides of the sink, the interior can be 
re-meshed using $O(M^2)$  nice quadrilaterals.
If  at most $M_1 \geq 1$ points are   added to  one 
pair of opposite sides of $S$, and at most $M_2 \geq 1$ are added
points are added 
to the other  pair of opposite 
sides, then the number of quadrilaterals 
needed is $O(M_1 M_2)$. 
\end{lemma}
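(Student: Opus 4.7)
The plan is to inscribe an octagonal sink inside $S$ and use propagation to route any future boundary vertices into it. Specifically, center a regular octagon $O$ at the center of $S$, oriented so that four of its sides are parallel to the sides of $S$. Divide the annular region $S \setminus O$ into $O(1)$ nice quadrilaterals (each with all angles in $[60^\circ, 120^\circ]$) by connecting the vertices of $O$ by straight segments to carefully chosen points on $\partial S$. With the octagon's size and the connection points chosen correctly, exactly 24 new vertices are placed on $\partial S$, distributed symmetrically (for instance, 6 per side). Finally, apply Lemma \ref{octagon is sink} to nicely mesh the interior of $O$ so that $O$ itself is a sink, adding no new vertices on $\partial O$. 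The resulting 28-gon $S'$ is our candidate sink.

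To verify the sink property, suppose $2M$ extra vertices are added to the sides of $S'$ to form $S''$, with at most $M_1$ on one pair of opposite sides of $S$ and at most $M_2$ on the other. Propagate each new vertex perpendicular to its side: vertices on the left and right propagate horizontally, and those on the top and bottom propagate vertically. Within the outer annulus, the propagation paths remain disjoint inside each side's strip and terminate on $\partial O$; by Lemma \ref{split quad} the subdivisions of the outer quadrilaterals stay nice, contributing $O(M_1 + M_2)$ elements in the exterior. Inside $O$, the horizontal paths (entering from the left and right sides of $\partial O$) cross the vertical paths (entering from the top and bottom sides of $\partial O$), producing up to $M_1 M_2$ grid intersections and hence $O(M_1 M_2)$ cells. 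After a parity adjustment, using the halving trick of Corollary \ref{get even} if necessary, the number of propagation endpoints on each side of $\partial O$ is even, so the sink property of $O$ from Lemma \ref{octagon is sink} lets us nicely re-mesh the interior of $O$ with $O(M_1 + M_2)$ additional elements absorbing all propagation endpoints. The total element count is $O(M_1 M_2)$ when $M_1, M_2 \geq 1$, and $O(M^2)$ in general by AM--GM.

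The main obstacle is the geometric construction of the initial decomposition of $S \setminus O$ so that all angles are in $[60^\circ, 120^\circ]$; this is what pins down the precise number 24 of added boundary vertices. The four ``edge'' pieces of the annulus (between the axis-parallel sides of $O$ and $S$) are rectangles and pose no difficulty, but the four ``corner'' pieces (between the diagonal sides of $O$ and the corners of $S$) must be further subdivided to avoid the $45^\circ$ and $135^\circ$ angles that would otherwise appear; this is where most of the 24 added vertices are spent. A secondary technical point is parity: ensuring each side of $\partial O$ receives an even number of propagation endpoints so that the octagonal sink can absorb them. This is handled by a uniform doubling of boundary vertices when needed, which affects the final count only by a constant factor.
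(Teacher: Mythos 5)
The central unverified claim in your argument is that a single inscribed octagon $O$ can absorb every propagation path from $\partial S$. You assert this by saying new vertices ``propagate perpendicular to their side'' and ``terminate on $\partial O$,'' but the propagation direction is not something you get to choose -- it is dictated by the quadrilateral mesh of $S \setminus O$, which you never actually construct. The paper's own Figure \ref{SquareSink8} demonstrates that the natural single-octagon construction fails exactly here: some boundary points of $S$ propagate straight past the octagon and exit on the opposite side of $S$. The paper repairs this by placing \emph{two additional squares, each containing another octagon} (Figure \ref{SquareSink9}), giving three octagonal sinks total and forcing every path to terminate on one of them. Your one-octagon plan may or may not be salvageable, but as written it silently skips the step the paper spends most of its effort on, and the details you do give (all four edge pieces being rectangles, corner pieces ``further subdivided'') do not by themselves route corner paths onto the diagonal sides of $O$.

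There is also a bookkeeping error in your element count. You claim the propagation paths remain disjoint in the annulus, contributing only $O(M_1 + M_2)$ elements there, and that the $M_1 M_2$ crossings occur ``inside $O$.'' Both halves are wrong. Once $O$ is treated as a sink, nothing propagates into its interior -- you simply re-mesh it, which costs $O(M_1+M_2)$ elements by Lemma \ref{octagon is sink}. The quadratic cost $O(M_1 M_2)$ arises precisely because paths from adjacent sides of $S$ \emph{do} cross, and they cross in the corner pieces of the annulus before reaching any octagon; this is the content of the paper's Figure \ref{SquareSink11}. With your claimed counts ($O(M_1+M_2)$ for the annulus plus $O(M_1+M_2)$ for the re-meshed sink), the total would be $O(M_1+M_2)$, which contradicts your own stated conclusion of $O(M_1 M_2)$. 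To get the correct bound you must acknowledge the crossings in the corner regions of $S\setminus O$ and count them there.
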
 

\begin{proof}
We want to place a copy of a regular octagon  
inside a square in a certain way. 
Suppose we have a octagon centered at the origin, with
sides of length 1 and two sides parallel to each coordinate
axis. Let $a$ denote the origin. 
Let   $b= (\frac 12 + \frac 1{\sqrt{2}},\frac 12),$
and $c = (\frac 12 , \frac 12 + \frac 1{\sqrt{2}})$ 
be the  vertices of the octagon in the first 
quadrant.
 See Figure \ref{SquareSink6}.
 Now form an equilateral triangle $T$ with one
side parallel to the $x$-axis and passing through $b$,
the  vertex $e$ opposite this side on the same vertical line as $b$, 
and a side passing from $e$  through $c$ ($T$ is the 
 dashed triangle in 
Figure \ref{SquareSink6}). Let $d$ be the vertex of the 
triangle inside the octagon. Let $f$ be the third
vertex of the triangle. Let $S$ be the axis-parallel 
square centered at the origin and passing through $e$.

\begin{figure}[htbp]
\centerline{
 \includegraphics[height=2.5in]{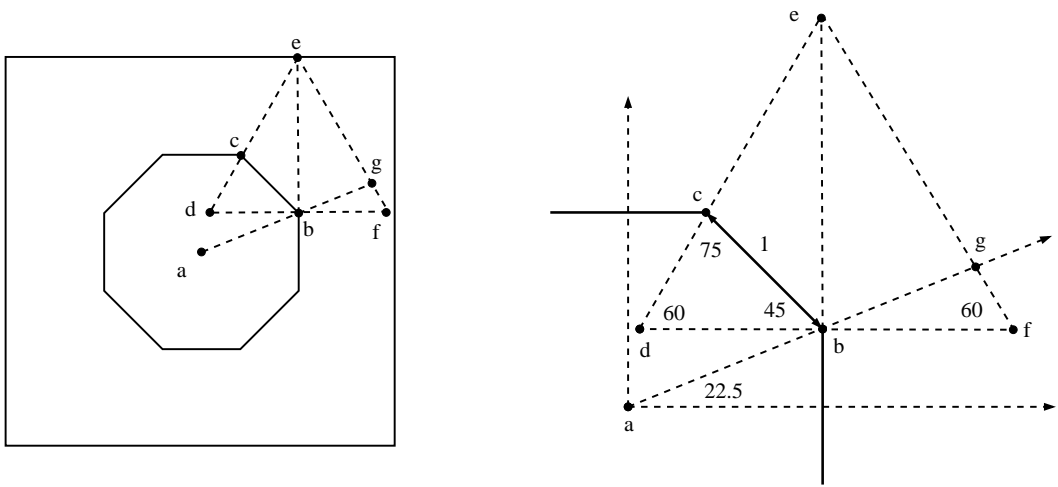}
}
\caption{ \label{SquareSink6}
Placing the octagon within a concentric square. 
}
\end{figure}

 The Law of Sines implies that 
$$ |f-b|=|d-b| = |b-c|  \frac {\sin 75^\circ}{\sin 60^\circ} = 1 \cdot 
 \frac {1+\sqrt{3}}
{\sqrt{6}} \approx 1.11536,$$
and hence
$$ |e-b| = |d-b| \tan 60^\circ = \frac {1+\sqrt{3}}{\sqrt{2}} 
\approx 1.93185.
$$
Therefore 
$$ e =( \frac 12 + \frac 1{\sqrt{2}},  \frac 12 + \frac {1+\sqrt{3}}{\sqrt{2}})
     \approx (1.20711 , 2.43185),
$$
and 
$$ f = (\frac 12 + \frac 1{\sqrt{2}} + \frac {1+\sqrt{3}}
{\sqrt{6}}  ,\frac 12 )  \approx (2.32246   , .5).
$$
Hence  $f$ is inside the square $S$. Therefore the line 
through $a$ and $b$ intersects the segment $[e,f]$ at a point 
$g$  strictly inside the square, as shown in Figure 
\ref{SquareSink6}.

We now mesh the region between the square and the octagon as 
in Figure \ref{SquareSink7}.

\begin{figure}[htbp]
\centerline{
 \includegraphics[height=2.2in]{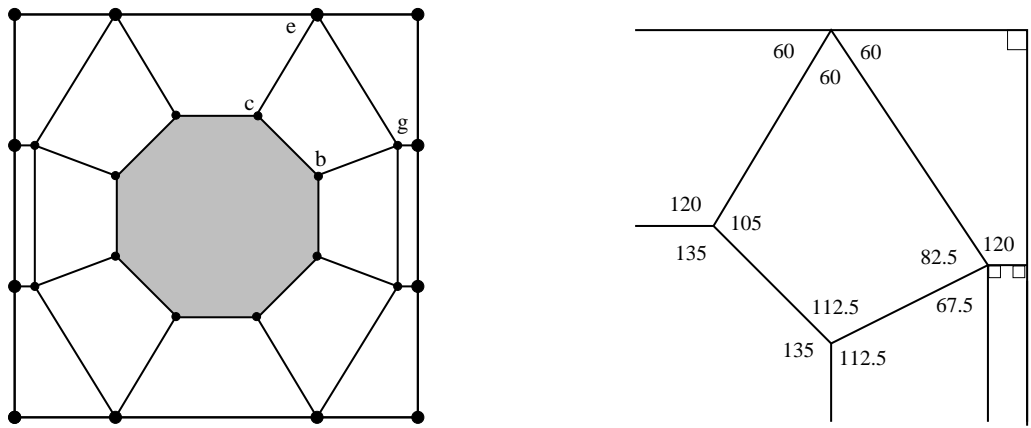}
}
\caption{ \label{SquareSink7}
The mesh of the region between the octagon and the square
using the points defined in the text. The mesh is symmetric 
with respect to the $x$ and $y$ axes, so it suffices the 
check the angles in the upper right corner, as done in 
the right side of the figure.
}
\end{figure}

Consider boundary points of the square that propagate through 
this mesh. See Figure \ref{SquareSink8}. Every point on 
the two vertical sides of the square hits the octagon
under propagation, and this is true for some points on the 
horizontal sides of the square, but there are some points 
that propagate to the other side of the square without hitting 
the octagon (e.g., the dashed line on the far right in 
Figure \ref{SquareSink8}). 

\begin{figure}[htbp]
\centerline{
 \includegraphics[height=2.0in]{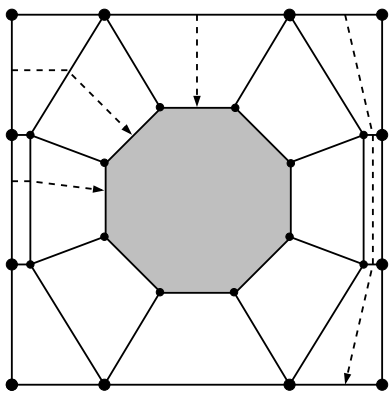}
}
\caption{ \label{SquareSink8}
Some boundary points propagate to the octagon, but other 
propagate past it.
}
\end{figure}

To fix this, we place two small squares inside $S$
 as shown in 
Figure \ref{SquareSink9}. Inside each square 
we place an octagon and mesh the region between the 
small square and the small octagon as shown in Figure 
\ref{SquareSink6}, but rotated by $90^\circ$, so that 
every propagation path hitting the small square from 
above or below propagates to the small octagon inside
it. Paths hitting the small square on the vertical 
sides  either propagate to the small octagon, or 
past it and then to the large octagon. In either case, 
every boundary point of the large square $S$ now 
propagates to one of the three octagons.

\begin{figure}[htbp]
\centerline{
 \includegraphics[height=1.8in]{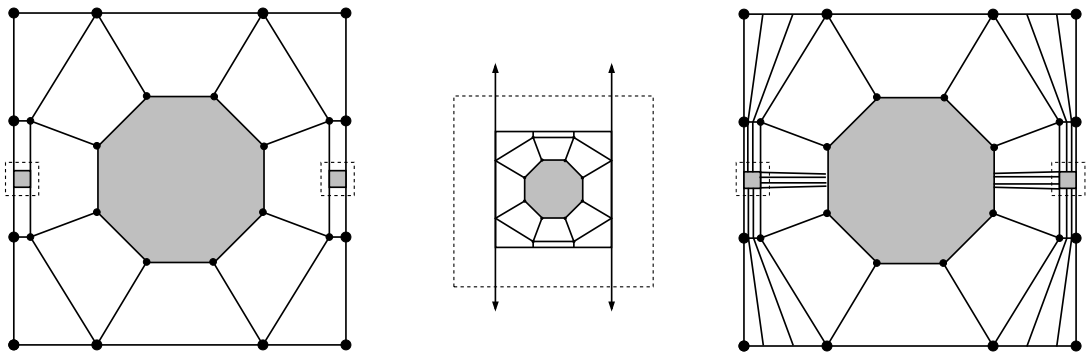}
}
\caption{ \label{SquareSink9}
Now all boundary points of the square propagate
to one of three octagons inside the square. After 
re-meshing to account for the new vertices on the 
small squares, the  mesh  has 28 vertices on 
the boundary of the large square.
}
\end{figure}

If we add an even number of new vertices to the boundary 
of the large square, we can propagate these until they 
each hit one of the three octagons. If each octagon 
gets an even number of points, then, since each octagon 
is a sink, we can re-mesh the interior of the octagons 
using these points and without adding any new points to 
the boundaries of the octagons.  If some octagon gets
an odd number of extra  points, then exactly two of them 
do. If these two are the central octagon and one of 
the smaller side octagons,  we  connect them  a propagation 
path, thus adding one more point to each boundary. 
If the two  side octagons have an odd number of extra 
points each, then we  connect them both to the central octagon 
by a propagation path; this  adds one point to each 
of the smaller octagons 
and adds two points  to the 
central octagon.
In either case,  all three octagons  now 
have an even number of boundary 
points and we can proceed as before.
This proves a square can be turned into a sink; counting 
points in Figure \ref{SquareSink9} shows that the sink has  seven edges
on each side of the original square, hence it has 28 vertices 
in all (including the corners of the square).

If we add $M>0$  points to the boundary of the square, and 
propagate these points until they hit one of the three 
octagons, then $M$ points are created on the boundaries 
of the octagons. After add extra paths (for parity) 
and re-mesh the octagons, $O(M)$ quadrilaterals are 
created inside the octagons. However, up to $O(M^2)$ 
elements might be formed between the octagons and the 
boundary of the square. For example, if $M_1, M_2 \geq 1$ points 
are added to two edges that are both adjacent to the same 
corner of the square $S$, then the every propagation path 
from one group crosses every path from the other group,
generating $ M_1 \cdot M_2$ quadrilaterals. 
See Figure \ref{SquareSink11}.  This can 
only happen when extra points are added to adjacent sides
of $S$; if all the extra points are added to one side, or 
to a single pair of opposite sides of $S$, then only 
$O(M)$ quadrilaterals are needed to re-mesh the interior 
of $S$.  
\end{proof} 

\begin{figure}[htbp]
\centerline{
 \includegraphics[height=2.0in]{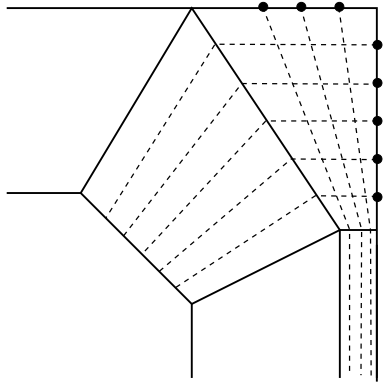}
}
\caption{ \label{SquareSink11}
$M_1, M_2$ extra points placed on two edges   
 adjacent to a corner of  a square sink
can create $M_1 \cdot  M_2$ mesh elements.
}
\end{figure}

\section{Rectangular Sinks} 
 \label{rectangular sinks}

Let $\lceil x \rceil = \min\{n \in \integers: n \geq x\}$. 

\begin{lemma} \label{rect from square} 
For any rectangle $R$ we can add  at most $28 \cdot  
\lceil\ecc(R) \rceil$ vertices to 
the boundary and make $R$ into a sink.
If we then add $M_1 \geq 1$ extra vertices to one pair
of opposite sides,
and add $M_2 \geq 1$ extra vertices to the other 
pair of opposite sides, then the interior can be
re-meshed using $O(\ecc(R)+M_1 M_2)$  nice quadrilaterals.
\end{lemma}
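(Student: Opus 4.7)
The plan is to tile $R$ by $N = \lceil \ecc(R) \rceil$ nearly-square sub-rectangles and fit the square-sink construction from the previous lemma inside each of them. After rotation and scaling I may assume $R = [0,L]\times[0,1]$ with $L \geq 1$, so $N = \lceil L \rceil$. If $N=1$ then $R$ is a square and the previous lemma applies directly. Otherwise I partition $R$ by $N-1$ evenly spaced vertical segments into sub-rectangles $R_1,\ldots,R_N$, each of dimensions $(L/N)\times 1$ and hence of eccentricity $N/L \in [1, N/(N-1)] \subseteq [1,2]$.

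Next I would construct a sink inside each $R_i$. The construction of the previous lemma uses an inscribed regular octagon and two auxiliary smaller square sinks; the angle bounds verified in that proof hold with strict slack, so the same combinatorial construction survives a mild horizontal stretching of the ambient square into a rectangle of eccentricity up to $2$, with explicit recomputation of the vertex positions. Alternatively, one can center a true square sink of side $L/N$ inside $R_i$ and mesh the two thin residual strips above and below by nice quadrilaterals using $O(1)$ further elements. In either variant, each sub-sink contributes at most $28$ vertices on $\partial R_i$. The internal vertical cuts between adjacent $R_i$ are interior to $R$, so the total number of vertices added to $\partial R$ is at most $28N = 28\lceil\ecc(R)\rceil$, giving the first assertion of the lemma.

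For the re-meshing bound, suppose $M_1 \geq 1$ vertices are added to the long sides of $R$ and $M_2 \geq 1$ to the short sides. Let $m_i$ be the number of long-side added vertices lying on $\partial R_i$, so $\sum_i m_i = M_1$, and let $M_2^L,\ M_2^R$ be the number of short-side added vertices on the left end and right end, so $M_2^L + M_2^R = M_2$. Parity of the added vertices on each $R_i$ can be forced by adding at most one extra propagation path per pair of consecutive sub-rectangles, at total cost $O(N)$. By the previous lemma, each sub-sink can then be re-meshed with $O(1 + m_i M_2^{(i)})$ nice quadrilaterals, where $M_2^{(i)}$ equals $M_2^L$, $M_2^R$, or $0$ according to whether $i=1$, $i=N$, or $2 \leq i \leq N-1$. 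Summing gives total cost $O(N + M_1 + M_2^L m_1 + M_2^R m_N) \leq O(\ecc(R) + M_1 M_2)$, as desired.

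The main obstacle is the first step of the second paragraph, namely verifying that the explicit geometric construction of the square-sink lemma (the regular octagon, intersecting equilateral triangles, and nested smaller octagons of Figures \ref{SquareSink6}--\ref{SquareSink9}) continues to produce a nice mesh under the required perturbation of the ambient square into a rectangle of eccentricity in $[1,2]$. The padding variant bypasses this verification at the cost of meshing two thin residual strips; since those strips are axis-aligned rectangles meshed by further axis-aligned rectangles, all angles remain exactly $90^\circ$ and only $O(1)$ additional quadrilaterals per sub-rectangle are introduced, so the asymptotic bounds are unaffected.
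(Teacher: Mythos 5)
There is a genuine gap, and it is located precisely at the point you yourself flag as the ``main obstacle.'' The paper's actual route is: (i) stretch the three-octagon square-sink construction to a rectangle by moving the two vertical sides of the square outward and enlarging the two small interior squares accordingly, and verify by explicit coordinate computation that the key intersection point $g$ stays inside the expanded shaded squares --- this works exactly for rectangles with eccentricity up to $A/\Im(e)\approx 1.58$; and then (ii) subdivide the $1\times r$ rectangle into at most $\lceil r\rceil$ pieces, \emph{not} of equal width, but chosen so that each piece has eccentricity in $[1,\sqrt 2]$, which is within the $1.58$ budget. Your subdivision into $N=\lceil L\rceil$ equal pieces only guarantees eccentricity in $[1,2)$: when $L$ is just above an integer, each piece has aspect ratio close to $2$, which exceeds the range the stretched construction can reach. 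So variant (a) of your argument does not close after recomputation --- the constant obtained is less than $2$, and the equal subdivision therefore needs to be replaced by the paper's unequal one (one can take off unit squares, leaving a final piece of length in $[1,2)$, and split that final piece in half only if its eccentricity exceeds $\sqrt 2$).

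Your variant (b), the ``padding'' alternative, has a more serious problem: it does not produce a sink at all. Centering a square sink in $R_i$ and meshing the two residual strips leaves those strips non-absorbing. Consider adding a single extra vertex on the left side of $R_i$ at a height inside a strip, paired with a single extra vertex somewhere else on $\partial R_i$. The first vertex's propagation path runs horizontally across the strip and exits on the right side of $R_i$ at a point that is not a vertex of $P'$, so the resulting mesh fails to extend $P'$. This is exactly the deficiency that the two small interior squares in the paper's Figure \ref{SquareSink9} are introduced to cure (they catch horizontal paths that would otherwise miss the central octagon); a bare strip reintroduces it. So the claim that the padding variant ``bypasses this verification'' is false --- the geometric adaptation (or some replacement for it that preserves the absorbing property) is unavoidable, and the paper does it by direct computation. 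The remainder of your argument (parity fixes between adjacent sub-rectangles, and the observation that only the two end sub-rectangles can receive $M_2$-type points so the bound is $O(\ecc(R)+M_1M_2)$) is in the right spirit and matches what the paper does implicitly, but it rests on the two steps above, both of which need to be repaired.
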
 

\begin{proof} 
The idea for this  proof was suggested by
one of the referees and 
modifies an earlier proof of the author.

We claim the construction of square sinks in the 
previous section can 
be adapted to  also handle rectangles with  
$ \ecc(R) \leq \sqrt{2}$.
See Figure \ref{SquareSink10}. This figure  
replicates the left side of Figure \ref{SquareSink9} and 
shows that if the two vertical sides of the square are moved 
outward and the small shaded squares are expanded 
accordingly, the same construction gives a rectangular sink.
This works as long as  the expanding  shaded squares do not 
cover the point  $g$. This holds as long 
as $\Re(p)-\Re(g) \leq 2 \Im(g)$ ($\Re$ and $\Im$ denote 
the real and imaginary parts of a complex number). 

\begin{figure}[htbp]
\centerline{
 \includegraphics[height=1.8in]{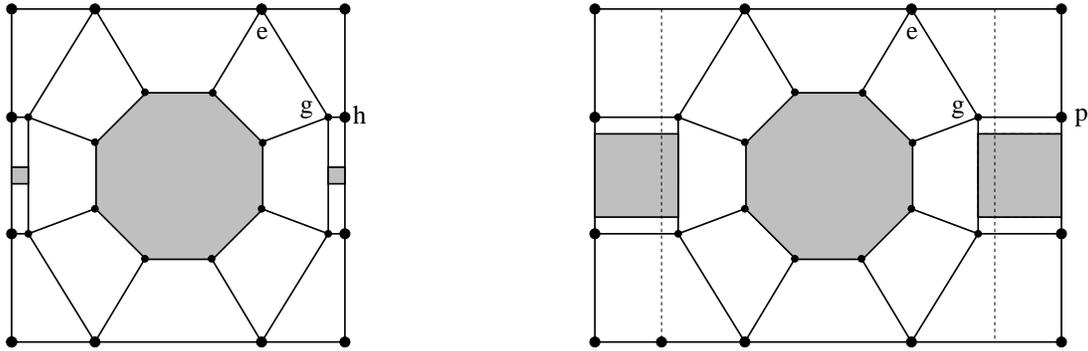}
 }
\caption{ \label{SquareSink10}
The construction for squares also works for rectangles 
with eccentricity sufficiently close to $1$.  The 
left side is the same as in Figure \ref{SquareSink9}. 
The right side is obtained by moving the vertical sides
of the big square outwards and expanding the small shaded 
squares as shown (we also expand the mesh inside these 
squares accordingly).
}
\end{figure}

Using the  formulas from Section \ref{square sinks}  we can 
show that 
 \begin{eqnarray*}
    |b-f| &=&  (1+\sqrt{3})/\sqrt{6}  \approx 1.11536, \\
    |b-g| &=& |b-f| \sin 60^\circ/\sin 97.5^\circ \approx .974261 ,\\
     \Re(g) &=& \Re(b) + |b-g| \cos 22.5^\circ \approx 2.10721,\\
     \Im(g) &=& \Im(b) + |b-g|\sin 22.5^\circ \approx .872833.
\end{eqnarray*} 
These estimates   imply
$A= \Re(g) + 2 \Im(g) \approx 3.85287$. 
Since $ \Im(e) \approx 2.43185$,  this says 
the construction works for rectangles with eccentricity  
up to $A/\Im(e)  \approx  1.58434$. 
This number is larger  that
$\sqrt{2}\approx 1.41421$, proving the claim.

It is not hard to see that 
a  $1 \times r$ rectangle can be sub-divided
into at most $\lceil r \rceil$  rectangles with eccentricity 
in $[1,\sqrt{2}]$ (it suffices to consider $1<r< 2$ and 
note that a $1\times \sqrt{2}$ rectangle can be split into 
two $ 1 \times \frac 1{\sqrt{2}} $ rectangles that also 
have eccentricity $\sqrt{2}$). 
 Therefore we can place 
a modified square sink in each sub-rectangle. 
When we add new boundary points to the 
large rectangle, 
every such point  is on the boundary 
of one of the sub-rectangles.  
If every  sub-rectangle gets  an even number of boundary points, 
we simply re-mesh all the  rectangular sinks.  
Otherwise, we can add points to the common sides 
of the sub-rectangles so that they all end up with 
an even number of new points (see Figure  \ref{Evenness2})
and then re-mesh. 

\begin{figure}[htbp]
\centerline{
 \includegraphics[height=0.75in]{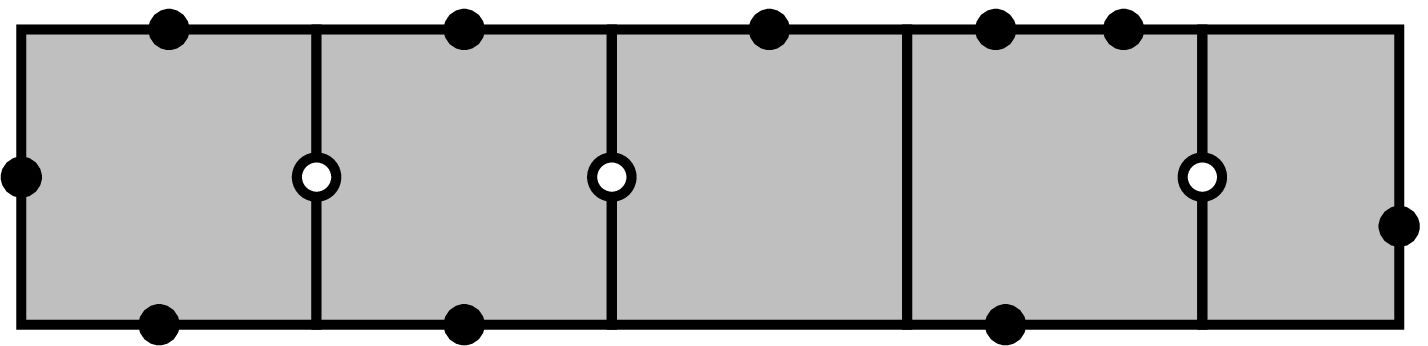}
 }
\caption{ \label{Evenness2}
Any $1 \times r$  rectangle can be sub-divided into 
at most $ \lceil r  \rceil$ 
 sub-rectangles with eccentricity 
bounded by $\sqrt{2}$, and each such sub-rectangle is 
a sink. When we add extra boundary points, it is easy 
to add internal points to make sure each sub-rectangle
ends up with an even number of boundary points. Then 
each can be re-meshed. 
}
\end{figure}

The bound on the number of mesh elements needed 
when we add extra boundary points 
follows immediately from the similar bounds for 
the square and near-square sub-sinks. 
\end{proof}

\section{Nice quadrilateral sinks} 
 \label{quad sinks sec}

Finally, we are ready to prove Lemma \ref{quad sinks lemma 1}
(nice quadrilateral sinks exist).


\begin{proof}[Proof of Lemma \ref{quad sinks lemma 1}]
As shown in Figure \ref{QuadSink1},  
we can mesh a  nice quadrilateral $Q$ using nine elements, 
four of which are rectangles, each rectangle  touching one side
of $Q$ and with the sides of the rectangle   parallel or perpendicular 
to that side of $Q$. 
It is  easy to check that each  of the five  
remaining regions
is $\theta$-nice  if $Q$ is $\theta$-nice.
It is also    easy 
to see  that we can choose the rectangles 
with eccentricity bounded by  $O(\ecc(Q))$.

\begin{figure}[htbp]
\centerline{
 \includegraphics[height=1.0in]{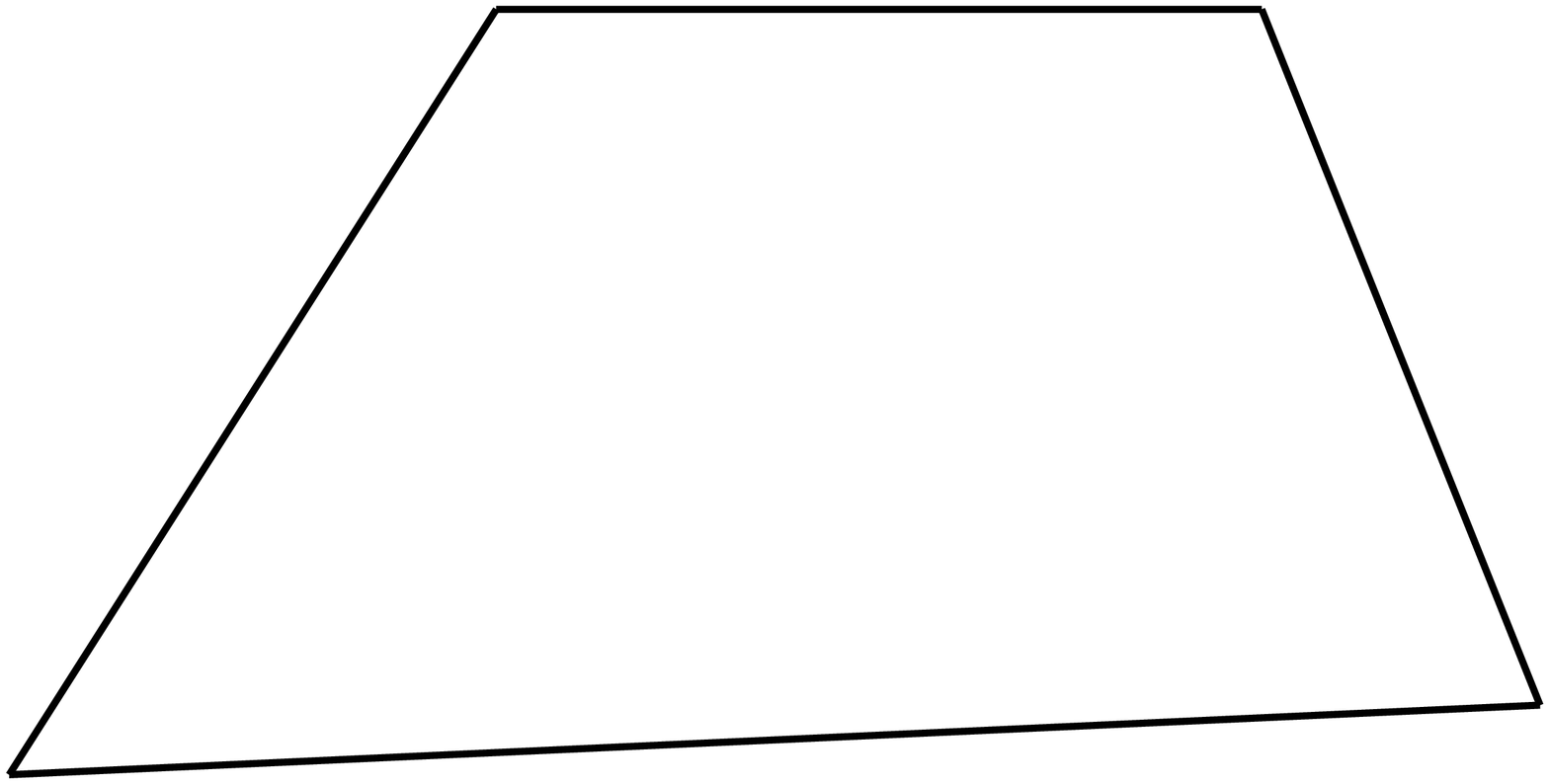}
$\hphantom{xx}$
 \includegraphics[height=1.0in]{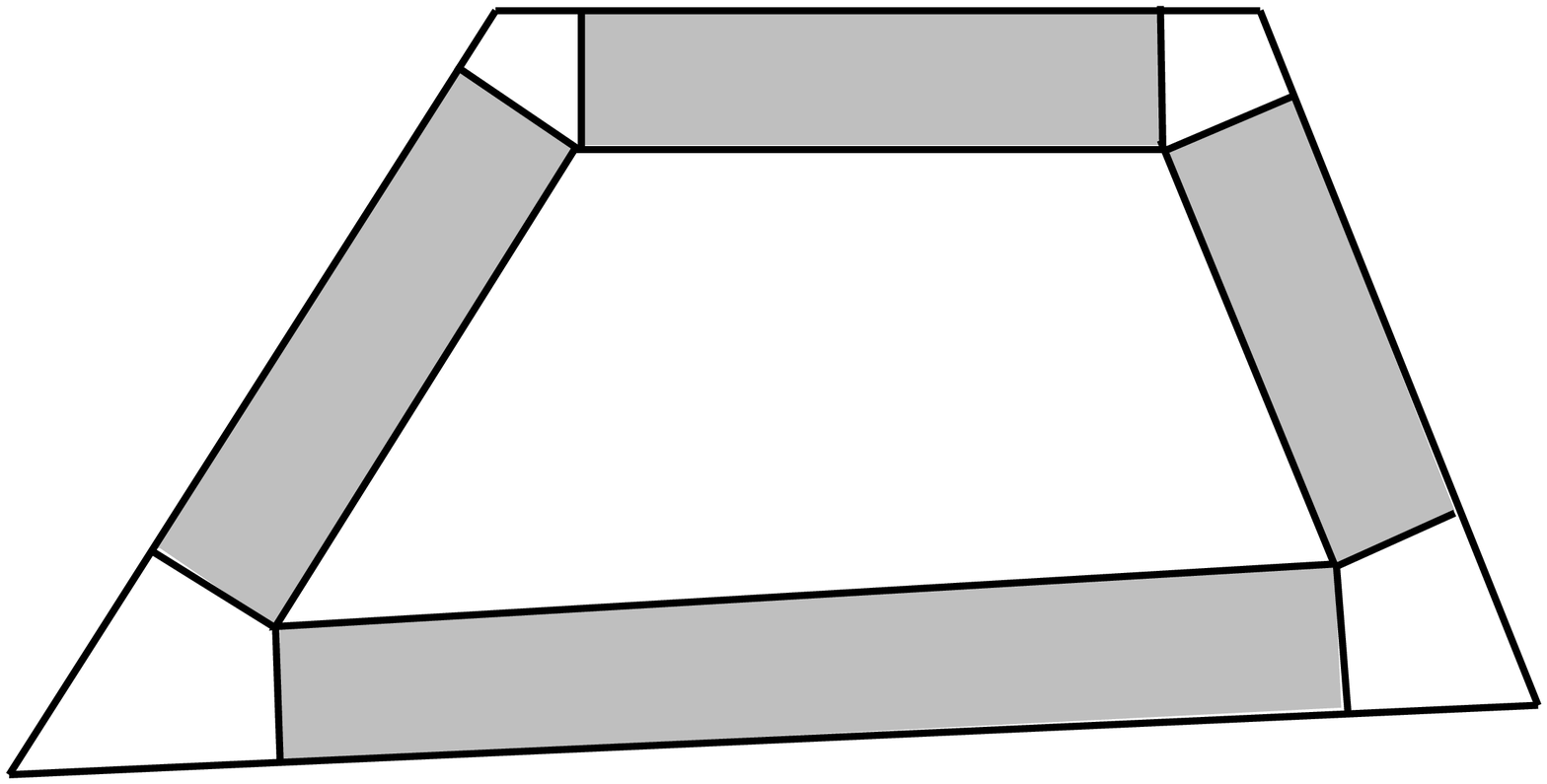}
 }
\caption{ \label{QuadSink1}
The nice quadrilateral $Q$ (left) is meshed with   4 rectangles 
(shaded) and 5 smaller quadrilaterals (white). Every  boundary point of 
$Q$ is either on one of the rectangles, or it propagates
through one of the white  corner quadrilaterals to a rectangle.
}
\end{figure}

When we place extra points on the boundary of $Q$, each 
point is either on the side of a rectangle or on the 
boundary of  
one of the four corner regions. In the latter case, 
we propagate the point through the corner to a rectangle.
If every rectangle ends up with 
 an even number of the new boundary points, 
then we can re-mesh each rectangle since they are all sinks. 
Otherwise, either all four rectangles get an odd number of 
boundary points, or exactly two do.  In the first case, we can 
connect opposite sides by a propagation segment as shown
on the top right  of Figure \ref{QuadEven2}. If exactly two rectangles 
get an odd number of boundary points,
 and these two rectangles 
touch opposite sides of $Q$, then we again connect them 
by a propagation segment as shown on the top left in Figure 
\ref{QuadEven2}. 

\begin{figure}[htbp]
\centerline{
 \includegraphics[height=1.0in]{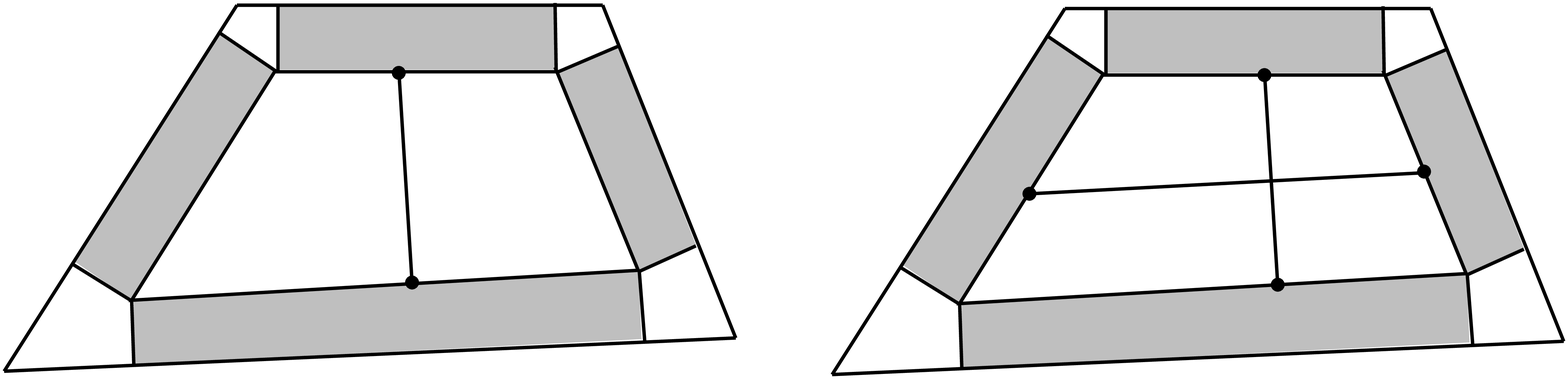}
 }
\vskip.5in
\centerline{
 \includegraphics[height=3.0in]{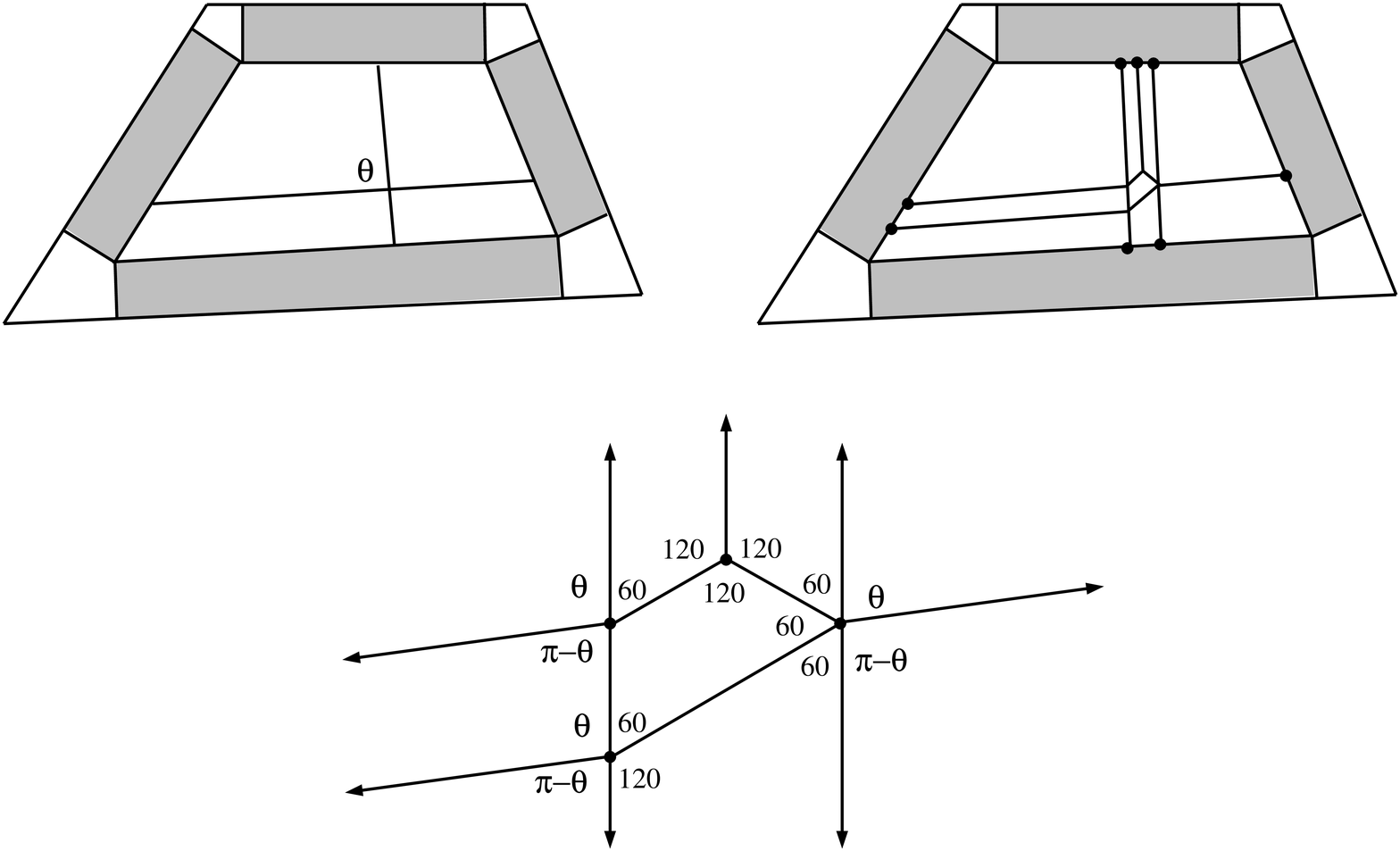}
 }
\caption{ \label{QuadEven2}
The top figures show what to 
do if opposite pairs of rectangles get an odd number 
of new boundary points: we connect the pair by 
a propagation segment to make the number of 
boundary points even again.
The second row shows the construction when 
a pair of adjacent rectangles each get 
an odd number of extra boundary points.
The bottom figure verifies the necessary angle 
bounds.
}
\end{figure}

 The final possibility is that exactly 
two rectangles get an odd number of new boundary points and 
these rectangles touch adjacent sides of $Q$. In this case, we make a 
connection as shown  in the middle
of Figure \ref{QuadEven2}. Here we connect
both pairs of opposite rectangles by propagation 
segments for $Q$, and these segments must 
 cross at some angle $\theta$ that is 
between $60^\circ$ and $120^\circ$.
 We then replace
these segments by parallel segments as shown in
Figure \ref{QuadEven2}. The exact 
angles near the crossing point are show in the lower
part of Figure \ref{QuadEven2}. These angles are 
all between $60^\circ$ and $120^\circ$. After adding 
these segments, all the rectangles have an even 
number of new boundary points and they can then 
be re-meshed.
Thus the quadrilateral with the extra $O(\ecc(Q))$  boundary 
points due to the four rectangular sinks is also a sink.

As with square sinks and rectangular sinks, certain placements
of $M$ extra points   near the corners  of $Q$  
can create 
$\simeq M^2$ quadrilaterals in the  ``corner sub-regions''
of $Q$. However, if we place at most $M_1 \geq 1$ points 
on one pair of opposite sides of $Q$, and at most 
 $M_2 \geq 1$ points on the 
other pair of opposite sides, then each of the four 
sub-rectangles gets (either directly or by propagation 
through a corner region)  at most $M_1$ points on one pair of 
opposite sides and $M_2$ points on the other pair of 
opposite sides. Thus  
the sink  can be re-meshed using 
$O( \ecc(Q) + M_1 M_2)$ elements. 
\end{proof}

\section{The standard sector mesh}  \label{sector sec}

A sector of angle $\theta$ and radius $r$   is a region 
in the plane  isometric to 
$$ S = \{ z = t e^{i\psi}: 0< t < r,  |\psi| < \theta/2\}.$$ 
The point corresponding to the origin is called the vertex of 
the sector.
We shall use the term ``inscribed sector''  to refer to polygons
where the circular arc in $S$ has been replaced by a polygonal arc
with the same endpoints and possibly other vertices placed (in order) 
along the circular arc. See Figure \ref{InscribedSector3}. 
A ``truncated sector''  with inner radius $r_1$ and outer radius 
$r_2$ is a region similar to 
$$ S = \{ z = t e^{i\psi}: r_1 < t < r_2,  |\psi| < \theta/2\}, $$ 
 and we define an inscribed truncated sector in the analogous way. 
See Figure \ref{InscribedSector3}.

\begin{figure}[htbp]
   \centerline{
	   \includegraphics[height=2.25in]{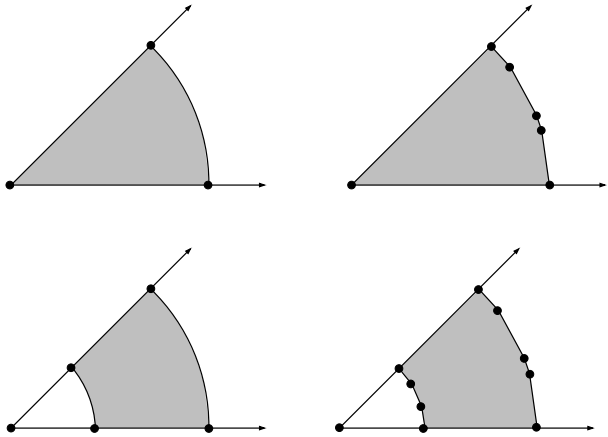}
	   }
\caption{ \label{InscribedSector3} 
The definition of a sector and truncated sector (left top and bottom)
and inscribed sector and inscribed truncated sector (right top and 
bottom).
} 
\end{figure} 

Given a sector of angle $\theta \leq 120^\circ$, 
the ``standard sector mesh'' is illustrated 
in Figure  \ref{thin-vertex}. Note that every angle
(except possibly the  angle $\theta$  at the  
vertex of the sector), 
is between $60^\circ$ and $120^\circ$. This can be 
verified by  the following calculations  ($\theta_1$, 
$\theta_2$ are as labeled in Figure \ref{thin-vertex}):
\begin{eqnarray*}
 && 0 < \theta \leq 120 \\
 &&  60 \leq \theta_1 = 180 - 60 - \frac 12 \theta \leq 120 \\
 && 60 \leq \theta_2 = \frac 12(180-\frac 12 \theta)  \leq 90 
\end{eqnarray*}
Note that the mesh covers an inscribed sector, where the circular arc
is replaced by a polygonal  arc with two segments of equal length.

\begin{figure}[htbp]
   \centerline{ 
	\includegraphics[height=2.0in]{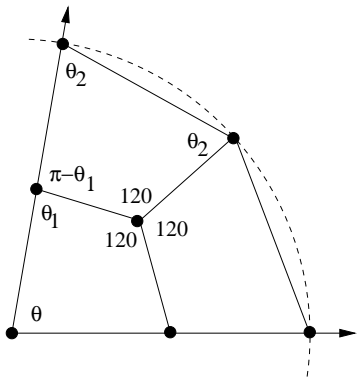}
	}
\caption{ \label{thin-vertex} 
The standard sector mesh. 
If $0 < \theta \leq 120^\circ$
this uses only angles between $60^\circ$ and $120^\circ$, except 
possibly for the angle $\theta$ at the vertex of the sector.
} 
\end{figure} 

We define a similar, but more complicated ``standard
mesh'' of a truncated sector as is illustrated in 
Figure \ref{TrunSecMesh}. Here we assume 
$0^\circ < \theta \leq 60^\circ$.
 In the figure, we take the vertex
of the sector to be the origin, one radial edge along the positive 
real axis and the other along the ray  passing through $0$ and 
$e^{i\theta}$.
Define the points 
$$ z_1 =1, z_2, =2, z_3 =4, z_4=8, z_5 = e^{i \theta},$$
$$ z_8 = 4 e^{i \theta}, z_9 = 8 e^{i \theta}, z_{10} = 
8 e^{i \theta/2}, z_{11} = \frac 34 z_3 + \frac 14 z_8.$$
The point $z_{12}$ is the intersection of the horizontal 
line through $z_{11}$ and the line of slope $\sqrt{3}$ 
through $z_2$. The point $z_6$ is the intersection of the upper radial 
edge of the sector with the line of slope $- \sqrt{3}$,
through $z_2$. The point  $z_7$ is the intersection of the same 
radial edge of the sector with the line of slope $- \sqrt{3}$ 
through $z_{12}$.  

\begin{figure}[htbp]
 \centerline{ 
	\includegraphics[height=3.7in]{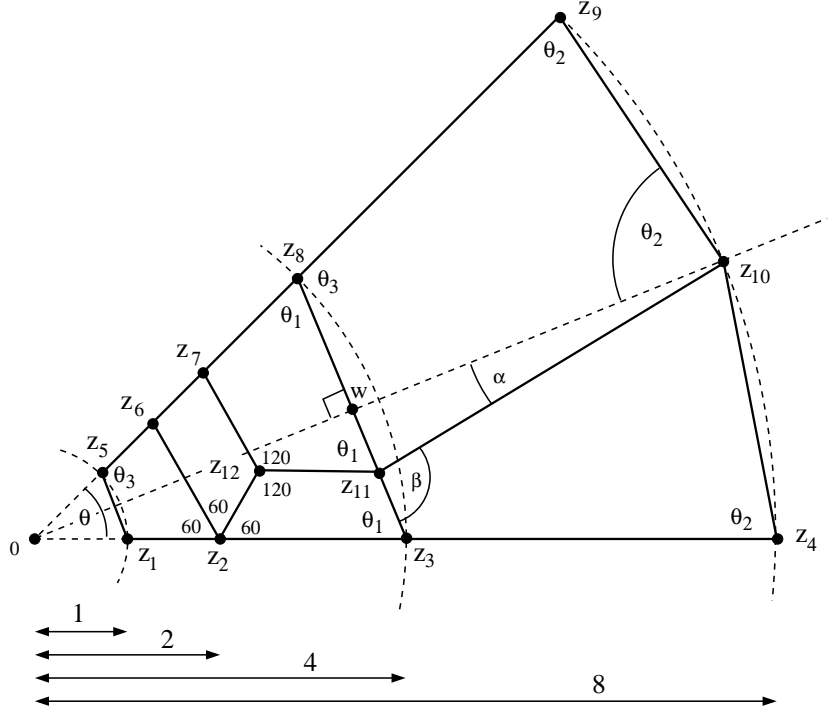}
	}
\caption{ \label{TrunSecMesh} 
The standard  truncated sector mesh.  Angle bounds are
given in the text.
} 
\end{figure} 

It is easy to check that $\theta \leq 60^\circ$ implies 
$$ 60^\circ \leq  \theta_1=90^\circ - \theta/2 \leq 90^\circ,$$
$$ 75^\circ \leq  \theta_2 = 90^\circ - \theta/4 \leq 90^\circ,$$
$$ 90^\circ \leq  \theta_3 = 180^\circ - \theta_1= 90^\circ + \theta/2 \leq 120^\circ.$$
From the diagram, we also see that if $w$ is the intersection of 
the segments $[z_3,z_8]$ and $[0,z_{10}]$, then  
\begin{eqnarray*}
 \tan \alpha 
=  \frac {|w-z_{11}|}{|w-z_{10}|} 
=  \frac { |w-z_3|/2}{|z_{10}|-|w|}
= \frac { (|z_3|/2) \sin \frac \theta2 }{8-|z_3|\cos\frac \theta 2}
= \frac { 2\sin \frac \theta2 }{8-4\cos\frac \theta 2}.
\end{eqnarray*} 
Since $\theta/2 \leq 30^\circ$, this means 
$$ \tan \alpha \leq \frac  1{2(4-\sqrt{3})}
   \approx .220463 < .267949 \approx 2 - \sqrt{3} = \tan 15^\circ.$$
Thus $ 0 \leq \alpha \leq 15^\circ$. Hence the two angles 
at $z_{10}$ are  $\theta_2 \pm \alpha$,  and satisfy
$$75^\circ \leq  \theta_2 \pm \alpha = (90^\circ - \frac \theta 4)
     \pm \alpha \leq 105^\circ.$$
Therefore
\begin{eqnarray*}
 \beta 
&=& 360^\circ - \theta_2 - (180^\circ - \theta_1) -( \theta_2 - \alpha)\\
&=& 180^\circ - 2 \theta_2  + \theta_1  + \alpha \\
&=& 180^\circ - (180^\circ -\theta/2)  + (180^\circ -\theta)/2  + \alpha\\
&=&  90^\circ +\alpha,
\end{eqnarray*}
so 
$  90^\circ \leq \beta  \leq 90^\circ + 15^\circ \leq 105^\circ.$
Clearly, the supplementary angle to $\beta$ also satisfies the 
desired bounds. Finally, we should check that the 
point $z_{12}$ is actually in the lower half of the sector 
as shown in Figure \ref{TrunSecMesh}.    
Note that 
$$ \Im(z_{12}) = \Im(z_{11}) \leq |z_3-z_{11}| 
            = \frac 12 |z_3| \sin \frac \theta 2 
            \leq 2 \tan  \frac \theta 2,$$
and the right hand side is the length of the vertical segment 
connecting $z_2$ to the segment $[0,z_{10}]$ (the dashed ray 
in Figure \ref{TrunSecMesh}). Since $\Re(z_{12}) > \Re(z_2)$, this 
implies $z_{12}$ is below the dashed line, as claimed.

\section{Layered sector meshes} 
 \label{conforming meshes}


In this section we combine the standard meshes for a 
sector and truncated sector to form a ``layered'' sector 
mesh.

First consider two line segments that meet at angle $\theta
 \leq 120^\circ$.
Without loss of generality we can assume these are 
  the radial segments $[0, 1]$ and $[0,e^{i \theta}]$.  
Mesh the sector of radius 1 between 
these lines using the standard sector mesh. 
Then mesh the truncated sector between $[1,8]$ and 
$[e^{i\theta/2}, 8 e^{i\theta/2}]$ using the standard 
truncated sector mesh. Then reflect this mesh over the segment 
$[0,  8 e^{\theta/2}]$. The union of these 
 three components gives a 2-layer  mesh of the 
sector of angle $\theta$. 
The important point is that the vertices on the radial edges 
of the sector are at radii  that are
independent of the angle of the sector. Therefore, if we mesh disjoint 
sectors at the same vertex that share a radial edge, then the 
meshes match up along that radial edge. 
See Figures   \ref{secmesh1layer} and \ref{3layers}.
\begin{figure}[htbp]
   \centerline{
 \includegraphics[height=1.75in]{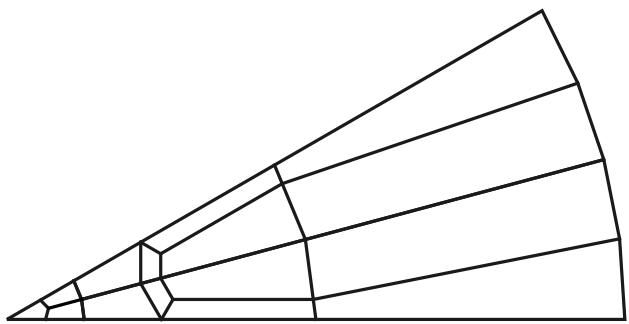}
}
\caption{ \label{secmesh1layer} 
A 2-layer sector mesh.
At the vertex of the sector 
is the standard sector mesh,
 and between radii 1 and 8 are two copies of the standard 
truncated sector mesh for angle $\theta/2$ (they 
are symmetric with respect to the sector bisector). 
} 
\end{figure}

\begin{figure}[htbp]
   \centerline{
 \includegraphics[height=1.5in]{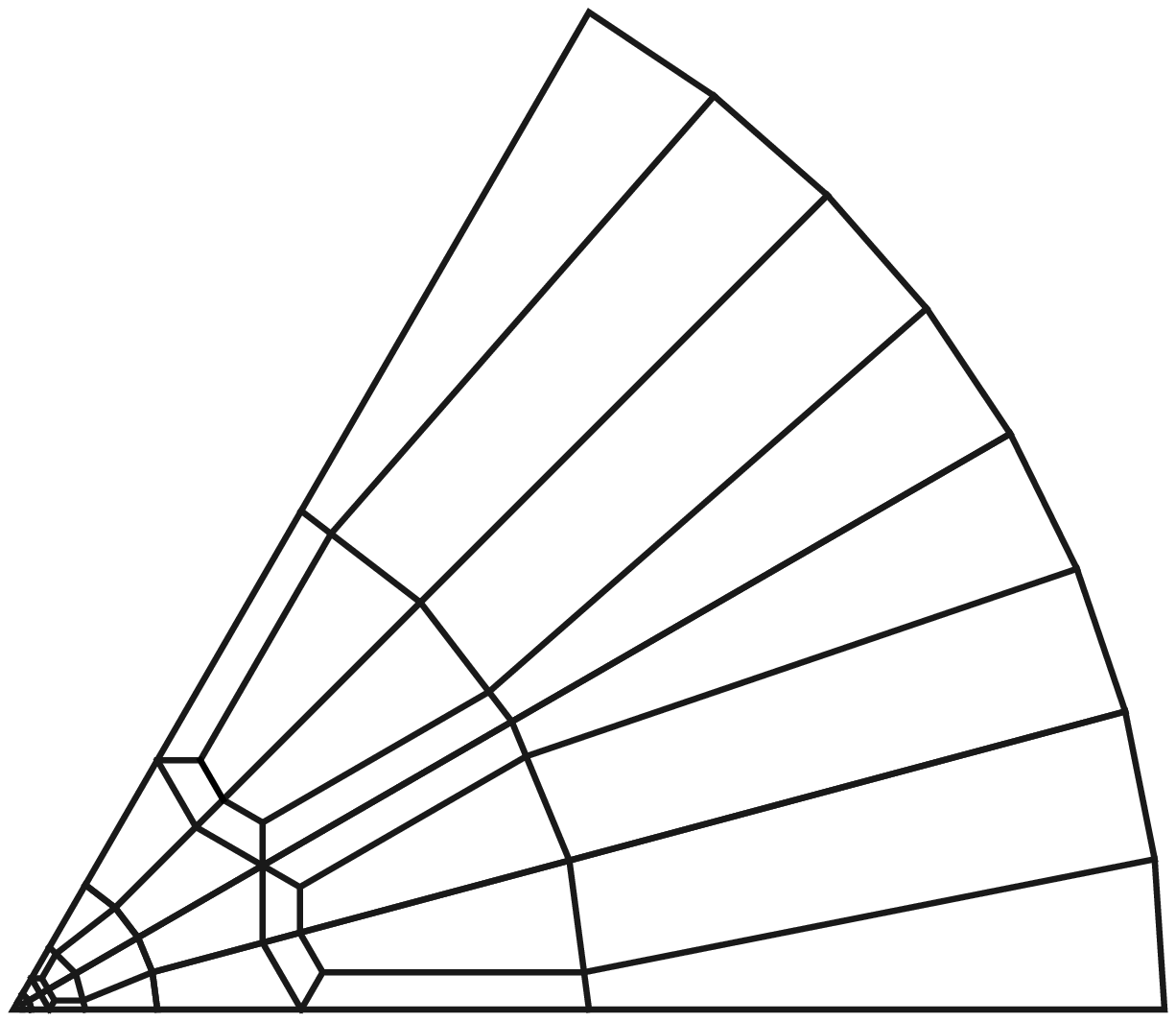}
  $\hphantom{xxx}$
\includegraphics[height=1.5in]{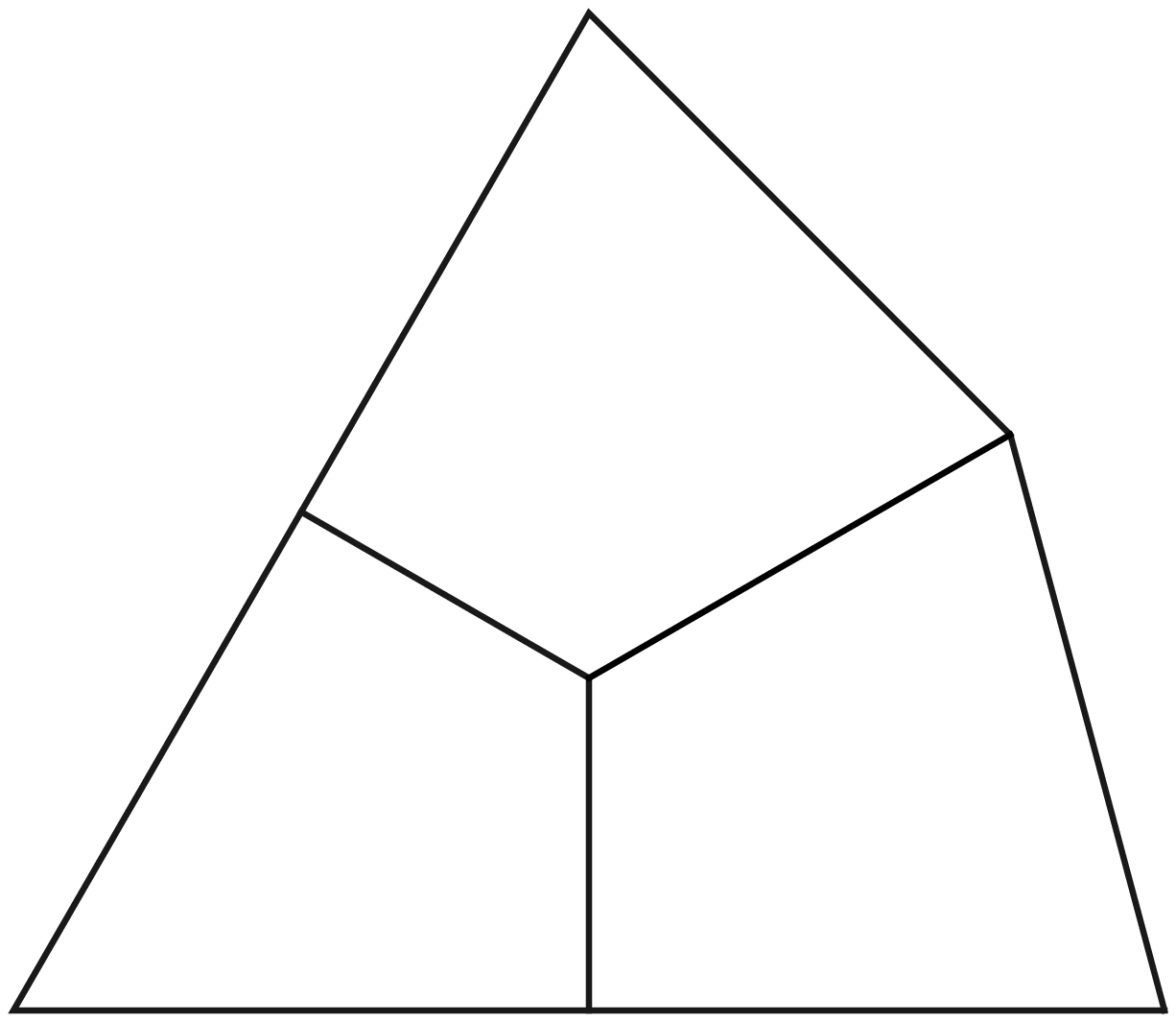}
}
\vskip.2in
\centerline{
\includegraphics[height=1.5in]{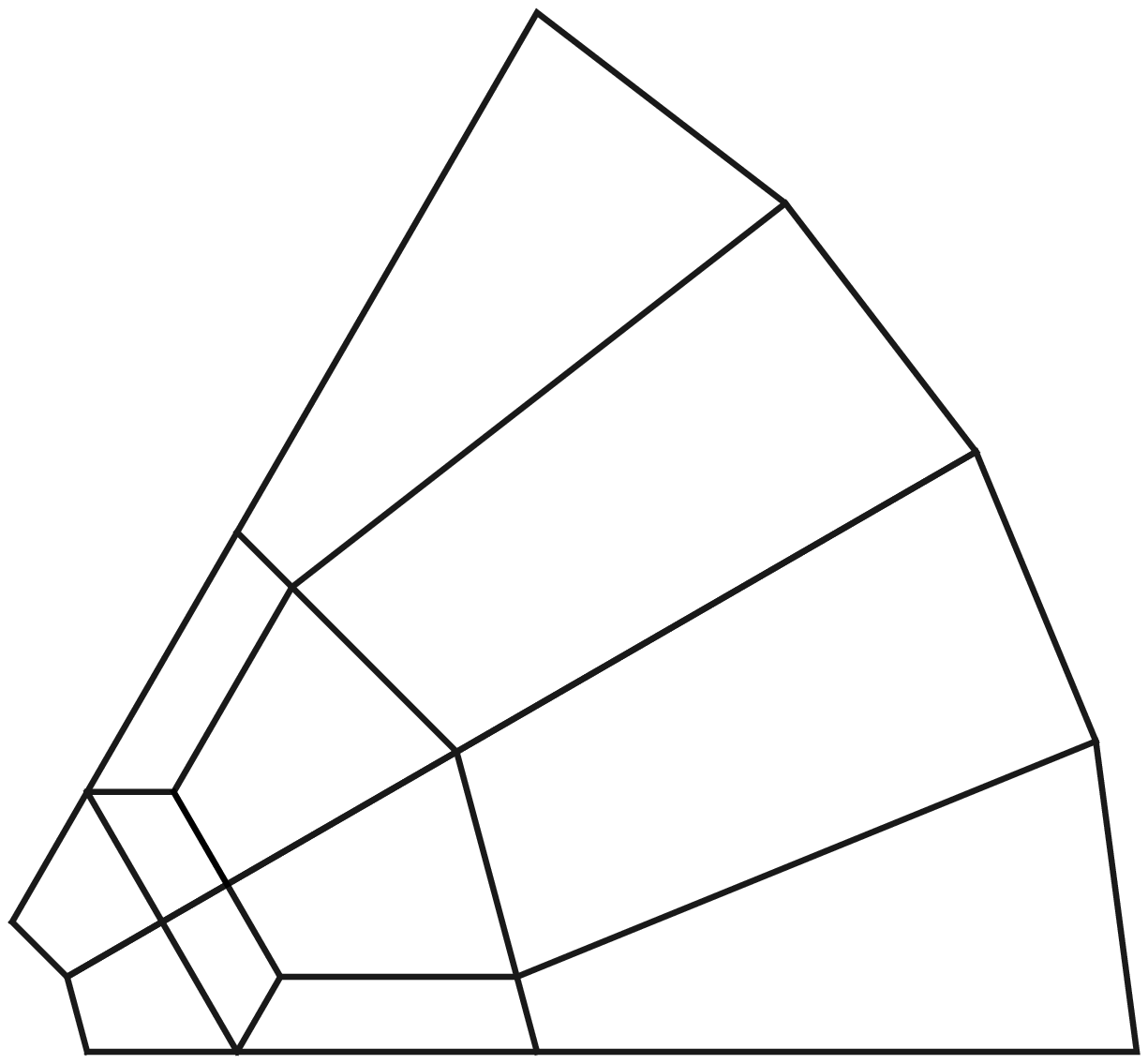}
  $\hphantom{xxx}$
\includegraphics[height=1.5in]{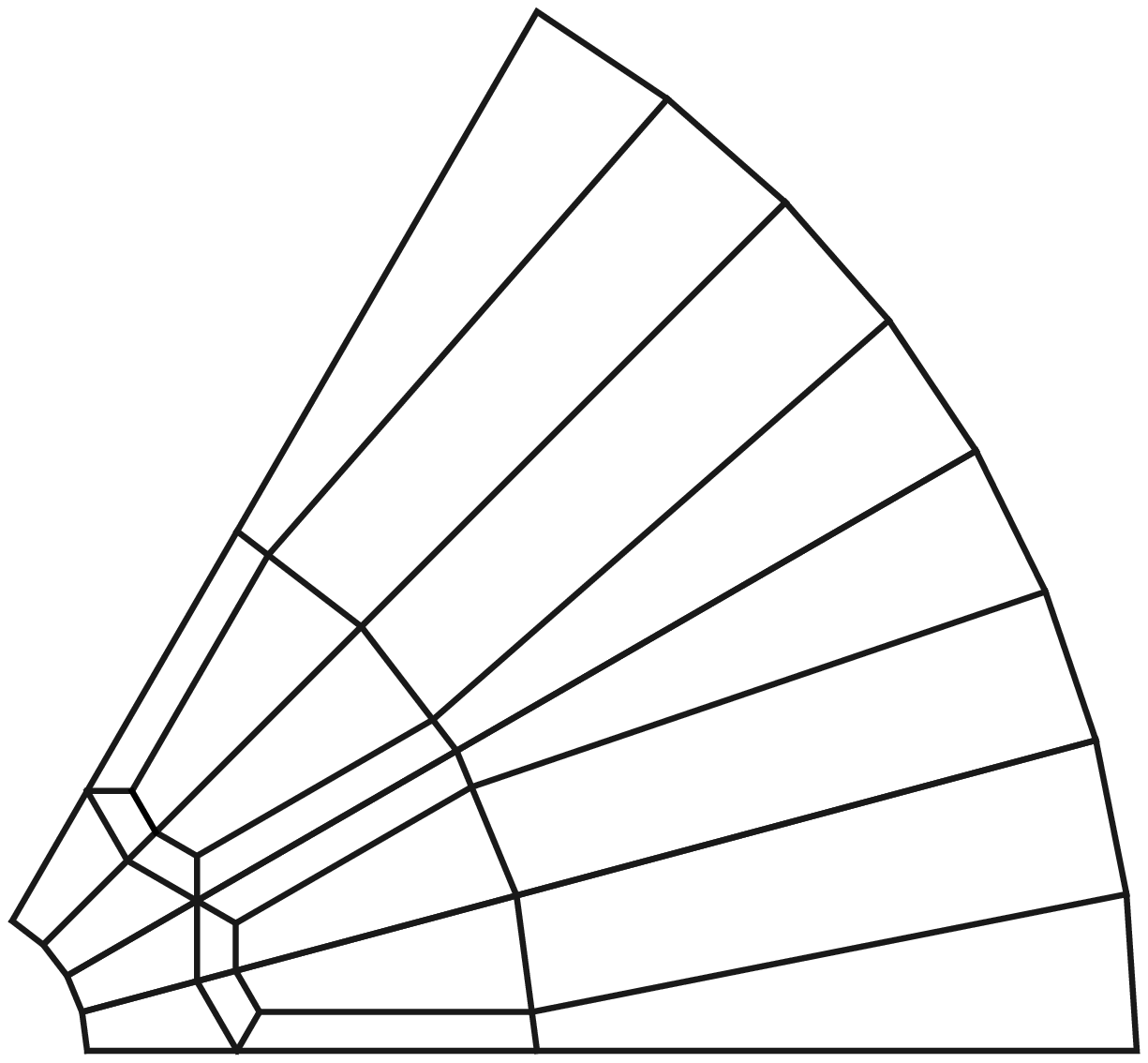}
}
\caption{ \label{3layers} 
A 3-layer sector mesh.
The upper left shows a sector mesh with three layers 
and the other three pictures show rescalings of 
each layer. At top right is  layer 1, the standard sector 
mesh for some angle $\theta$  (this layer is not visible in the picture 
at upper left); at lower left is layer 2, two copies of a standard
truncated sector mesh for angle $\theta/2$; and at lower right is 
layer 3, four copies of a standard truncated mesh for 
angle $\theta/4$. 
More layers could be added to give as many points as desired
on the outer boundary of the sector (the number of 
such points doubles with each layer).
} 
\end{figure} 

By repeating the construction on larger truncated 
sectors we can bisect each of these angles again, and 
continue until the angles are as small as we wish. 
If we start with an $n$-tuple $X =X^0$ on the unit circle, 
the standard mesh  in each sector creates a mesh of 
the cyclic polygon with vertices  $X^1$. Using  
a pair of symmetric truncated sector meshes as above, 
  we quad-mesh the region between the cyclic polygon 
for $ \frac 18X^1$ and $X^2$. In general, using $k$ 
layers we can mesh the cyclic polygon for $X^k$ (this 
was defined in Section \ref{sink defn}, by repeatedly 
bisecting the components of $\circle \setminus X$).

\begin{figure}[htbp]
   \centerline{
    \includegraphics[height=3.0in]{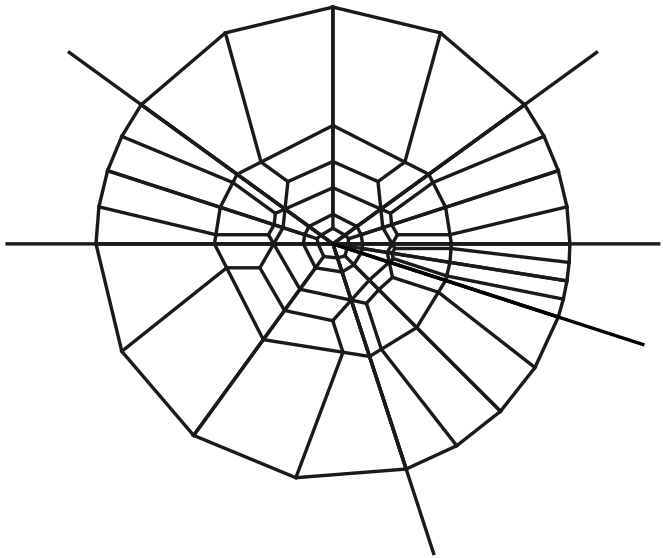}
    }
\caption{ \label{EmptyPSLG} 
Meshing each sector using two layers
 as described in Section \ref{conforming meshes}
 gives a conforming mesh 
of the PSLG consisting of six line segments 
meeting at a point.
} 
\end{figure} 

We now do a similar construction in each given sector.
See Figure \ref{EmptyPSLG} for an example of a two layer 
mesh in six sectors.  The 
mesh points that occur on the given segments occur at 
fixed distances from the origin, independent of the angle, 
so the sector meshes fit together to form a mesh of a polygon 
inscribed in the circle. 
We summarize our conclusions as:

\begin{lemma} \label{protect}
Given any $(d,120^\circ)$-tuple $X$ on the circle
and  a non-negative 
integer $k$,  there is a nice 
conforming  quadrilateral mesh for $X^k$  
that uses $O(2^k d)$ elements and so that the 
edges of the mesh cover each radial segment connecting 
the orgin to a point of $X$.
\end{lemma}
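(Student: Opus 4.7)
The plan is to use the building blocks already assembled in Section~\ref{sector sec} and Section~\ref{conforming meshes}: a standard sector mesh (for angle $\leq 120^\circ$) and a standard truncated sector mesh (for angle $\leq 60^\circ$), each of which uses $O(1)$ quadrilaterals and has been verified to have all interior angles in $[60^\circ,120^\circ]$ (except possibly for the angle at the vertex of the ambient sector). The crucial geometric feature of these building blocks is that their vertices on the two radial edges of the sector lie at fixed radii independent of the sector's opening angle. Therefore, meshes constructed independently in two adjacent sectors that share a radial edge automatically agree on that edge, so we can carry out a per-sector construction and then simply glue.

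Concretely, I would first connect each point of $X$ to the origin by a closed radial segment. Since $X$ is a $(d,120^\circ)$-tuple, this partitions the disk into $d$ sectors, each of angle at most $120^\circ$. In each such sector place the standard sector mesh of Section~\ref{sector sec}; this is Layer~1, uses $O(1)$ quadrilaterals per sector, and produces on its outer boundary precisely the midpoint bisection of the sector's arc, i.e.\ the points of $X^1$ (after appropriate scaling). Layers $2,3,\dots,k$ are then added iteratively: at Layer~$j$, each of the $2^{j-1}$ sub-sectors already present (with opening angle $\theta/2^{j-1}\leq 60^\circ$) is further bisected by installing a pair of reflected standard truncated sector meshes around its angular bisector. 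Each such pair uses $O(1)$ quadrilaterals and, by the angle computations in Section~\ref{sector sec}, introduces only angles in $[60^\circ,120^\circ]$. After $k$ layers, the outer boundary vertices are at angular positions forming $X^k$, and by rescaling we may assume they lie on $\circle$; the resulting cyclic polygon (or a small modification if one prefers the outer boundary to be a single scaled copy of $P_{X^k}$) is then meshed as required.

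For the complexity, observe that Layer~$j$ contributes $O(1)$ quadrilaterals per sub-sector and there are $2^{j-1} d$ sub-sectors at that layer, so the total element count is
\[
\sum_{j=1}^{k} 2^{j-1} d \cdot O(1) \;=\; O(2^k d).
\]
The radial segments from the origin to each point of $X$ are covered by construction, because they coincide with shared radial edges of adjacent top-layer sectors, and the standard sector mesh at Layer~1 places edges exactly along these rays out to the first annular layer; the truncated sector meshes of Layers $2,\dots,k$ continue these radial edges outward.

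The main thing to verify beyond bookkeeping is that gluing does not break the angle bounds. Within a single sector at a single layer, angle correctness is exactly what was checked in Section~\ref{sector sec}. Across adjacent sub-sectors that share a radial edge, the meshes agree vertex-by-vertex on that edge (by the radius-independence property), so each shared edge is a full edge of both neighbouring quadrilaterals and the only new angle that appears at an interface vertex is the sum of the two adjacent sub-angles from either side, which by symmetry of the reflected truncated sector mesh is itself in $[60^\circ,120^\circ]$. The only remaining angles that may fall below $60^\circ$ are the angles at the origin between consecutive rays of $X$, which are permitted by the definition of "nice" since they are already angles of the PSLG we are conforming to.
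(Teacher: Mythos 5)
Your proposal is correct and follows essentially the same route as the paper: the per-sector layered construction built from the standard sector mesh (Layer~1) and pairs of reflected standard truncated sector meshes (Layers~$2,\dots,k$), with gluing across shared radial edges justified by the radius-independence of the vertex placement, and the geometric-series count giving $O(2^k d)$. One small remark on the gluing paragraph: when two sub-sector meshes share a radial edge and their vertices on that edge coincide, the resulting configuration is already a valid mesh and each quadrilateral keeps its own interior angles as computed in Section~\ref{sector sec}; there is no ``new angle formed by summing sub-angles from either side,'' so that part of the argument, while harmless, is not the correct way to phrase why no bad angles arise --- vertex agreement plus the per-piece angle bounds of Section~\ref{sector sec} already suffice.
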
 

This mesh will form the center part of   
the protecting sinks constructed next.

\section{Sinks protecting a vertex} 
 \label{conforming sinks}


\begin{proof}[Proof of Lemma \ref{sector conform sink}]
 
Consider  $d$ segments meeting at a single point,
as shown on the left side of 
Figure \ref{Layers1}. Assume the meeting point is the origin and 
the segments all have length $1$.
If any of the angles formed are $\geq 45^\circ$ then we 
add extra radial segments until all the angles are 
$\leq 45^\circ$ (the dashed lines in the top left 
of Figure \ref{Layers1}).  When finished, we have $d +O(1) = O(d)$ 
segments.

\begin{figure}[htbp]
   \centerline{ 
	\includegraphics[height=5.2in]{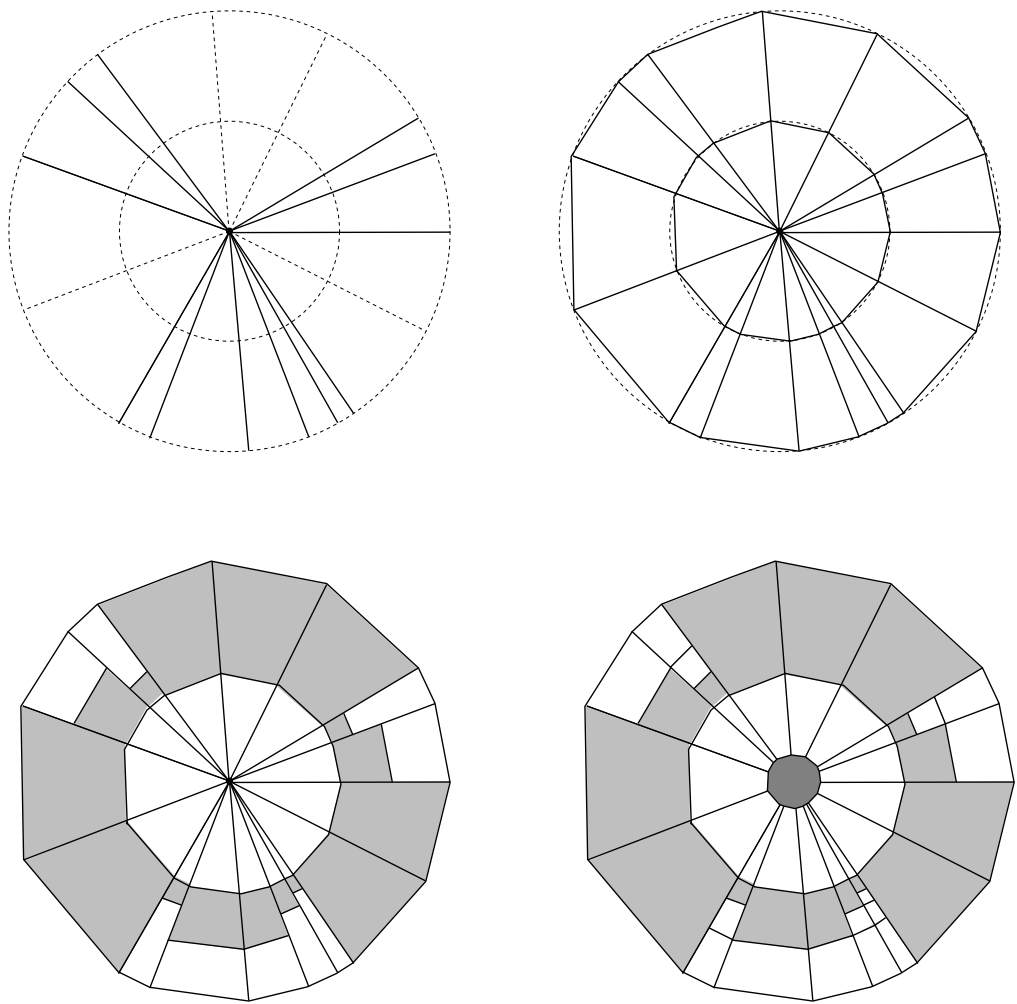}
	}
\caption{ \label{Layers1}
The construction of a vertex protecting sink.
We  first add edges  so every sector 
has angle $\leq 45^\circ$ (upper left). When then 
create cyclic polygons 
$P$ and $\frac 12 P$ (upper right) and place bounded eccentricity 
quadrilaterals $\{Q_j\}$  between these polygons
(shaded region in lower left picture). 
We then propagate the corners of these 
quadrilaterals until they run into another quadrilateral
 (bottom right). 
Inside a much smaller polygon (dark gray, lower right) we place
the  layered sector mesh given by Lemma \ref{protect}.	
The white annular region between this and $\frac 12 P$ is meshed
using Corollary \ref{annular mesh angles}.
We show the case when the sink surrounds the vertex $v$, 
but if $v$ is on the boundary of a face of the PSLG, then 
the picture above  may be  restricted to the sector with 
vertex $v$ that lies inside the face.
}
\end{figure}

Intersect the radial segments with the circles 
of  radii $1$ and $\frac 12$ centered at the origin, 
and inscribe cyclic polygons $P$ and $\frac 12 P$ 
 on each circle using these points.
Let $A$ be the topological annulus bounded by these two 
polygons. 
 The edges of $A$ on $ \frac 12P$  will 
be called the inner edges and the opposite sides on 
$P $ will be called the outer edges; these outer edges will 
be the boundary edges of the sink.

Place  quadrilaterals  $\{ Q_j\}$ in $A$
 as shown by the shaded regions  on 
 the bottom, left  of Figure \ref{Layers1}.  Each quadrilateral has 
one side that is an inner side of $A$ and 
two sides that are sub-segments of the radial segments.
The  fourth side 
is parallel to the first and chosen so that the quadrilateral 
has bounded eccentricity (for large angles, the fourth side 
might be a side of  $P$, but 
for small angles it is in the interior of  $A$.
Later, it will be convenient to assume  
the radial sides of the quadrilaterals $\{Q_j\}$ 
only take certain values, e.g., powers of $2$.
This is easy to do while keeping bounded eccentricity.


Next, propagate any vertices  of the quadrilaterals  $\{Q_j\}$
around the annular region, until  the propagation paths 
 run into another quadrilateral.  
Since there are $O(d)$ quadrilaterals,  and each 
path might travel through $O(d)$ sectors, 
this will create at most $O(d^2)$ new vertices. 

Now place a sink inside each of the quadrilaterals $\{Q_j\}$.
Since all these quadrilaterals have bounded 
eccentricity, there is a  uniform upper bound   
$K$ for the number  of sink vertices that occur on each inner edge. 
Choose an integer $k$ so that $2^{k-1}  <    K  \leq 2^k$ and add
at most $2^k-K$ extra points to each inner edge so 
that the number of vertices on the inner edge of 
each quadrilateral sink is exactly $2^k$. Call the 
resulting polygon $P'$ (this is a subdivision of $\frac 12 P$). 
Let $P''$ be the rounded 
version of $P'$ inscribed on the circle $\{|z|= 1/4\}$.
  Then choose 
$ r = 2^{-s-2} \ll \frac 12$ using Lemma \ref{protect}, and 
 place a copy of a $k$-layer 
conforming mesh for $X$ in the disk of radius $r$. 
The value of $s$ is chosen using 
Corollary \ref{annular mesh angles} so that 
the annular region between the outer boundary 
of the $k$-layer mesh (this is a  cyclic 
polygon on $\{|z| = r\}$ with $2^k d$ vertices) 
and  $P'$ (this is a cyclic polygon with vertices on 
$\{ |z| = \frac 14\}$  with $2^k d$ 
vertices) can be nicely meshed.

At most $2 \pi /\theta$ of 
the sectors in the sinks can have angle $\geq \theta$, 
so at most $O(d/\theta)$ of the quadrilaterals 
created in the mesh of the  protecting sink can fail to 
be $2\theta$-nice. 

If we add $M=M_1 + \dots M_d$ points to the boundary of the 
sink, with $M_j$  points 
being added inside  the $j$th sector,
 these propagate inwards to the quadrilateral 
sinks. The propagation produces at most $O(Md)$ 
elements  in the region 
 outside the ring of quadrilateral sinks, 
and it the adds $M_j$ points to the outer edge 
of the  quadrilateral sink $Q_j$  in the $j$th sector. Suppose 
$Q_j$ had $d_j$ points added to its radial sides by the 
propagation of the corner points around the annular region. 
Since $M$ is even and $\sum_j d_j$ is also even, extra vertices 
can be added (if needed) between adjacent quadrilaterals $\{Q_j\}$
to ensure every sink has an even number of extra vertices (see
Figure \ref{Evenness2} where this was done for rectangles). 
Then  each $Q_j$  can be re-meshed with $O(1+d_j M_j)$ elements
(but see remark after proof).
Summing over $j=1,\dots,d$ shows that at most $O(d + Md)$ mesh 
elements are used in the re-meshing of all these 
quadrilaterals.
The mesh elements created are nice, but we have no
better control on their angles. 

However, we do have some control on angles used 
outside the ring of quadrilateral sinks.
The angles in these new elements depend on the sectors
they belong to; if they all belong to a sector 
with angle greater than $\theta$ then they may 
all fail to be $\theta$-nice so the number of 
non-$\theta$-nice elements can be as large 
as $O(Md + d /\theta)$.
But if we only add $K$ extra  points in each sector, then 
since there are only $ O(1/\theta)$ sectors with 
angle $> \theta$, we will only create $O(Kd/\theta)$
non-$\theta$-nice elements.
\end{proof} 

It is important to note  that  when 
we add extra vertices to the boundary of a
protecting sink, and then re-mesh  the interior
of the sink, the quadrilaterals
touching the center are never changed. All changes 
required by the new vertices take place in 
the  region $A$ between $P$ and $\frac 12 P$.
In the proof of Theorem 
\ref{Quad Mesh}, this  prevents us from subdividing 
any angles that are less than $60^\circ$.

The argument above shows that each quadrilateral $Q_j$ 
is remeshed with at most $O(1+d_j M_j)$  elements, but 
a more careful argument shows only $O(1+d_j +M_j)$ elements
are needed. Since we won't use this stronger estimate, 
we merely sketch the proof. 
Recall that each quadrilateral $Q_j$ is made into a sink 
by taking four rectangular sinks inside $Q_j$, one 
touching each side of $Q_j$ (see Section  \ref{quad sinks sec}). 
 The point is that we have chosen the radial 
sides of the quadrilaterals to be powers of two and 
this implies that when we propagate the corners of  the
quadrilaterals, only $O(1)$ extra vertices are added
to the sub-rectangles of $Q_j$ that  lie along its 
radial sides; the remainder are added to the corner
regions that are adjacent to $\frac 12 P$ and then 
these points 
propagate through the corner region 
 to the sub-rectangle of $Q_j$ that lies 
along $\frac 12 P$. This rectangle does not get any 
extra points via propagation from the boundary  of 
the sink. These observations easily imply the $O(1+d_j+M_j)$ 
estimate.

\section{Isosceles trapezoid dissections}
 \label{dissection}

In this section we recall a result from \cite{Bishop-nonobtuse}
that will be needed in the proof of Theorem \ref{Quad Mesh} to 
deal with high eccentricity quadrilaterals.

The  construction of propagation paths 
for quadrilateral meshes also
works for quadrilateral dissections, although it 
is then possible that a path will terminate at an interior vertex
of the dissection, rather than continuing to the boundary of the 
dissected region. 
Moreover, in contrast to propagation paths in a quadrilateral 
mesh, a propagation path in a quadrilateral 
dissection can return to the same quadrilateral 
arbitrarily often; see Figure \ref{ManyReturns} 
for two such examples.  In many cases, a dissection of 
a polygonal region $\Omega$  can  be
turned into a mesh by propagating  non-conforming  vertices 
through the dissection until they hit 
the boundary  or another non-conforming vertex 
(recall from Section \ref{PSLG defn} that a non-conforming 
vertex is a corner of one quad-shaped piece of 
the dissection, but an interior edge vertex 
of another piece).
 In general, 
there is no bound on how many times such a 
propagation path can revisit the same piece, 
and hence no bound  the number of mesh 
elements generated in terms of the number 
of dissection elements. 

\begin{figure}[htbp]
\centerline{
 \includegraphics[height=2.5in]{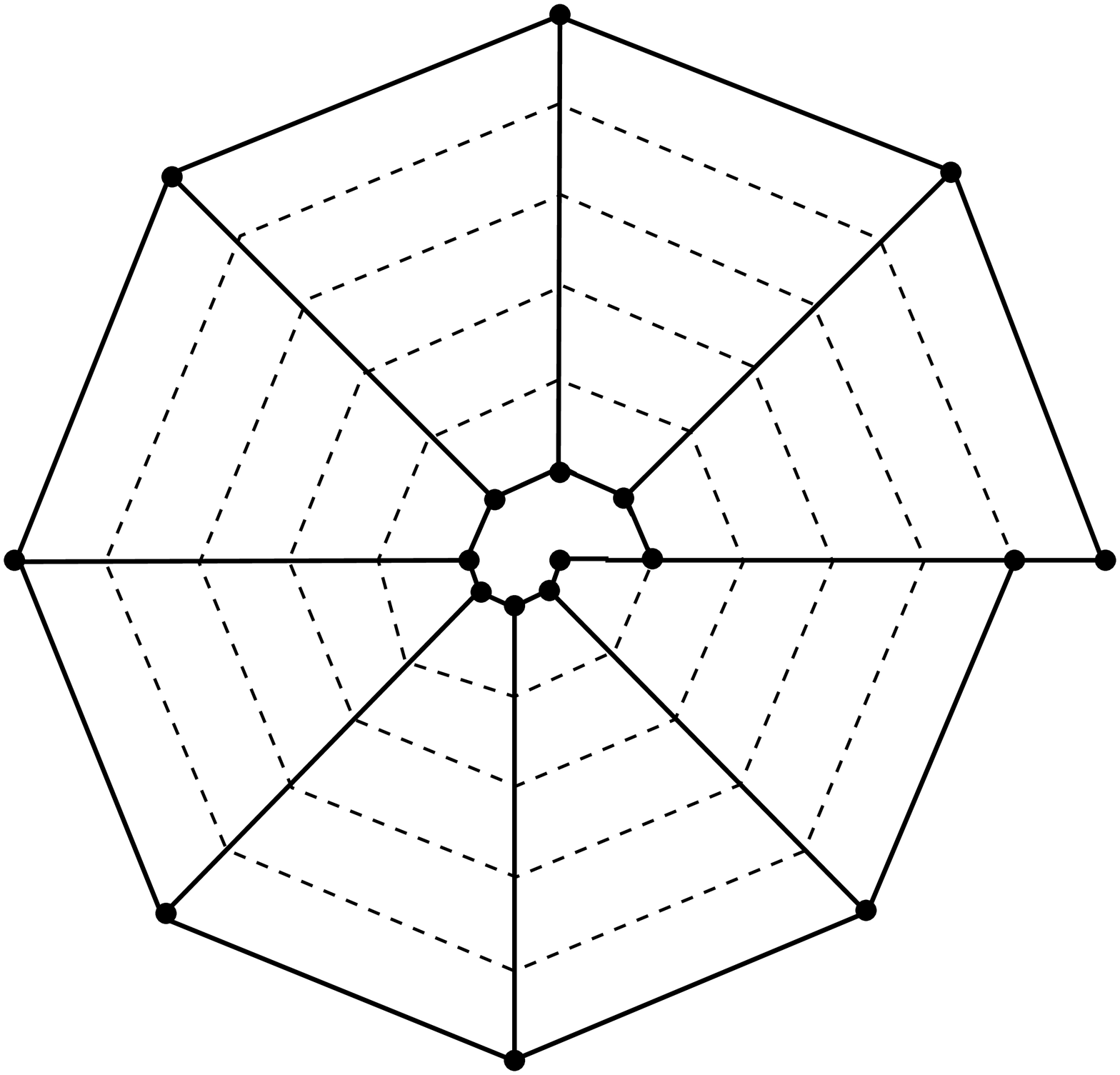}
  $\hphantom{xxx}$
 \includegraphics[height=2.5in]{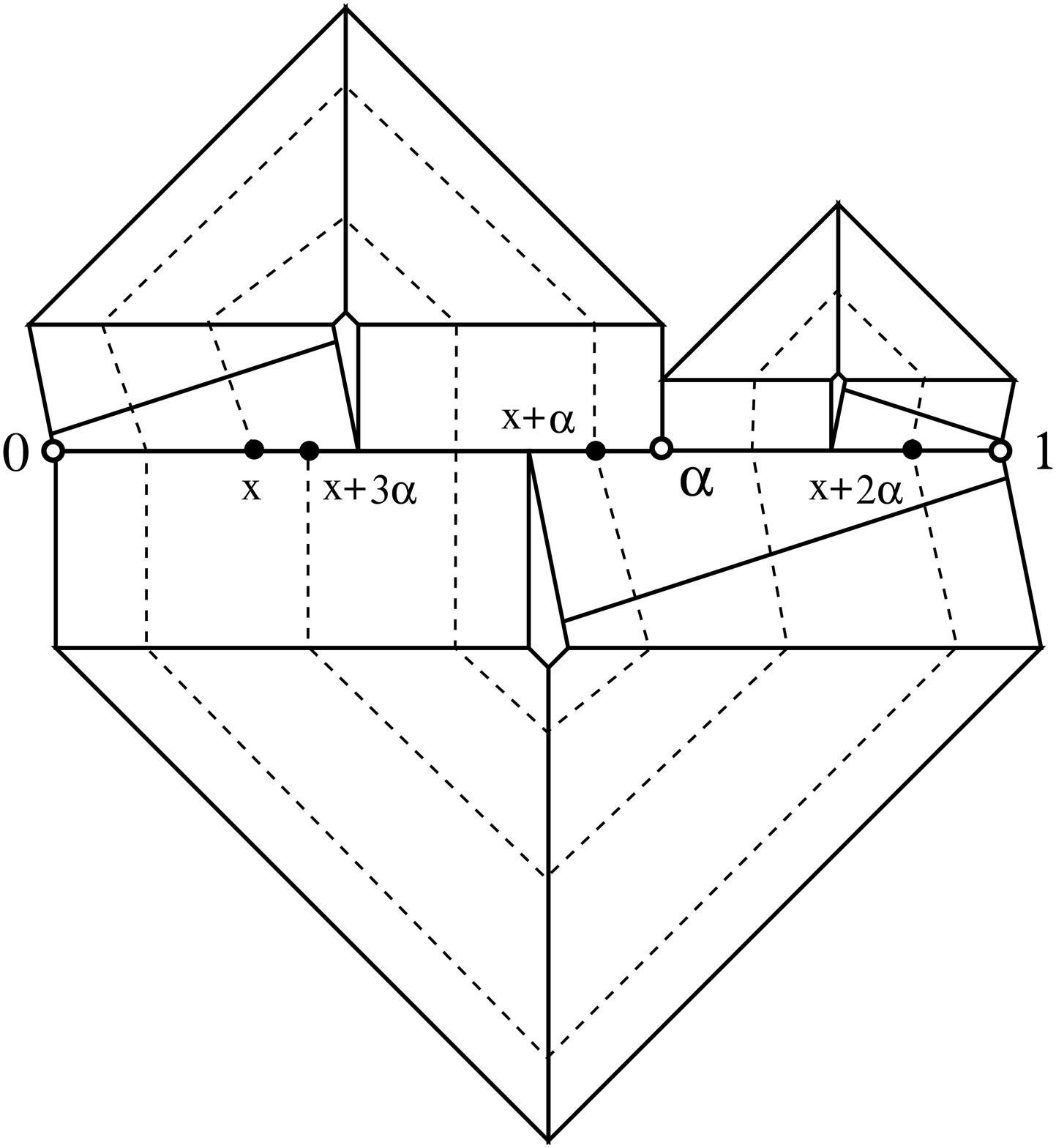}
 }
\caption{ \label{ManyReturns}
 Unlike in a mesh,
 a propagation path in a   dissection 
can  cross   the same quadrilateral many times.
On a left is a spiral where this happens 
finitely often.
  In the picture on the right, it is 
straightforward to check that  
the propagation   path starting at a  point $x \in (0,1)$
will reach $x+\alpha \mod 1$ on its second crossing of 
$(0,1)$ (three such iterates are shown as the dashed 
line connecting black dots). 
If $ \frac 12 < \alpha < 1$ is irrational, this 
implies that each propagation path will be dense in the 
dissected region. Thus propagation paths can revisit 
a trapezoid infinitely often.
}
\end{figure}

 However, in \cite{Bishop-nonobtuse}  I proved there is
such a bound if we ``bend'' propagation
paths  and we introduce
some holes in the domain that the bent paths can 
run in to.
We state the result more carefully below, but first we
give some definitions and notation.

For our application, it suffices to consider a
very special kind of quadrilateral dissection.
An {\defit isosceles trapezoid}
 is  a quadrilateral that is symmetric with 
respect to the line that bisects two opposite sides (called
the  bases of the trapezoid).
Propagation  segments in the trapezoid parallel to the 
base side will be called {\defit $P$-segments} (``P'' 
for {\bf p}arallel or {\bf p}ropagation). 
A {\defit $P$-path} is a 
polygonal path made up of $P$-segments.   The two base 
sides will also be called {\defit $P$-sides} of the trapezoid.
 Propagation 
segments connecting the base  sides will be called 
{\defit $Q$-segments}
and the two non-base sides of the trapezoid will be 
called the {\defit $Q$-sides}.

An {\defit  isosceles trapezoid
 dissection} is a dissection of a region $W$  into 
simple polygons (or pieces) such that 
\newline 
(1)   all the polygons  have the shape
of  an isosceles trapezoid, and 
\newline 
(2)  any  two trapezoidal pieces can only meet
along  boundary segments of  the same type (along two $Q$-sides
or two $P$-sides).
\newline
The second condition implies   we can keep adding   $P$-segments 
to form a $P$-path until the path 
either hits a non-conforming vertex or leaves the region 
covered by the dissection. 

The domains in Figures \ref{ManyReturns}
and  \ref{PropPath4}  are drawn with 
 isosceles trapezoid dissections.
We say that the dissection is {\defit $\theta$-nice} if all 
the pieces are $\theta$-nice.
If $W$ is a region with an isosceles trapezoid dissection, 
then every boundary edge of $W$ is  a subset of 
either a $P$-side or a $Q$-side of some trapezoid in
the dissection. We call the union of the former edges 
the {\defit $P$-boundary} of $W$ and the union of the latter 
edges the {\defit $Q$-boundary} of $W$.

\begin{figure}[htbp]
\centerline{
 \includegraphics[height=2.5in]{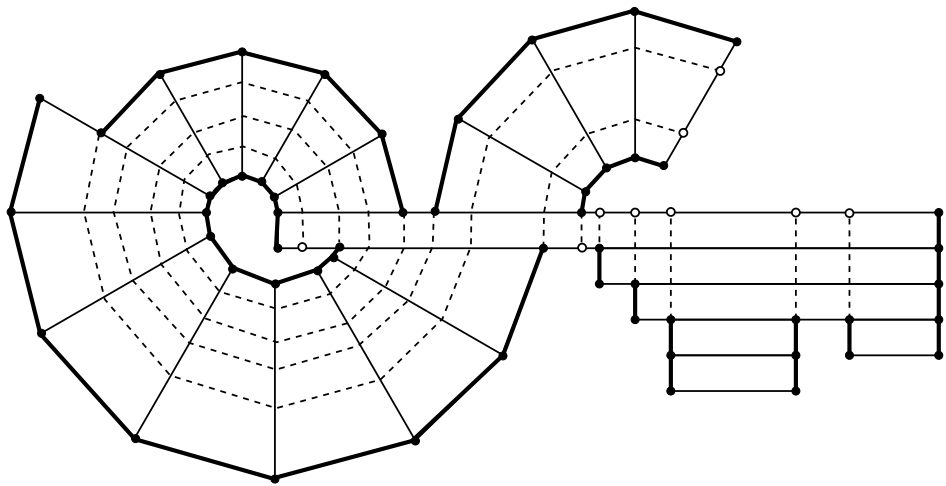}
 }
\caption{ \label{PropPath4}
A domain $W$ dissected by isosceles trapezoids and a
quadrilateral mesh constructed by propagating non-conforming 
vertices along $P$-paths
(the $P$-sides of the isosceles trapezoids are drawn with a thicker
line).
 However, in general,
we need to allow edges that are only ``close to parallel''
to the base edges. It is proven in \cite{Bishop-nonobtuse},
that if we allow $\theta$ deviation from parallel, then
we can always  mesh with $O(n^2/ \theta^{2})$  
quadrilaterals that are $2 \theta$-nice.
}
\end{figure}

A {\defit chain}  in a dissection is a maximal collection 
of distinct trapezoids $T_1, \dots, T_k$ so that for
 $j=1, \dots k-1$, 
$T_j$ and $T_{j+1}$ share a $Q$-side (i.e., the $Q$-sides 
are identical, not just overlapping; in other words, the 
chain of trapezoids forms a quad-mesh of their union).
If a trapezoid in the dissection does not share a 
$Q$-side with any other trapezoid, we consider it as a chain 
of length one.
 For example,
the dissection on the left of Figure \ref{ManyReturns} consists
of a single chain of length 8.
The dissection on the right of Figure \ref{ManyReturns} consists
of a  three chains of length 5 (each is similar to the others).
 The dissection in 
Figure \ref{PropPath4} has chains of length 2, 4, 5 and 7 
and four chains of length 1.
The ends of a chain  consists of two segments: the  $Q$-side of $T_1$
 that is not shared with $T_2$,  and the 
 $Q$-side of $T_k$ that is not shared with $T_{k-1}$. 
The corners of a chain are the endpoints of the ends
of the chain.
Usually there are four distinct corners, but there may 
only be two in the case when $T_1$ and $T_k$ also share
a $Q$-side. In this case,  the chain forms
 a closed loop,  and we will 
call this a closed chain. This case 
is not of much interest to us since no propagation 
paths will ever occur inside a closed chain.

The precise result that we will use is  Lemma 11.1 of
 \cite{Bishop-nonobtuse}:

\begin{thm} \label{quad mesh lemma}
Suppose
 that $W$ is a polygonal  domain  with an isosceles trapezoid
 dissection with $n$  pieces.
Suppose also that   $0^\circ \leq \theta \leq 15^\circ$ and
 that  every dissection  piece is  $\theta$-nice. Finally, suppose
 the number of chains in the dissection is  $M$.
Then we can remove
$O(M/\theta)$ $\theta$-nice quadrilaterals of uniformly bounded
eccentricity from $W$ so that
the remaining region $W'$  has a $2\theta$-nice quadrilateral mesh
with $O( n M / \theta)$ elements.
At most $O(M/\theta)$ new vertices are created on the
$Q$-boundary of $W'$.
At most $O(M)$ vertices are created on the $P$-boundary
of $W'$, and no more than $O(1)$ vertices are placed in
any single $P$-side of dissection piece of $W'$.
For this quad-mesh,
any boundary point  on a $Q$-side of $W'$  propagates to another
boundary point after crossing at most  $O(n)$ quadrilaterals.
\end{thm}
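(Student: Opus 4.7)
My plan is to work chain by chain. Each chain $C_i$ consisting of $n_i$ trapezoids is already a quad-mesh of its union $U_i$, so the only obstruction to $W$ being a quad-mesh is the collection of non-conforming vertices that arise where a corner of a trapezoid in one chain lies in the interior of a $P$-side of a trapezoid in another chain. The natural first attempt is to propagate each non-conforming vertex along a $P$-path until it hits the $P$-boundary of $W$ or another non-conforming vertex; but as Figure \ref{ManyReturns} shows, even within a single chain a pure $P$-path can wind arbitrarily often or be dense, so this fails in general. The central idea is to allow each propagation segment to deviate from exactly parallel by a small tilt bounded by $\theta$. Since every dissection piece is $\theta$-nice, the resulting quadrilaterals remain $2\theta$-nice, and the tilt gives a one-parameter family of quasi-$P$-paths that we can steer.

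Next, I would place ``absorbing holes'' near the chain ends. For each of the $O(M)$ chain ends, remove $O(1/\theta)$ bounded-eccentricity $\theta$-nice trapezoids so as to expose a small sink in which quasi-$P$-paths may terminate. This accounts for the $O(M/\theta)$ removed pieces and contributes at most $O(M/\theta)$ new vertices on the $Q$-boundary of $W'$. I would set things up so that each absorbing hole is accessible from a controlled set of entry directions, and so that at most $O(1)$ of its $P$-sides are ever cut, giving the $O(M)$ bound on new $P$-boundary vertices and the $O(1)$ bound per single $P$-side.

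Then, for each non-conforming vertex $v$ I would run a routing algorithm: launch a quasi-$P$-path from $v$ with its tilt chosen so that the path is guaranteed to exit the current chain after crossing $O(n_i)$ trapezoids, pass through an ordered sequence of neighboring chains, and terminate either at the $P$-boundary of $W$, at a partner non-conforming vertex (which we pair with $v$), or at one of the holes. A combinatorial pairing argument, using that the number of non-conforming vertices per chain is at most $O(n_i + M)$, shows the total number of propagation paths is $O(n+M/\theta)$, and each path creates $O(n)$ new quadrilaterals, for a total of $O(nM/\theta)$ elements.

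The hard part is justifying that the $\theta$ tilt window suffices to steer every path to a prescribed target within $O(n)$ crossings, i.e., that the ``twist monodromy'' along a long chain sequence can always be corrected by some admissible tilt choice; this is exactly what rules out the irrational-rotation behavior on the right of Figure \ref{ManyReturns}. I would isolate this as a technical lemma stating that in any sequence of $\theta$-nice isosceles trapezoids joined along shared $Q$-sides, the set of admissible tilts at the initial edge that route a prescribed entry point to a prescribed exit edge is a nonempty interval of length at least some effective constant times $\theta/n$, and crucially only $O(1)$ tilts are ever used on a single $P$-side so the per-side new-vertex bound holds. Granting that lemma, the element count, the vertex counts, and the $O(n)$-step termination of $Q$-side propagation in the final mesh all follow by straightforward bookkeeping chain by chain.
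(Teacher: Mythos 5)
Your overall outline catches the right mechanics (bend propagation segments by up to $\theta$, which keeps pieces $2\theta$-nice by the usual splitting argument; delete a controlled set of bounded-eccentricity quadrilaterals to serve as absorbers; then count paths, elements, and boundary vertices), and you correctly identify the crux: why should a $\theta$-window of admissible tilts suffice to terminate every path in $O(n)$ steps? But the lemma you propose to fill this gap — that for an arbitrary prescribed entry point and a prescribed exit edge, the set of admissible tilts is a nonempty interval of length $\gtrsim \theta/n$ — is the wrong formulation and I do not believe it is true. If the prescribed exit edge is short and far away, no tilt in $[-\theta,\theta]$ need reach it at all; and compounding of small tilt errors over many trapezoids makes the dependence of the landing point on the initial tilt highly non-uniform, so you cannot expect a clean $\theta/n$ lower bound. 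You also place your absorbing holes near chain ends, but nothing in the hypotheses guarantees that a $P$-path from an interior non-conforming vertex must ever approach a chain end within $O(n)$ steps — compare the right-hand example in Figure \ref{ManyReturns}, where a path can wind indefinitely without ever approaching the ends of any chain.

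The argument in \cite{Bishop-nonobtuse}, as sketched after the theorem statement, sidesteps both problems by reversing the roles of ``where you are steered'' and ``what catches you.'' It first proves a purely combinatorial fact about \emph{unbent} $P$-paths: one can identify $O(M)$ sub-chains, the \emph{return regions}, so that every $P$-path crossing $5n+1$ trapezoids must pass through at least one return region. This gives the $O(n)$ termination bound with no steering at all. Each return region is then cut by additional $P$-paths into $O(1/\theta)$ parallel \emph{tubes}, each roughly $1/\theta$ times longer than it is wide; inside each tube two bounded-eccentricity $\theta$-nice quadrilaterals are placed at the ends. The $\theta$-bend is used only \emph{after} entering a tube: because the tube is $\gtrsim 1/\theta$ times longer than wide, a sideways drift of $\leq \theta$ per step over the length of the tube can move the path a full tube-width, so it is guaranteed to land on the $Q$-side of one of the two wide absorbers regardless of where it enters. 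No precise target ever needs to be hit. The $O(M/\theta)$ removed quadrilaterals, the $O(M/\theta)$ new $Q$-boundary vertices (from $P$-propagating the tube corners), and the $O(M)$ new $P$-boundary vertices with $O(1)$ per $P$-side (from $Q$-propagating the removed quadrilaterals' corners, with placements chosen to control this) then follow by bookkeeping. So the key missing idea in your proposal is the return-region structure, which replaces your unjustified steering lemma with a ``wide target'' that any $\theta$-bend can reach.
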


Very briefly,  the proof in \cite{Bishop-nonobtuse} 
creates  $O(M)$ sub-regions of $W$ called return regions that
are chains of isosceles trapezoids with the property 
that any  $P$-path crossing $5n+1$ trapezoids  must  hit
 at least one return region. Each  return region 
is sub-divided  by $P$-paths into at most $O(1/\theta)$ parallel
chains, called tubes, that are each 
approximately  $1/\theta$ times ``longer'' than they 
are ``wide'' (this is made  precise  in \cite{Bishop-nonobtuse}).
 This property  implies that  
we can place two quadrilaterals inside each tube with 
the property that any
propagation path entering either end of tube
can be $\theta$-bent so that it hits a $Q$-side of 
 one of the two quadrilaterals.
See Figure \ref{ReturnRegion1}.
 This terminates 
the path.  The $O(M/\theta)$  quadrilaterals  inside the tubes 
are the ones that are removed from $W$ to give $W'$.
There are $O(M/\theta)$ tubes and the corners of these 
tubes must be $P$-propagated until the hit $\partial W'$; 
this gives the $O(M/\theta)$ boundary vertices that 
may be created on the $Q$-sides of $W'$. The corners of 
the $O(M/\theta)$ removed quadrilaterals also have to propagated 
as $Q$-paths, giving points on the $P$-sides of $W'$, 
but by choosing the quadrilaterals appropriately we 
can arrange for only  $O(M)$  vertices to be created 
on the $P$-sides of $W$, and at most  $O(1)$
vertices  are added   to any single  $P$-side of $W'$.
Since these  details of the proof Theorem \ref{quad mesh lemma} 
 are not needed to  apply the 
result, we refer the reader to \cite{Bishop-nonobtuse} for 
further information. 

\begin{figure}[htb]
\centerline{
\includegraphics[height=2.25in]{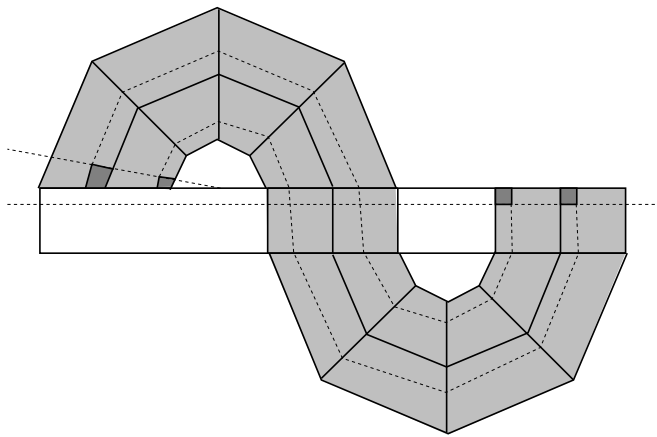}
 }
\caption{\label{ReturnRegion1}
An example of a return region divided into 
two tubes  and of  quadrilaterals placed 
at opposite  ends of these tubes. The corners
of these quadrilaterals  are connected by a path (dashed)
in the tube, dividing it into $2\theta$-nice  quadrilaterals.
Every path entering the tube will either hit one 
of the quadrilaterals immediately, or propagate to 
hit the quadrilateral at the other end of the tube.
}
\end{figure}

\section{Thick/thin decompositions and meshing a simple polygon }
 \label{Thick Thin sec}

In this section, we review some 
facts  from \cite{Bishop-optimal} about meshing a simple polygon.
Most of the discussion is  background that motivates 
the result, but is not needed to apply the result. Theorem 
\ref{simple quad mesh}, at the end of the section, 
 will contain the precise statements that we will use. 
  
The papers \cite{Bishop-time} and \cite{Bishop-optimal} make use 
of a decomposition of a polygon into pieces called 
the thick and thin parts. The name is motivated
by the thick/thin decomposition of a  hyperbolic Riemann surface;
 the $\epsilon$-thin parts of a surface   are the 
points that lie on a   closed curve  of hyperbolic  length  $< \epsilon$ that 
is not deformable to a point (i.e., a homotopically non-trivial 
loop). A famous theorem of Margulis says 
that if $S$ is a  hyperbolic Riemann surface, 
then there is an $\epsilon_0 >0$ (independent of the surface), 
so that for $\epsilon < \epsilon_0$, the $\epsilon$-thin 
parts are all disjoint. Moreover, each  thin part
is either a punctured disk
(curves around the puncture can be deformed to loops of 
arbitrarily short length) or a cylinder (there is a positive 
lower bound for the length of any non-trivial loop in  such 
a thin part).
See Figure  \ref{surfaces}.
 These two cases are called parabolic and hyperbolic 
thin parts respectively.
The names come from the classification of  the
group elements  that represent these loops when the surface 
is represented as the hyperbolic unit disk 
quotiented by a discrete group of M{\"o}bius transformations.

\begin{figure}[htbp]
\centerline{
 \includegraphics[height=2.0in]{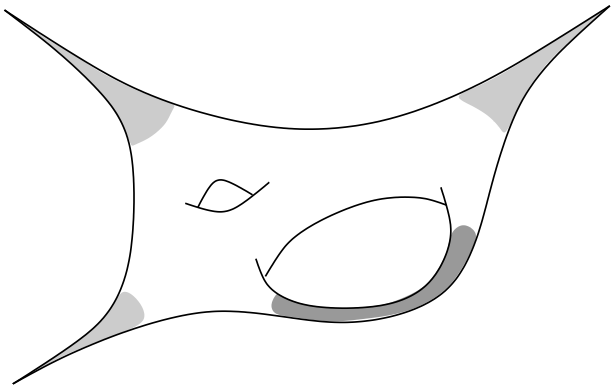}
}
\caption { \label{surfaces}
A surface with one hyperbolic thin part
  (darker) and three parabolic thin parts (lighter).
   }
\end{figure}

The precise definition of the thin part of a polygon is 
given in Section 12 of \cite{Bishop-time}; the 
construction of the thick and thin parts of 
a polygon is somewhat involved 
and makes use of an approximation of the conformal map from the 
interior of the polygon to the unit disk. For readers  who
know the terminology, each thin part corresponds to a
 pair of edges of $P$ whose extremal distance inside  
$P$ is less than $\epsilon$.
Briefly, two sides $e,f$  of a polygon have extremal 
distance $< \epsilon$ if and only if  there is a non-negative 
function $\rho$ on the interior of $P$ so that:
\newline   (1)
$\int_\gamma \rho ds \geq 1$ for any curve 
in the interior connecting $e$ to $f$, and
\newline   (2) 
$\iint \rho^2 dxdy < \epsilon$.
\newline
Extremal distance is a conformal invariant that plays
a fundamental role in modern complex function 
theory and dynamics; for its basic properties 
 see \cite{Ahlfors-QCbook}, \cite{Garnett-Marshall}. 
We will not need to understand  it to use 
the results from \cite{Bishop-time} and
\cite{Bishop-optimal}, but we will use the following 
fact: if $e$ and $f$ are sides of a polygon $P$ that have 
small extremal distance in $P$, then they can be 
joined inside $P$ by a curve $\gamma$ whose length 
is much less than $\min(\diam(e), \diam(f))$. The 
converse need not be true, i.e., a relatively short 
joining curve does not imply the extremal distance 
is small. However, two sides of $P$ that touch at a
common vertex must have extremal distance zero; 
thus there will be a  thin part associated to every 
pair of adjacent edges (parabolic thin parts),
 and  possibly other thin parts   associated to non-adjacent 
pairs of edges (hyperbolic thin parts).

A {\defit cross-cut} of a domain is a Jordan arc in the 
domain that has both its endpoints on the boundary 
of the domain. If the domain is bounded by a Jordan curve, a cross-cut 
$\gamma$ cuts the domain into two pieces that have 
$\gamma$ as the intersection of their boundaries.
The thick/thin decomposition of a polygon is a 
special  partition of its interior by cross-cuts.
A parabolic thin part is  bounded by two sub-segments 
of $P$ meeting at vertex $v$  of $P$ and by a single 
circular cross-cut (a subset of the circle centered
at $v$). The diameter of the piece is small compared 
to the distance from $v$ to the nearest distinct 
vertex or non-adjacent edge of $P$ (as measured in the internal 
path metric).  See Figure \ref{ThickThinParts2}. 

A hyperbolic thin part  is bounded by two segments on 
non-adjacent sides of $P$ and two disjoint cross-cuts that 
connect the endpoints of these segments. The segments 
have the same length, and the internal path distance 
between them is much smaller than this length. 
The cross-cuts are either both  concentric circular arcs, 
or each is the union of two circular arcs. 
See Figure \ref{ThickThinParts2}. 
The thick parts are everything  else. Thin parts can 
only share a cross-cut boundary with a thick part (never an 
another thin part).

\begin{figure}[htbp]
   \centerline{ 
	\includegraphics[height=4.0in]{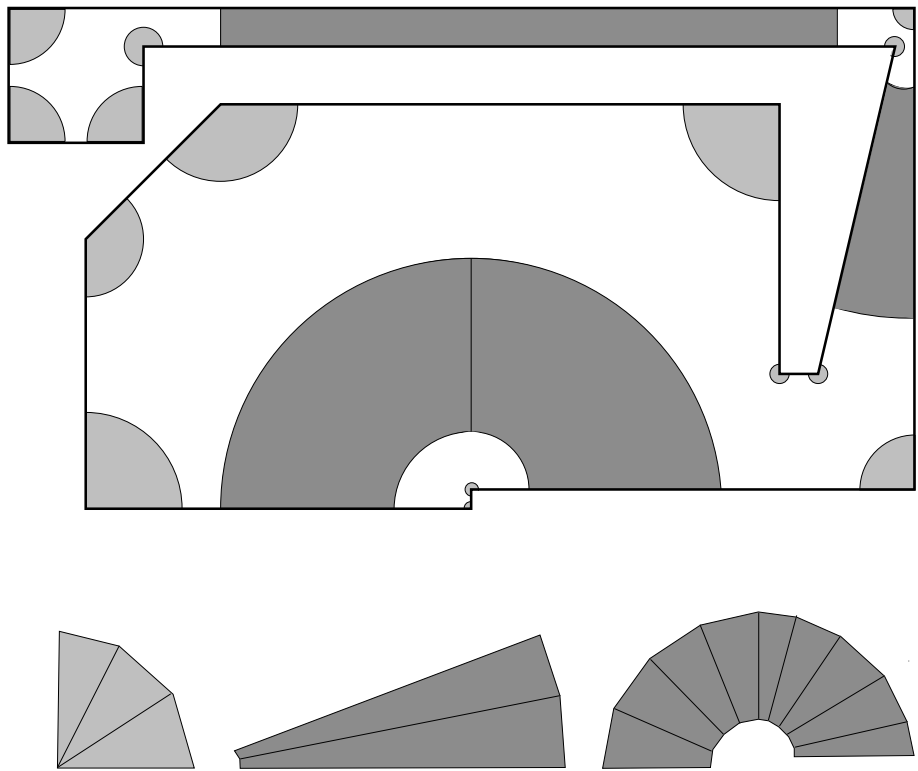}
	}
\caption{ \label{ThickThinParts2}
Parabolic thin parts are shown in light gray, hyperbolic thin 
parts in dark gray and the thick part is the remaining white 
part of the polygon.  
It is also possible to inscribe polygonal cross-cuts
into these arcs as illustrated on the bottom.
}
\end{figure}

  We can  replace the circular cross-cuts by 
inscribing polygonal arcs into these
cross-cuts  to get polygonal thick and thin parts. This 
gives thin parts that are simple polygons and it 
is   this
type of thin part that we want to use in the proof of Theorem \ref{Quad Mesh}.
In fact,  \cite{Bishop-time} shows that by 
inscribing $O(1/\theta)$   vertices  on each 
cross-cut, we can ensure that each polygonal hyperbolic 
thin part is meshed by a chain of $\theta$-nice  isosceles  trapezoids.
However, the eccentricity of these
trapezoids may be arbitrarily large, depending on $P$.

In \cite{Bishop-optimal} the thick/thin decomposition is 
applied to quad-meshing as follows. The thin parts come in 
a small number 
of very simple shapes and all of these are meshed ``by hand''.
The thick parts have complex geometry, 
but the conformal map of a polygon  to the unit disk is very 
well behaved on the thick parts (i.e., there are 
good estimates of the distortion) and it can be used 
to transfer nice meshes on the disk (constructed using 
hyperbolic geometry)  back to 
a  nice mesh of the thick part
that agrees with the meshes of the thin parts on the common 
cross-cut boundaries.

The following result includes various facts that 
are needed in the proof of Theorem \ref{Quad Mesh}.
The result is not stated like this in \cite{Bishop-optimal}, but 
all the claims made here are established as part of the proofs
in \cite{Bishop-time} and \cite{Bishop-optimal}. 

\begin{thm} \label{simple quad mesh}
Suppose $P$ is a simple polygon with $n$ vertices.
Then $P$ has a decomposition into $O(n)$ 
 simple polygonal pieces 
of three types: thick pieces, parabolic thin pieces and 
hyperbolic thin pieces. The total  number of edges used 
in $O(n)$.  Suppose $\theta >0$.
Then $P$ has a quadrilateral 
mesh  with $O(n/\theta^2)$ elements
that sub-divides the thick/thin decomposition. 
The mesh quadrilaterals in the thick parts will be 
     called {\defit thick quadrilaterals}. Similarly for  the
     {\defit parabolic quadrilaterals} and {\defit hyperbolic
     quadrilaterals}.  
The quad-mesh restricted to each type of piece satisfies 
the following properties:

\noindent 
{\bf Thick pieces:} 
     The interior angles 
     of the thick pieces are all either in $[90^\circ, 90^\circ + \theta]$
     or in $[180^\circ, 180^\circ + \theta]$.
      If a thick piece has $k$ cross-cuts boundary arcs,
      then  the quadrilateral mesh in this piece 
     has $O(k/\theta^2)$ elements and $O(k/\theta)$ vertices 
     occur on the boundary of the thick piece. The eccentricity 
     of every quadrilateral mesh element in a thick piece is bounded by a 
     constant $M$, independent of anything else.
     All the mesh elements are nice (angles between $60^\circ$    
      and $120^\circ$). 
     Each quadrilateral mesh element in a  thick piece has  at most one 
     of its sides on $P$.

\noindent
     {\bf Parabolic thin pieces:}
       A parabolic thin piece contains exactly one vertex
      $v$ of $P$. It is bounded 
       by two sub-segments of  $P$ and one polygonal cross-cut with
      $O(1/\theta)$ elements. The mesh in this 
      piece  has $O(\theta^{-2})$ elements and $O(\theta^{-1})$ vertices 
      on the boundary of the thin piece. Every angle of the mesh,
     except possibly one, has angle measure  
       in $[60^\circ, 120^\circ]$. The 
      possible exception occurs if the vertex $v$ 
      has interior angle $ \alpha < 60^\circ$ in $P$. In that case, 
      exactly one  quadrilateral $Q$  in the mesh of the thin part
           has $v$ as a vertex
      and the angle of $Q$ at $v$ is $\alpha$. This quadrilateral is 
      a kite with eccentricity $O(1/\alpha)$; all other quadrilaterals 
      in the mesh of the parabolic thin piece have eccentricity
       that is  uniformly 
      bounded. 
     Each parabolic  quadrilateral  $Q$ that does not contain a 
     vertex $v$ of $P$  has at most one 
     of its sides on $P$.  If $P$ has 
     angle $\leq 120^\circ$ at  $v$ then the quadrilateral $Q$  having 
      $v$ as a vertex has  two sides on $P$. If the angle of $P$ 
     at $v$ is greater than $120^\circ$ then the angle is subdivided
     and there are two or more quadrilateral mesh elements that have
    $v$ for a vertex,  and these all have at most one side  on $P$.
 
\noindent 
    {\bf Hyperbolic thin pieces:}
     Each hyperbolic thin part is bounded by two
       sub-segments $S_1, S_2$ of non-adjacent 
      sides of $P$. The  endpoints of $S_1, S_2$ 
        are connected by two polygonal cross-cuts 
      with $O(1/\theta)$ edges each.  The mesh of the thin part is 
      a chain of $O(1/\theta)$  isosceles trapezoids.
      The ends of the chain are the segments $S_1, S_2$.
      Each $P$-side of each trapezoid is also the side of a 
      thick  quadrilateral  mesh element.
\end{thm}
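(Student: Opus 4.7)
The plan is to assemble the statement from results already present in \cite{Bishop-time} and \cite{Bishop-optimal} rather than building a new meshing algorithm. The task is one of extraction and bookkeeping: each listed property is either explicitly proved or is a visible invariant of the construction in those two papers, and the work is to walk through the construction piece by piece and read off what one needs.

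First I would invoke the thick/thin decomposition of Section~12 of \cite{Bishop-time}. It is built from an approximate conformal map of the interior of $P$ to $\disk$, so that each thin part corresponds to a pair of edges of $P$ with small extremal distance (adjacent pairs give parabolic thin parts, non-adjacent pairs give hyperbolic ones). Since each thin part is generated by a pair of edges, the total number of pieces and the total number of edges used in the decomposition are both $O(n)$. The circular cross-cuts bounding the thin parts can be replaced by inscribed polygonal arcs with $O(1/\theta)$ vertices each; this is precisely the step in \cite{Bishop-time} that makes the thin parts into simple polygons and makes each hyperbolic thin part meshable by a chain of $O(1/\theta)$ $\theta$-nice isosceles trapezoids whose ends are the two distinguished segments $S_1, S_2$.

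Next I would read off the meshes on each piece type from \cite{Bishop-optimal}. The two thin shapes are explicit and are meshed by hand. For a hyperbolic thin part the chain of isosceles trapezoids already \emph{is} the mesh, and each of its $P$-sides is by construction a side of an adjacent thick quadrilateral. For a parabolic thin part at a vertex $v$ with interior angle $\alpha$, the standard layered sector construction of Section~\ref{sector sec} produces a nice mesh when $\alpha \geq 60^\circ$; when $\alpha < 60^\circ$ the unique way to leave the angle $\alpha$ undivided while respecting $[60^\circ,120^\circ]$ elsewhere is to place a single kite at $v$ whose two $P$-sides have length comparable to the cross-cut distance and whose non-$P$ sides are of order $\alpha$ times that length, giving eccentricity $O(1/\alpha)$; all other parabolic quadrilaterals fill a region of bounded eccentricity with a standard grid and so are uniformly nice. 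For the thick pieces I would use the conformal-map estimates of \cite{Bishop-optimal}: on a thick part the map to the disk has bounded distortion, so a standard hyperbolic quadrilateral mesh of the disk pulls back to a nice, uniformly bounded-eccentricity mesh of the thick part, arranged to match the polygonal cross-cut vertices already installed by the thin-part meshes. The counts $O(k/\theta^2)$ elements and $O(k/\theta)$ boundary vertices per thick piece with $k$ cross-cut arcs are exactly what that matching construction delivers, and the interior angle constraints $[90^\circ, 90^\circ + \theta]$ and $[180^\circ, 180^\circ + \theta]$ on the thick piece itself come straight from how the polygonal cross-cuts are inscribed into the original circular ones.

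The main obstacle is not finding constructions but verifying the local ``side on $P$'' and per-vertex conditions, because these are visible properties of the mesh in \cite{Bishop-optimal} rather than statements highlighted there. The delicate bookkeeping is the case split at a parabolic vertex $v$: one must show that (a) when $\alpha \leq 60^\circ$ the single kite at $v$ has exactly two sides on $P$ and eccentricity $O(1/\alpha)$; (b) when $60^\circ < \alpha \leq 120^\circ$ a single nice quadrilateral at $v$ still has two sides on $P$; and (c) when $\alpha > 120^\circ$ the standard sector mesh subdivides the angle via propagation from the cross-cut, so that every quadrilateral at $v$ has only one side on $P$. Each of these follows directly from the explicit layered sector mesh, and an analogous inspection of the conformal-map construction gives the ``at most one side on $P$'' condition for the thick quadrilaterals, completing the verification.
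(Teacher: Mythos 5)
Your proposal matches the paper's own treatment: the paper does not prove Theorem~\ref{simple quad mesh} from scratch but explicitly states that all of its claims are established as part of the proofs in \cite{Bishop-time} and \cite{Bishop-optimal}, and your plan --- extract the thick/thin decomposition and polygonal cross-cuts from \cite{Bishop-time}, then read off the trapezoid-chain mesh of hyperbolic thin parts, the sector mesh of parabolic thin parts, and the conformal-pullback mesh of thick parts from \cite{Bishop-optimal} --- is exactly that. The per-vertex case split at a parabolic vertex and the ``at most one side on $P$'' conditions are, as you note, visible invariants of those constructions rather than stated theorems there, and your reconstruction of them is consistent with what the theorem asserts.
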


\section{The proof of Theorem \ref{Quad Mesh} } \label{Proof}

We can now give the proof of Theorem \ref{Quad Mesh}.
We will actually prove the following result that implies 
Theorem \ref{Quad Mesh} if we  set $\theta = 15^\circ$ (the 
proof simplifies slightly in this case and the reader may wish to 
first read it with this in mind):

\begin{thm} \label{near 90}
Suppose $\Gamma $ is a PSLG  with $n$ vertices
and suppose $ 0 < \theta  < 90^\circ$.
 Then there is a  nice conforming quadrilateral mesh of 
$\Gamma$ such that 
\begin{enumerate}
\item  every angle of every quadrilateral is 
$\leq 120^\circ$,
\item  every  new mesh  angle is $\geq 60^\circ$, i.e., 
all angles are $\geq 60^\circ$  
unless the angle occurs at a mesh vertex that 
is a vertex of $\Gamma$ 
and the edges of the corresponding  quadrilateral lie on edges 
of $\Gamma$ that make an angle $\alpha < 60^\circ$ with each other.
In this case,
the angle of the mesh element there is also $\alpha$ (small 
angles of the PSLG are not sub-divided),
\item   the mesh uses $O(n^2/ \theta^{2})$
quadrilaterals,
\item all the mesh angles actually lie  in the smaller interval 
    from $90^\circ - 2\theta$    to 
    $90^\circ + 2\theta$ with at most $O(n/\theta^2)$ exceptions.
\end{enumerate}
\end{thm}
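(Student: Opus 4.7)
The plan is to assemble the four ingredients listed in the introduction, with the sinks of Sections \ref{sink defn}--\ref{conforming sinks} serving as the glue between meshes built by different methods. First apply Lemma \ref{simple poly lemma} to replace $\Gamma$ by a connected refinement $\Gamma_1$ whose faces are simple polygons, adding $O(n)$ vertices and creating no new angle below $60^\circ$. At each vertex $v$ of $\Gamma_1$, inside each face incident to $v$, install the corresponding sector of a protecting sink from Lemma \ref{sector conform sink}. This uses $O(n)$ new vertices in total, produces a nice mesh in a neighborhood of $v$, and shields any angles below $60^\circ$ from being subdivided (the quadrilaterals touching $v$ are never modified afterwards). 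Cutting the interiors of the protecting sinks out of the faces of $\Gamma_1$ leaves a collection $\{\Omega_k\}$ of simple polygonal sub-domains all of whose interior angles lie in $[90^\circ,180^\circ]$.

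Next, apply Theorem \ref{simple quad mesh} with parameter $\theta$ to each $\Omega_k$. This yields a weak nice quad-mesh of $\bigcup_k \Omega_k$ with $O(n/\theta^2)$ elements, split into thick, parabolic thin, and hyperbolic thin pieces. The thick and parabolic quadrilaterals have uniformly bounded eccentricity: the only high-eccentricity kites allowed by Theorem \ref{simple quad mesh} occur at small-angle vertices of the underlying polygon, but such vertices are exactly the ones already enclosed inside protecting sinks and so do not appear here. The hyperbolic thin pieces carry isosceles-trapezoid dissections (with possibly large eccentricity) containing $O(n/\theta)$ trapezoids grouped into $O(n)$ chains, and each $P$-side of a trapezoid coincides with a side of an adjacent thick quadrilateral. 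Apply Theorem \ref{quad mesh lemma} to their union $W$: remove $O(n/\theta)$ bounded-eccentricity $\theta$-nice quadrilaterals and replace the rest by a $2\theta$-nice mesh with $O(n^2/\theta^2)$ elements, creating $O(n/\theta)$ new Steiner points on the $Q$-boundary of $W$ and only $O(1)$ extras on each $P$-side.

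Finally merge, in the spirit of Lemma \ref{simple merge}. Use Lemma \ref{quad sinks lemma 1} to convert every thick and parabolic quadrilateral $Q$ into a sink; since $\ecc(Q)=O(1)$ this costs $O(1)$ new boundary vertices per quadrilateral. Balance parities using Corollary \ref{get even}, then re-mesh each sink to absorb its new boundary vertices: each sink receives $O(1)$ extras from the hyperbolic-thin side of $W$ and $O(1)$ extras from propagation along edges of $\Gamma_1$, and because these extras land on opposite pairs of sides the two-sided clause of Lemma \ref{quad sinks lemma 1} gives $O(1)$ new elements per sink, preserving the $O(n^2/\theta^2)$ total. The $2\theta$-nice exceptions are the $O(n/\theta^2)$ thick and parabolic quadrilaterals (only guaranteed nice, not $2\theta$-nice, by Theorem \ref{simple quad mesh}) together with the $O(n/\theta)$ non-$2\theta$-nice elements inside protecting sinks permitted by the $K=O(1)$ clause of Lemma \ref{sector conform sink}, totalling $O(n/\theta^2)$. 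The main obstacle is the bookkeeping in this merging step: one must orchestrate the order of propagations (the hyperbolic-thin re-mesh first, then the propagation of new vertices along $\Gamma_1$-edges) so that every sink sees its two batches of extras on opposite sides and with compatible parities, which is what allows the two-sided clause of Lemma \ref{quad sinks lemma 1} to apply and yield additive rather than multiplicative cost. Once the ordering is arranged, the angle bounds and complexity estimates reduce to routine additive tallies.
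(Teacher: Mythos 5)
Your overall architecture matches the paper's: reduce to simple-polygon faces via Lemma~\ref{simple poly lemma}, install protecting sinks around vertices (Lemma~\ref{sector conform sink}), apply Theorem~\ref{simple quad mesh} to the unprotected region, treat the hyperbolic thin pieces with Theorem~\ref{quad mesh lemma}, and glue everything with sinks. The outline is right, but the complexity bookkeeping in your merging step has real gaps.

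The claim that ``each sink receives $O(1)$ extras'' is false, and two important contributions to the final count are missing. First, once the sink boundary vertices are placed, those vertices generally do not coincide with vertices of the mesh on $W'$ and must be propagated \emph{through} the mesh of $W'$; each propagation path can cross $O(n/\theta)$ quadrilaterals and there are $O(n/\theta)$ such paths, producing $O(n^2/\theta^2)$ additional elements. This step is absent from your proposal, and it is one of the two dominant terms. Second, a protecting sink at a vertex of degree $d$ has $O(d)$ boundary edges, each adjacent to a thick or parabolic sink with $O(1)$ new vertices on that shared edge; so it receives $O(d)$ extras, not $O(1)$, and by Lemma~\ref{sector conform sink} re-meshing costs $O(d^2)$, summing to $O(n^2)$ — the other dominant term. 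Similarly, thick sinks adjacent to the $Q$-boundary of $W'$ and the removed-quadrilateral sinks can receive up to $O(n/\theta)$ extras in aggregate; the correct argument is not that they receive $O(1)$ but that the extras land on a single side (or a single opposite pair), so that the $O(M_1 M_2)$ clause of Lemma~\ref{quad sinks lemma 1} degenerates to an additive $O(K)$ per sink. Your ``opposite pairs'' reasoning is correct for the removed quadrilaterals (whose extras lie on the $Q$-sides) but for thick sinks the right observation is that at most one side receives more than $O(1)$ extras. The final $O(n^2/\theta^2)$ bound survives these corrections, but as written the argument does not support it.
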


Taking $\theta$ small, we see that 
in the worst case there are quadratically many 
mesh elements, but they are all close to rectangles with 
only a linear number of exceptions.
Taking  $\theta = 15^\circ$ we see that 
 a PSLG with $n$ vertices has a conforming mesh with $O(n^2)$ 
elements that  are all nice with at most $O(n)$ exceptions 
(one for each  interior 
angle of measure less than $60^\circ$  of some 
face of  $\Gamma$.

\begin{proof}[Proof of Theorem \ref{near 90}] 
We give the construction of the mesh as a series of steps, 
and then we will count the number of mesh elements created.

{\bf Step 1:}
   Using   Lemma \ref{simple poly lemma} we reduce to 
   the case when $\Gamma$ is already meshed by $O(n)$
   simple polygons and has $O(n)$ vertices and edges. 

{\bf Step 2:}
   For each vertex $v$ of $\Gamma$, we choose a disk $D$ centered at $v$
    with   diameter much smaller than   the distance to the nearest
    distinct vertex or edge of $\Gamma$. Inside  $D \cap \PH(\Gamma)$ 
  we place a $5^\circ$-conforming sink using Lemma \ref{sector conform sink}. 
     The interiors of these conforming sinks are 
    called the protected regions. 
    The only quadrilaterals in our final mesh that will  
     have angles less than $60^\circ$
    are those at the center  these  protecting sinks, and only 
    when there are edges of $\Gamma$ that touch $v$ making 
     angle $< 60^\circ$. These quadrilaterals  touching $v$ 
     will not be altered by any later step of the construction.

     Note that when we define the boundary of the 
     protecting sink for $D$, we can assume that 
     each connected component of $\partial D 
     \setminus \Gamma$ is divided into equal sized 
     intervals  by the boundary vertices of the sink
     (when a component has angle measure less than 
      $5^\circ$  it contains a single boundary edge of 
      the sink; otherwise the boundary of the sink 
      inscribed on this arc consists of  subarcs of
       equal  angle measure $\leq 5^\circ$).  
     This fact will be used later
     to show that protecting sinks do not touch 
     hyperbolic thin parts in the unprotected region.

{\bf Step 3:}
     Note that the unprotected region is now meshed by simple 
     polygons where all the angles are either close to $90^\circ$ or 
     close to $180^\circ$. The former occurs where boundaries 
     of the protecting sinks meet the edges of $\Gamma$, and 
     the latter occur at other boundary vertices of the 
     protecting sinks.
     More precisely, 
    the non-protected region of $\PH(\Gamma)$ is now meshed 
    by simple polygons $\{ P_j\}$  with all angles either in 
     $[90^\circ, 95^\circ ]$ or $[180^\circ , 185^\circ ]$. 
      By Theorem  \ref{simple quad mesh} each 
     such face in the unprotected region can
      be nicely meshed using  $O(k/\theta^2)$ 
      quadrilaterals having the properties listed in 
     Theorem \ref{simple quad mesh}, 
     where $k$ is the number of sides of that face.   Summing 
    over all the simple polygons   shows that 
     $O(n/\theta^2) $ quadrilaterals are used overall.

{\bf Step 4:} 
     By Theorem \ref{simple quad mesh},  the thick quadrilaterals 
     have uniformly bounded eccentricity independent of the polygon. 
     Each such quadrilateral can share at most one side with 
     $P$. We place a sink in each thick quadrilateral.

{\bf Step 5:} 
     Similarly, the parabolic quadrilaterals have bounded eccentricity 
     if the angles of the polygon are bounded away from zero, which 
     occurs in our case (as noted in Step 3, we only apply the 
      thick/thin decomposition  to a polygon where all the angles
      are  $\geq 90^\circ$).
      Thus all the parabolic quadrilaterals 
      have uniformly bounded eccentricities. 
     We place a sink in every    parabolic quadrilateral.

{\bf Step 6:}
     By Theorem \ref{simple quad mesh},  
     the hyperbolic quadrilaterals  are all  $\theta$-nice
     isosceles trapezoids and each hyperbolic thin part consists of a chain 
     of such trapezoids. The $P$-sides of these trapezoids
      agree with sides of thick 
     quadrilaterals and their $Q$-sides either agree with 
     the $Q$-sides of other  elements of the chain or they lie on $P$.
     Thus the union $W$ of all the closed hyperbolic quadrilaterals has 
     a dissection by  $\theta$-nice, isosceles trapezoids that 
     has $ N=O(n/\theta)$ elements but only $M=O(n)$ chains. 
     Theorem  \ref{quad mesh lemma} says that we can remove  
    $O(M)=O(n)$ nice quadrilaterals with uniformly bounded 
     eccentricity from $W$ to obtain a region $W'$, such 
     that  $W'$ can be $2\theta$-nicely meshed using $O(MN/\theta) =
     O(n^2/  \theta^{2})$ quadrilaterals.
      The mesh creates at most $O(n/\theta)$ new 
     vertices on the $Q$-sides  $\partial W'$ and at most $O(1)$ 
     vertices on each $P$-side of $W'$. 

{\bf Step 7:}
    Place sinks in each of the nice quadrilaterals  in $W \setminus W'$ 
    (i.e., the quadrilaterals removed from $W$). Any extra vertices
    coming from propagation through the mesh of  $W'$ always land 
    on a single pair of opposite sides of such a quadrilaterals
    (the $Q$-sides), 
    and so the removed quadrilaterals 
    can be remeshed using a number of elements that 
    is comparable to the number of extra vertices. 

{\bf Step 8:} 
    On the boundary of $W'$ there may be  points that correspond 
    to vertices of adjacent sinks, but are not vertices of the 
    mesh of $W'$. Propagate such points  through the mesh of $W'$
    until they connect with another boundary point of $W'$.
    There are at most $O(n/\theta)$ such boundary points 
    ($O(1)$ for each sink) and each propagation path 
    has length $O(N) = O(n/\theta)$ by 
     Theorem  \ref{quad mesh lemma}. Thus at most 
    $O(n^2/\theta^2)$ new mesh elements are created.

{\bf Step 9:}
    Next we use the re-meshing property of sinks. To make
    sure every sink has an even number of points on its 
    boundary, we first
    bisect the mesh on $W'$ (i.e., split every quadrilateral 
    into four by bisecting all four edges) and then 
    we split every boundary edge of 
    every sink into two equal sub-edges. This includes both the 
    sinks placed in thick quadrilaterals (Step 4) 
      and thin parabolic quadrilaterals (Step 5), 
  as well as the protecting 
    sinks placed around each original vertex (Step 2). 
    This doubles the number 
    of vertices on the boundary of each sink, and hence this number 
    is even for each sink.  Re-mesh the interiors 
    of all the sinks to conform with these vertices. 
    The meshes
    on the sinks and on $W'$ now match each other along all 
    common edges so we have  the desired conforming mesh of  $\Gamma$.

    This completes the construction of the  mesh.

    Finally, we have to count the total number of mesh elements 
    that we have created.
     We are interested both in the total 
    number of elements and in the number of elements that are 
    not $2\theta$-nice.
     From our remarks above, the number of mesh elements inside 
    $W'$ is $O(n^2/\theta^2)$ and all of these are $2\theta$-nice.
    (The original mesh of $W'$ is $2\theta$-nice by 
    Theorem \ref{quad mesh lemma}; any further quadrilaterals 
    in $W'$ are formed by standard propagation paths, and 
     these preserve niceness by Lemma \ref{split quad}.)

    Next we consider elements created when remeshing the sinks. 
    Each sink has some share of the $O(n/\theta)$ vertices that 
    are created on the boundary of $W'$. It also has $O(1)$ 
    other extra vertices that come from adjacent sinks. 
     There are four  types of sinks that have to be considered:
    
      {\bf Protecting sinks:} We claimed earlier that at most 
      $O(d)$ points would be added to boundary of a sink 
      protecting a vertex of degree $d$. To prove this claim 
      we will show that the boundary of  a protecting sink 
      can only touch thick or parabolic quadrilaterals, never 
      a hyperbolic quadrilateral, and hence the protecting 
      sinks never touch $W'$.  More precisely, 

      \begin{lemma}
     An edge $e$  of the boundary of a 
     sink protecting a vertex $v$ of $\Gamma$ 
      never touches a hyperbolic thin part in the 
     unprotected region. 
    \end{lemma}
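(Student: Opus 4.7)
I would argue by contradiction. Suppose the edge $e$, which by construction is a chord of $\partial D$ on the boundary of the sink $S_v$ at $v$, touches a hyperbolic thin part $H$ in some face $F$ of the unprotected region. Then $e$ contains one of the two parallel sides $S_1$ of $H$, and there is a second side $S_2$ on an edge of $\partial F$ non-adjacent to $e$ in $\partial F$, with $|S_1| = |S_2|$, and with the two sides connectable in $F$ by a curve of length much less than $|S_1|$. This last property is the ``short joining curve'' necessary condition for small extremal distance recalled in Section \ref{Thick Thin sec}.

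The boundary of $S_v$ consists entirely of chords of $\partial D$, and the $\Gamma$-edges through $v$ partition $\partial D$ into arcs; the chords inscribed in each arc form a single connected sub-chain that contributes to $\partial F$ for exactly one face $F$. By the convention in Step 2, all chords in a single sub-chain have the same length $\ell$, because the arc is divided into sub-arcs of equal angle measure $\leq 5^\circ$. I would then rule out all possible locations of $S_2$. \emph{If $S_2$ lies in the same sub-chain as $e$,} then it sits on a chord separated from $e$ by at least one intervening chord of length $\ell$; since consecutive chords are near-collinear on $\partial D$, the interior angle of $F$ at each chain vertex is just over $180^\circ$, and any curve in $F$ joining the two chords must travel alongside the intervening chord(s), giving length at least $\Omega(\ell) = \Omega(|e|)$, contradicting the short-curve condition. \emph{If $S_2$ lies on a segment of a $\Gamma$-edge incident to $v$}, the same argument applies: this segment is adjacent in $\partial F$ only to the unique end chord of the sub-chain it meets on $\partial D$, so either adjacency is violated or an intervening chord still separates them. \emph{If $S_2$ lies on a different sink or on a $\Gamma$-edge not incident to $v$}, then Step 2 forces $\mathrm{diam}(D)$ to be much smaller than the distance from $v$ to any other vertex or non-incident edge of $\Gamma$, so the Euclidean separation between $e$ and $S_2$ exceeds $\mathrm{diam}(D) \geq |e|$ by a large factor, ruling out any short joining curve.

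The main obstacle is the quantitative estimate in the first case, since within a single sub-chain the separation between non-adjacent chords is only comparable to $|e|$ rather than much larger. The estimate rests on the explicit structure of the sink: the interior angle of $F$ at each sink-chain vertex on $\partial D$ exceeds $180^\circ$ by at most the angular width of one sub-arc ($\leq 5^\circ$), so $F$ does not protrude far into $S_v$ at these vertices and a joining curve cannot shortcut through the sink; any path in $F$ between two non-adjacent chords of the sub-chain therefore has length at least one chord length $\ell$. This suffices because the hyperbolic thin part construction in Section \ref{Thick Thin sec} requires the joining curve to be much shorter than the common length of the two parallel sides, the extremal distance parameter having been fixed once and for all so that ``much less than'' is a concrete inequality.
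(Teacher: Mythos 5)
Your proposal is correct and follows essentially the same route as the paper: argue by contradiction, reduce a hyperbolic thin part touching $e$ to the requirement that some non-adjacent side of the face be joinable to $e$ by a curve much shorter than $|e|$, and then rule out every candidate location for that side using the equal-chord-length convention of Step 2 and the fact that the protecting disk is chosen much smaller than the distance to any other vertex or non-incident edge. Your added observation about the near-$180^\circ$ reflex angles at the chain vertices (so a joining curve cannot shortcut across the sink) is a slightly more explicit version of the paper's ``separated by a third edge of the same size'' step, but the case analysis and the underlying geometric facts are the same.
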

     
     \begin{proof} 
     Suppose $e$ is a boundary edge of a protecting sink 
      that also lies on the boundary of a face  $P$  of the 
     unprotected region,  and suppose that  $e$ contains 
     the side of a hyperbolic thin part in $P$.
     Then $e$ would have 
     to have small extremal distance in $P$  to some other side $f$
     of $P$.
      By the definition  of hyperbolic thin 
      parts, $e$ and $f$ cannot be  
     adjacent on $P$, and by properties of extremal distance, 
      the path distance between $e$ and $f$ inside 
     $P$ would have to be much smaller than the diameter 
     of either $e$ or $f$. 
      The disk $D$ containing the sink was chosen to be
      small compared to the distance  of $v$ to other vertices and 
      edges of $\Gamma$, and hence  all edges of 
      $P$ that do not touch the boundary of the sink 
      have a distance from the sink that is much larger 
      than the diameter of the sink itself.
      Hence $f$ must lie on one of the edges of 
      $\Gamma$ that define the sector containing $e$, 
     or $f$ must be another boundary edge of the 
     same sink (and inside the same sector as $e$).
     However, both these alternatives  are impossible.
     By the way that the protecting sinks are constructed in 
     Step 2, each boundary edge of the sink in the sector 
     has the same length. If $f$ is 
     a subsegment of $\Gamma$, then $e$ is not adjacent to $f$
     and hence it must be  separated from $f$ by a distance comparable 
     to its own diameter. On the other hand, if $f$ is on the
      boundary of the sink, then 
      $e$ and $f$ have the same diameter and since they
      are not adjacent,  they must be  separated by a third 
     edge of the  same size.
      Hence there is no side $f$ of  $P$ with 
     small extremal distance to $e$, and the lemma is proven. 
      \end{proof}

      Thus the only 
      extra vertices that are added to the boundaries of protecting 
     sinks are due to sink vertices  corresponding to 
     adjacent thick or parabolic 
     quadrilaterals. The number of such points is $O(1)$ per 
     quadrilateral and there are at most $O(d)$ such 
     quadrilaterals,  so by Lemma \ref{sector conform sink}
     the total number of mesh elements 
         used in a protecting 
     sink is $O(d^2)$ where $d$ is the degree of the vertex 
     being protected. 
      Also by Lemma \ref{sector conform sink},
 at most $O( d/\theta)$  of the mesh elements are 
      not $\theta$-nice. Since summing $d$ over all protecting 
      sinks gives at most $O(n)$, 
      see that at most $O(n^2)$ mesh elements are used inside 
     protecting sinks, and that 
     at most $O( n/\theta)$ of these are not $2\theta$-nice.

     {\bf Thick quadrilateral sinks:} there are $O(n/\theta^2)$
       such quadrilaterals and hence $O(n/\theta^2)$ elements
       in the corresponding sinks (because of uniformly bounded
      eccentricity). Each thick quadrilateral  shares at most one 
      edge with $P$. Hence each thick quadrilateral can 
      share  at most one edge with a $Q$-side of $W'$ and may share
      a second side with $P$-side of $W'$ (this would be a 
      $P$-side of a hyperbolic quadrilateral in $P$).
       Thus there is at most one side of the thick quadrilateral 
      that gets more than $O(1)$ extra boundary vertices.
       Thus if $K$ extra 
      vertices are added from $W'$, the re-meshing can be 
      done with $O(K)$ extra elements.  Summing over all 
      thick quadrilaterals gives $O(n/\theta)$ extra mesh 
       elements, so the total is still $O(n/\theta^2)$  
      
        {\bf Thin quadrilateral sinks:}
   A thin quadrilateral  $Q$ can share either zero, one or 
   two sides with the simple polygon $P$ that contains 
   it. In the first two cases,  $Q$ can  share at most  one side  with 
       $W'$ and the other sides each have at most $O(1)$ 
       extra vertices, due to  sinks in adjacent quadrilaterals. 
       Moreover, there are $O(n/\theta^2)$ such quadrilaterals and 
       there are $O(n/\theta)$ vertices on the boundary of $W'$.
       Thus summing  the number of elements in   the remeshings of 
         all such quadrilaterals gives at 
       most $O(n/\theta^2)$  mesh elements in total. All could be 
       non-$\theta$-nice. 
 
   If $Q$ shares two sides with $P$, we claim
     that  at most one of these sides is  on $\partial W'$.
       To see this, note  if $Q$ has  two sides on $P$, then
      $Q$ contains a vertex $v$ of $P$.
     If the angle of $P$ at $v$  close to $180^\circ$,
     then the method  from \cite{Bishop-optimal}
     subdivides the angle at $v$ and  any quadrilaterals
     containing $v$  have at most  one edge with $P$.
     Thus the angle of $P$ at $v$ must be close to $90^\circ$.
     However,  this implies  $Q$  has one side on the boundary
     of a protecting sink, and hence  at most one side
      can be on $ \partial W'$.

        Thus the analysis in this case (2 sides on $P$) 
       is the same as in 
        the two previous cases (0 or 1 side on $P$). Actually, it is 
      a little better since there 
       are only $O(n)$ such parabolic quadrilaterals instead of 
       $O(n/\theta^2)$ as in the previous two cases. 

     {\bf Quadrilateral sinks removed from $W$:} these all have 
       uniformly bounded eccentricity and extra vertices coming 
       from propagation through $W'$ are only added to a single 
       pair of opposite sides (the $Q$-sides). Thus if $K$ vertices are added 
       to such a sink, the re-meshing can be accomplished with 
       $O(K)$ elements.  There are
        $O(n/\theta)$ propagation paths 
       that can add extra vertices, so summing over all such quadrilaterals, 
       gives $O(n/\theta)$ mesh elements inside the removed
       quadrilaterals.

      In conclusion, the total number of mesh elements is $O(n^2/\theta^2)$ 
      (the dominant terms  are $O(n/\theta^2)$ from the mesh of $W'$
      and $O(n^2)$ from the protecting sinks). 
      The number of mesh elements that are not 
        $2\theta$-nice is $O(n/\theta^2)$ (this includes  all the mesh 
      elements in the last four categories of sinks above,  plus
      the number of non-$\theta$-nice mesh elements that can occur in 
       protecting sinks).  This completes the proof of Theorem
        \ref{near 90}, and hence of Theorem \ref{Quad Mesh} as well. 
\end{proof}

Taking a fixed $\theta$, say $\theta = 15^\circ$, 
the mesh of $\Gamma$ has only $O(n)$ vertices on the 
boundary of the polynomial hull of $\Gamma$ and all but 
$O(n)$ of the interior vertices have all angles close 
to $90^\circ$, so 

\begin{cor}
The mesh  in Theorem \ref{Quad Mesh} can be taken so that 
all but $O(n)$  vertices have  degree four.  
\end{cor}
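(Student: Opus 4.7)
The plan is to invoke Theorem \ref{near 90} with a fixed $\theta$ small enough that only degree four is compatible with $2\theta$-niceness at an interior vertex. Concretely, I would take $\theta = 8^\circ$, so every $2\theta$-nice corner lies in $[74^\circ, 106^\circ]$. If $d$ such corners meet at a mesh vertex $v$ that is interior to $\PH(\Gamma)$, they tile a full turn, so $d(90^\circ - 2\theta) \le 360^\circ \le d(90^\circ + 2\theta)$, giving $d \in [360/106,\ 360/74] \approx [3.40,\ 4.87]$. The only integer in this range is $4$.

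Next I would partition the mesh vertices into three groups: (i) those lying on $\partial\PH(\Gamma)$; (ii) vertices interior to $\PH(\Gamma)$ that are incident to at least one quadrilateral which is \emph{not} $2\theta$-nice; and (iii) all remaining interior vertices. Theorem \ref{near 90}(4) gives at most $O(n/\theta^2) = O(n)$ exceptional quadrilaterals, each with four corners, so group (ii) has $O(n)$ members. By the opening paragraph, every vertex in group (iii) has degree exactly four, so no further work is required for them.

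The remaining step is to show that group (i) has size $O(n)$. I would walk through the construction used in the proof of Theorem \ref{near 90}: Step 1 adds $O(n)$ features to $\partial\PH(\Gamma)$; Step 2's protecting sinks meet $\partial\PH(\Gamma)$ in only $O(1)$ points per original boundary vertex; Step 3 applies Theorem \ref{simple quad mesh} to each polygonal piece, which creates $O(k/\theta)$ vertices on a piece with $k$ sides, so summing over pieces yields $O(n/\theta)$ — that is $O(n)$ once $\theta$ is fixed — and the bisection in Step 9 at most doubles this count. Combining the three group bounds shows that at most $O(n)$ mesh vertices have degree different from four, proving the corollary.

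I expect the only mildly delicate point to be the bookkeeping for group (i), since several construction steps can each add boundary vertices and one must be careful not to miss a source; but with $\theta$ fixed the per-step linear bounds absorb cleanly. The heart of the argument is the trivial pigeon-hole on the angle sum that rules out all degrees other than four once $\theta < 9^\circ$.
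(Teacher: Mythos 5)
Your proof is correct and follows essentially the same route the paper sketches in the paragraph preceding the corollary: fix $\theta$, use Theorem~\ref{near 90}(4) to bound the number of corners away from $90^\circ$ by $O(n)$, bound boundary vertices by $O(n)$, and force degree four at all remaining interior vertices by an angle-sum pigeonhole. Your choice $\theta = 8^\circ$ is actually a small improvement in rigor over the paper's suggested $\theta = 15^\circ$ (which only gives the range $[60^\circ,120^\circ]$ and thus allows degrees $3$ through $6$); the bound $\theta < 9^\circ$ that you identify is indeed what is needed to force $d=4$.
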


This implies that 
 the mesh can be partitioned into  $O(n)$
(combinatorial) rectangular grids  using motorcycle graphs as 
described by Eppstein, Goodrich, Kim and Tamstorf in 
\cite{Eppstein-MotorCycle}.
Hence the mesh is completely described by a graph of size $O(n)$ with 
integer labels on the edges;  the vertices of the graph 
are the corners of the rectangular pieces, edges of the graph correspond 
to a shared boundary arc of two  rectangular pieces, and the labels give the 
number of mesh elements on this shared boundary arc (e.g., see 
Figure 5 of \cite{Eppstein-MotorCycle}).
Thus, although the mesh  in Theorem \ref{Quad Mesh} may have 
$O(n^2)$ elements, it only has complexity 
$O(n)$ in a certain sense.  
Can this observation be exploited in numerical methods that make 
use of quadrilateral meshes?


\bibliography{nonobtuse}

\def\cprime{$'$} \def\cprime{$'$}
\begin{thebibliography}{1}

\bibitem{Ahlfors-QCbook}
L.~V. Ahlfors.
\newblock {\em Lectures on quasiconformal mappings}.
\newblock The Wadsworth \& Brooks/Cole Mathematics Series. Wadsworth \&
  Brooks/Cole Advanced Books \& Software, Monterey, CA, 1987.
\newblock With the assistance of Clifford J. Earle, Jr., Reprint of the 1966
  original.

\bibitem{BE-2000}
M.~Bern and D.~Eppstein.
\newblock Quadrilateral meshing by circle packing.
\newblock {\em Internat. J. Comput. Geom. Appl.}, 10(4):347--360, 2000.
\newblock Selected papers from the Sixth International Meshing Roundtable, Part
  II (Park City, UT, 1997).

\bibitem{BMR95}
M.~Bern, S.~Mitchell, and J.~Ruppert.
\newblock Linear-size nonobtuse triangulation of polygons.
\newblock {\em Discrete Comput. Geom.}, 14(4):411--428, 1995.
\newblock ACM Symposium on Computational Geometry (Stony Brook, NY, 1994).

\bibitem{Bishop-nonobtuse}
C.J. Bishop.
\newblock Nonobtuse triangulations of {PSLG}s.
\newblock to appear, Discrete and Computational Geometry.

\bibitem{Bishop-time}
C.J. Bishop.
\newblock Conformal mapping in linear time.
\newblock {\em Discrete Comput. Geom.}, 44(2):330--428, 2010.

\bibitem{Bishop-optimal}
C.J. Bishop.
\newblock Optimal angle bounds for quadrilateral meshes.
\newblock {\em Discrete Comput. Geom.}, 44(2):308--329, 2010.

\bibitem{Eppstein-faster}
D.~Eppstein.
\newblock Faster circle packing with application to nonobtuse triangulation.
\newblock {\em Internat. J. Comput. Geom. Appl.}, 7(5):485--491, 1997.

\bibitem{Eppstein-MotorCycle}
D.~Eppstein, M.T. Goodrich, E.~Kim, and R.~Tamstorf.
\newblock Motorcycle graphs: Canonical quad mesh partitioning.
\newblock {\em Eurographics Symposium on Geometric Processing 2008}, 27(5),
  2009.

\bibitem{Garnett-Marshall}
J.B. Garnett and D.E. Marshall.
\newblock {\em Harmonic measure}, volume~2 of {\em New Mathematical
  Monographs}.
\newblock Cambridge University Press, Cambridge, 2005.

\end{thebibliography}
\bibliographystyle{plain}
\end{document}